\documentclass[acmsmall]{acmart}\settopmatter{printfolios=true,printccs=false,printacmref=false}

\acmDOI{} 
\acmPrice{}

\setcopyright{none}

\bibliographystyle{ACM-Reference-Format}
\citestyle{acmauthoryear}   


\usepackage{booktabs}   
\usepackage{subcaption} 
\usepackage{enumerate,paralist}
\usepackage{mathtools}
\usepackage{comment}
\usepackage{parskip}
\usepackage{multirow}
\usepackage{graphicx}
\usepackage{color}
\usepackage[T1]{fontenc}
\usepackage{textcomp}
\usepackage{siunitx}
\sloppy

\usepackage{tikz,pgffor}
\usetikzlibrary{arrows}
\usetikzlibrary{shapes}
\usetikzlibrary{calc}
\usetikzlibrary{automata}
\usetikzlibrary{positioning}

\tikzstyle{proglabel}=[shape=circle,draw,inner sep=0pt,minimum size=5mm]
\tikzstyle{tran}=[draw,->,>=stealth, rounded corners]

\usepackage{amsmath}
\usepackage{amssymb}

\usepackage{stmaryrd}
\usepackage{url}                  

\usepackage{threeparttable}

\usepackage{tabularx}
\usepackage{algorithm}
\usepackage{algorithmic}
\usepackage{dsfont}
\usepackage{listings}
\lstdefinelanguage{prog}
{
morekeywords={prob, if, then, else, fi, while, do, od, true, false, and, or, skip},
sensitive = false
}

\newcommand{\Rset}{\mathbb{R}}
\newcommand{\Nset}{\mathbb{N}}
\newcommand{\Zset}{\mathbb{Z}}

\newcommand{\pvars}{V_\mathrm{p}}
\newcommand{\rvars}{V_{\mathrm{r}}}

\newcommand{\loc}{\ell}
\newcommand{\tertime}{T}

\newcommand{\probm}{\mathbb{P}}
\newcommand{\expv}{\mathbb{E}}

\newcommand{\updf}{F}

\newcommand{\pv}{\mathbf{b}}
\newcommand{\sv}{\mathbf{r}}
\newcommand{\Sat}[1]{{\llbracket}{#1}{\rrbracket}}




\newcommand{\ie}{i.e.}

\newcommand{\rv}{r.v.}


\newcommand{\dist}{\mathfrak{d}}



\newcommand{\outp}{\mathsf{out}}



\newcommand{\convex}[1]{\mathcal{F}^1(\Rset^{#1})}
\newcommand{\cconvex}[2]{\mathcal{F}^{1,1}_{#2}(\Rset^{#1})}



\theoremstyle{acmdefinition}

\newtheorem{remark}{Remark}
%
\usepackage{times}

\begin{document}

\title[Proving Expected Sensitivity with Randomized Variable-Dependent Termination Time]{Proving Expected Sensitivity of Probabilistic Programs with Randomized Variable-Dependent Termination Time}         


\author{Peixin Wang}
\affiliation{
  \department{Basics Lab}              
  \institution{Shanghai Jiao Tong University}            
  \city{Shanghai}
  \country{China}                    
}
\email{wangpeixin@sjtu.edu.cn}          

\author{Hongfei Fu}
\authornote{Corresponding Author: Hongfei Fu, fuhf@cs.sjtu.edu.cn}          
\affiliation{
  \department{John Hopcroft Center for Computer Science}             
  \institution{Shanghai Jiao Tong University}           
  \city{Shanghai}
  \country{China}                   
}
\email{fuhf@cs.sjtu.edu.cn}         

\author{Krishnendu Chatterjee}
\affiliation{
  \institution{IST Austria (Institute of Science and Technology Austria)}           
  \city{Klosterneuburg}
  \country{Austria}                   
}
\email{krishnendu.chatterjee@ist.ac.at}         

\author{Yuxin Deng}
\affiliation{
  \department{Shanghai Key Laboratory of Trustworthy Computing}             
  \institution{East China Normal University}           
  \city{Shanghai}
}
\affiliation{
  \department{Center for Quantum Computing}             
  \institution{Pengcheng Laboratory}           
  \city{Shenzhen}
  \country{China}                   
}
\email{yxdeng@sei.ecnu.edu.cn}         

\author{Ming Xu}
\affiliation{
  \department{Shanghai Key Laboratory of Trustworthy Computing}             
  \institution{East China Normal University}           
  \city{Shanghai}
  \country{China}                   
}
\email{mxu@cs.ecnu.edu.cn}         

\begin{abstract}
The notion of program sensitivity (aka Lipschitz continuity) specifies that
changes in the program input result in proportional changes to the program
output.
For probabilistic programs the notion is naturally extended to expected
sensitivity.
A previous approach develops a relational program logic framework
for proving expected sensitivity of probabilistic while loops, where the number
of iterations is fixed and bounded.
In this work, we consider probabilistic while loops where the number of iterations is not fixed, but
randomized and depends on the initial input values.
We present a sound approach for proving expected sensitivity of such programs.
Our sound approach is martingale-based and can be automated through existing martingale-synthesis algorithms.
Furthermore, our approach is compositional for sequential composition of while loops under a mild side condition.
We demonstrate the effectiveness of our approach on several classical
examples from Gambler's Ruin, stochastic hybrid systems and stochastic gradient descent.
We also present experimental results showing that our automated approach can handle various probabilistic programs in the literature.
\end{abstract}

\begin{CCSXML}
<ccs2012>
<concept>
<concept_id>10011007.10011006.10011008</concept_id>
<concept_desc>Software and its engineering~General programming languages</concept_desc>
<concept_significance>500</concept_significance>
</concept>
<concept>
<concept_id>10003456.10003457.10003521.10003525</concept_id>
<concept_desc>Social and professional topics~History of programming languages</concept_desc>
<concept_significance>300</concept_significance>
</concept>
</ccs2012>
\end{CCSXML}

\ccsdesc[500]{Software and its engineering~General programming languages}
\ccsdesc[300]{Social and professional topics~History of programming languages}




\maketitle

\section{Introduction}\label{sec:introduction}

\noindent{\em Continuity properties of systems.}
Continuity property for systems requires that the change in
the output is bounded by a monotone function of the change in
the input.
Analysis of continuity properties are of great interest in
program and reactive system analysis, such as:
(a)~robustness of numerical computations;
(b)~analysis of sensitivity of numerical queries~\cite{DBLP:journals/fttcs/DworkR14} in databases;
(c)~analysis of stability of learning algorithms~\cite{DBLP:journals/jmlr/BousquetE02};
and
(d)~robustness analysis of programs~\cite{DBLP:conf/popl/ChaudhuriGL10}.

\smallskip\noindent{\em Probabilistic systems.}
Continuity analysis is relevant for probabilistic systems in a similar way,
where the notion of continuity is extended with expectation to
average over the probabilistic behaviours of the system.
For example, statistical notions of differential privacy~\cite{DMNS06};
robustness analysis of Markov chains, Markov decision processes, and
stochastic games~\cite{Aldous83,DBLP:conf/icalp/Fu12,DBLP:journals/tcs/BreugelW06,DBLP:journals/tcs/DesharnaisGJP04,DBLP:conf/fossacs/Chatterjee12};
stability analysis of randomized learning algorithms~\cite{DBLP:journals/jmlr/BousquetE02,DBLP:conf/icml/HardtRS16};
all fall under the umbrella of continuity analysis of probabilistic systems.

\smallskip\noindent{\em Program sensitivity.}
A particular interest among continuity is {\em program sensitivity
} which specifies that the change in the output of a program is
proportional to the change in the input.
Formally, there is a constant $L$ (the \emph{Lipschitz constant}) such that
if the input changes by an amount $x$, then the change in the ouput is at most $L\cdot x$.
In this work we consider the expected sensitivity of probabilistic programs given
as (sequential composition of) probabilistic while loops.

\smallskip\noindent{\em Previous results.}
The expected sensitivity analysis of probabilistic programs was first considered in \cite{DBLP:journals/jmlr/BousquetE02,DBLP:conf/icml/HardtRS16} for machine-learning algorithms such as stochastic gradient descent, through manual proofs.
Then ~\cite{DBLP:journals/pacmpl/BartheEGHS18} proposed  an elegant method based on a relational program logic framework.
The heart of the analysis technique is coupling-based methods, and the approach is
shown to work effectively on several examples from machine learning to statistical physics.
A recent result~\cite{DBLP:conf/atva/HuangWM18} implemented a computer-algebra based tool that calculates tight sensitivity bounds for probabilistic programs.
Although these previous approaches address the expected sensitivity analysis well,
they work only on examples of probabilistic while loops whose
number of iterations is fixed and bounded (i.e., the number of iterations
is fixed to a given number $T$).
In reality, many examples of probabilistic while loops do not have fixed number of iterations,
rather the number of iterations is randomized and depends on the input values.
Hence, such examples cannot be handled by the previous approaches.
In this work, we focus on expected sensitivity analysis of such programs.

\noindent{\em Our contributions.}
Our main contributions are as follows:
\begin{compactenum}
\item We present a sound approach for proving expected sensitivity of probabilistic while loops whose number of iterations is randomized and depends on the initial input values.
\item We show that our approach is compositional w.r.t sequential composition.

\item In contrast to the previous coupling and computer-algebra  based approaches, our approach relies on ranking supermartingales (RSMs) (see~\cite{SriramCAV,DBLP:journals/toplas/ChatterjeeFNH18}), a central notion in proving termination properties of probabilistic programs.


\item Since RSM based approaches can be automated through constraint solving (see e.g.~\cite{SriramCAV,DBLP:journals/toplas/ChatterjeeFNH18}),
the same results in conjunction with our sound approach present an automated approach
for sensitivity analysis of probabilistic programs.
\item
We demonstrate the effectiveness of our approach through (i) a case study on stochastic gradient descent
and (ii) experimental results on various probabilistic programs from the literature, including Gambler's Ruin, stochastic hybrid systems, random walks, etc.
\end{compactenum}

\noindent{\em Technical contribution.} In terms of technical contribution there are key differences between our result and the previous results.
The previous approaches are either coupling-based proof rules, or through computer-algebra tools, and all of them are restricted  to loops with a fixed number of loop iterations.
In contrast, our approach is based on RSMs and can handle loops whose number of iterations is randomized and depends on the input.
Moreover, we prove the non-trivial fact that our approach is compositional under sequential composition.
Furthermore, as RSM-synthesis algorithms have been well-established in the literature, our sound approach directly lead to automated algorithms for proving expected sensitivity of probabilistic programs.

\noindent{\em Limitation.} Our approach mainly focuses on (sequential composition of) probabilistic while loops where there is no conditional branch.
Although the exclusion of conditional branches makes our contribution seemingly restrictive, we argue that typically inclusion of conditional branches will break sensitivity properties in general.
Consider a loop of the form
\[
\textbf{while } \Phi \textbf{ do if } b \textbf{ then } P \textbf{ else } Q
\textbf{ od}
\]
where the programs $P,Q$ perform completely different executions.
Then two close-by program inputs $x_1,x_2\models\Phi$ such that (**) $x_1\models b$ but $x_2\not\models b$ will lead to values that differ significantly after just one loop iteration.
Thus, irrespective of analysis methods this type of programs has bad sensitivity property.
Previous results also reflect the difficulty on handling conditional branches.
For example, in previous approaches such as \cite{DBLP:journals/pacmpl/BartheEGHS18,DBLP:journals/corr/abs-1901-06540}, it must be manually ensured that the conditions of all conditional branches are either (i) both satisfied or (ii) both not satisfied by two close-by program valuations (i.e., the situation (**) above not allowed) (see \cite[Figure~3]{DBLP:journals/pacmpl/BartheEGHS18} and \cite[Figure~1]{DBLP:journals/corr/abs-1901-06540}).
Moreover, in all the experimental examples from~\cite{DBLP:conf/atva/HuangWM18}, conditional-branches within for-loops are either restricted to a finite set of values or directly transformed into probabilistic branches.
For a possible extension to conditional branches see Remark~\ref{rmk:condbranch}.

\section{Probabilistic Programs}\label{sect:preliminaries}

We first present the syntax and semantics of our probabilistic programming language, then define the syntactical subclass of \emph{simple} while loops to which  our approach applies. Throughout the paper, we denote by $\Nset$, $\Zset$, and $\Rset$ the sets of all natural numbers, integers, and real numbers, respectively.

\smallskip\noindent{\bf The Syntax.}
Our probabilistic programming language is imperative and consists of statements.
We present a succinct description below (see Appendix~\ref{app:syntax} for the detailed syntax).

\begin{compactitem}
\item \emph{Variables.} Expressions $\langle\mathit{pvar}\rangle$ (resp. $\langle\mathit{rvar}\rangle$) range over program (resp. sampling) variables, respectively. Program variables are normal variables that control the flow of the program, while each sampling variable is a special variable whose value is sampled from a fixed predefined probability distribution each time the variable is accessed in the program.
\item \emph{Constants.} Expressions $\langle\mathit{const}\rangle$ range over decimals.
\item \emph{Arithmetic Expressions.} Expressions $\langle\mathit{expr}\rangle$ (resp. $\langle\mathit{pexpr}\rangle$) range over arithmetic expressions over both program and sampling variables (resp. program variables only).
    For example, if $x,y$ are program variables and $r$ is a sampling variable, then $x+3\cdot y$ is an instance of $\langle\mathit{pexpr}\rangle$ and $x-y+2\cdot r$ is an instance of $\langle\mathit{expr}\rangle$. In this paper, we consider a general setting of arithmetic expressions and  do not fix a detailed syntax for $\langle\mathit{expr}\rangle$ and $\langle\mathit{pexpr}\rangle$.
\item \emph{Boolean Expressions.} Expressions $\langle\mathit{bexpr}\rangle$ are boolean expressions
over program variables, for which atomic propositions are comparisons between expressions from $\langle\mathit{pexpr}\rangle$ and general expressions are built from
atomic propositions and propositional operators.
\item \emph{Statements $\langle \mathit{stmt}\rangle$.} Assignment statements are indicated by `$:=$';
`\textbf{skip}' is the statement that does nothing;
Standard conditional branches are indicated by the keyword `\textbf{if}' with its \textbf{then}- and \textbf{else}-branches, and a boolean expression that serves as the condition for the conditional branch.
Probabilistic choices are modelled as probabilistic branches with the key word ``\textbf{if prob}$(p)$'' that lead to the \textbf{then}-branch with probability $p$ and to the \textbf{else}-branch with probability $1-p$.
While-loops are indicated by the keyword `\textbf{while}' with a boolean expression as the loop guard.
Finally, sequential compositions are indicated by semicolons.
\end{compactitem}
Note that probabilistic branches can be implemented as a sampling of Bernoulli distribution followed by a conditional branch, but for algorithmic purpose we consider probabilistic branches directly.
In this work, we consider probabilistic programs without non-determinism. 



\smallskip\noindent{\bf The Semantics.}
To define the semantics, we first recall several standard notions from probability theory as follows (see e.g. standard textbooks~\cite{probabilitycambridge,Billinsleyprobability} for details).

\noindent{\em Probability Spaces.} A \emph{probability space} is a triple $(\Omega,\mathcal{F},\probm)$, where $\Omega$ is a nonempty set (so-called \emph{sample space}), $\mathcal{F}$ is a \emph{sigma-algebra} over $\Omega$ (i.e., a collection of subsets of $\Omega$ that contains the empty set $\emptyset$ and is closed under complementation and countable union), and $\probm$ is a \emph{probability measure} on $\mathcal{F}$, i.e., a function $\probm\colon \mathcal{F}\rightarrow[0,1]$ such that (i) $\probm(\Omega)=1$ and
(ii) for all set-sequences $A_1,A_2,\dots \in \mathcal{F}$ that are pairwise-disjoint
(i.e., $A_i \cap A_j = \emptyset$ whenever $i\ne j$)
it holds that $\sum_{i=1}^{\infty}\probm(A_i)=\probm\left(\bigcup_{i=1}^{\infty} A_i\right)$.
Elements in $\mathcal{F}$ are called \emph{events}.
An event $A\in\mathcal{F}$ is said to hold \emph{almost surely} (a.s.) if $\probm(A)=1$.

\noindent{\em Random Variables.} A \emph{random variable} (\rv) $X$ on a probability space $(\Omega,\mathcal{F},\probm)$
is an $\mathcal{F}$-measurable function $X\colon \Omega \rightarrow \Rset \cup \{-\infty,+\infty\}$, i.e.,
a function satisfying the condition that for all $d\in \Rset \cup \{-\infty,+\infty\}$, the set $\{\omega\in \Omega\mid X(\omega)<d\}$ belongs to $\mathcal{F}$.
By convention, we abbreviate $+\infty$ as $\infty$.

\noindent{\em Expectation.} The \emph{expected value} of a random variable $X$ on a probability space $(\Omega,\mathcal{F},\probm)$, denoted by $\expv(X)$, is defined as the Lebesgue integral of $X$ w.r.t $\probm$, i.e.,
$\expv(X)\coloneqq\int X\,\mathrm{d}\probm$;
the precise definition of Lebesgue integral is somewhat technical and is
omitted  here (cf.~\cite[Chapter 5]{probabilitycambridge} for a formal definition).
In the case that $\mbox{\sl ran}~X=\{d_0,d_1,\dots,d_k,\dots\}$ is countable with distinct $d_k$'s, we have that
$\expv(X)=\sum_{k=0}^\infty d_k\cdot \probm(X=d_k)$.

To present the semantics, we also need the notion of \emph{valuations}.

\noindent {\em Valuations.} Let $V$ be a finite set of variables with an implicit linear order over its elements.
A \emph{valuation}  on $V$ is a vector $\pv$ in $\Rset^{|V|}$ such that for each $1\le i\le |V|$,
the $i$-th coordinate of $\pv$, denoted by $\pv[i]$, is the value for the $i$-th variable in the implicit linear order on $V$. For the sake of convenience, we write $\pv[y]$ for the value of a variable $y$ in a valuation $\pv$.

\noindent{\em Program and Sampling Valuations.} Let $\pvars$ (resp. $\rvars$) be the set of program (resp. sampling) variables  appearing in a probabilistic program, respectively.
A \emph{program valuation} (or \emph{program state}) is a valuation on $\pvars$. A \emph{sampling valuation} is a valuation on $\rvars$.
Given a program valuation $\pv$ and a boolean expression $\Phi$, the satisfaction relation $\models$ is defined in the standard way so that we have $\pv\models\Phi$ iff $\Phi$ holds when program variables in $\Phi$ are substituted by their corresponding values in $\pv$.

Now we give a brief description of the semantics for probabilistic programs.
We follow the standard operational semantics through Markov chains.
Given a probabilistic program (without non-determinism), its semantics is given as a general state-space Markov chain (GSSMC)~\cite[Chapter 3]{gssmc},
where
(i) the state space consists of all pairs of program counters and program valuations for which the program counter refers to the next command to be executed and the program valuation specifies the current values for the program variables, and
(ii) the kernel function that specifies the stochastic transitions between states is given by the individual commands in the program.
For any initial state $\mathfrak{c}=({\mathsf{in}}, \pv)$ where ${\mathsf{in}}$ is the program counter of the first command and $\pv$ is the input program valuation, each probabilistic program induces a unique probability space through its corresponding GSSMC, where the sample space consists of all infinite sequences of states in the GSSMC (as \emph{runs}), the sigma-algebra is generated by all \emph{cylinder} sets of runs induced by
finite Cartisian products of measurable subsets of the state space, and the probability measure is uniquely determined by the kernel function and the initial state.
The detailed semantics can be found in~\cite{DBLP:journals/toplas/ChatterjeeFNH18,SriramCAV,DBLP:conf/vmcai/FuC19}.

Under our semantics, we denote by $\probm_\pv$ the probability measure for a probabilistic program with the input program valuation $\pv$ (note that the program counter ${\mathsf{in}}$ is determined by the program),
and by $\expv_\pv(-)$ the expectation under the probability measure $\probm_\pv$.

\noindent{\bf Simple While Loops.} In this paper, we focus on (sequential composition of) simple probabilistic while loops and investigate sound approaches for proving expected sensitivity over such programs. A \emph{simple} (probabilistic) while loop is of the form
\begin{equation}\label{eq:swl}
\textbf{while}~\Phi~\textbf{do}~P~\textbf{od}
\end{equation}
where $\Phi$ is the \emph{loop guard}
and the loop body $P$ is a program without nested while loops. As simple while loops are syntactically restricted, we present succinct notions for such programs.

{\em Update Functions.} Given a simple while loop in the form~(\ref{eq:swl}) with the disjoint sets $\pvars$ and $\rvars$ of program and sampling variables, we abstract away detailed executions of the loop body $P$ by an \emph{update function} $\updf:\mathbf{L}\times \Rset^{|\pvars|}\times \Rset^{|\rvars|} \rightarrow \Rset^{|\pvars|}$
that describes the input-output relationship for one iteration of the loop body
as follows.
First, we let $\mathcal{L}$ be the set of all program counters that refer to a probabilistic branch (i.e., $\textbf{if prob}(p)~\dots$) in the loop body of $P$.
Then we define $\mathbf{L}$ to be the set of all functions from $\mathcal{L}$ into the choices of branches (i.e., \textbf{then}- or \textbf{else}-branch).
Informally, such a function specifies for each probabilistic branch in $P$ which branch
is chosen in the current loop iteration.
Finally, the update function $\updf$ simply gives the program valuation $\updf(\loc, \pv,\sv)$ after the current loop iteration given (i) an element $\loc\in\mathbf{L}$ that specifies the probabilistic choices for probabilistic branches, (ii) a program valuation $\pv$ that specifies the values for program variables before the current loop iteration and (iii) a sampling valuation $\sv$ that gives all the sampled values for the sampling variables in the current loop iteration.
In this way, we abstract away the detailed execution within the loop body $P$ and represent it simply by an update function.
Note that as the loop body $P$ does not involve nested while loops, one can compute its update function symbolically through a recursive algorithm on the structure of $P$.

\noindent{\em Runs.} We also simplify the notion of runs over simple while loops. A \emph{run} for a loop in the form~(\ref{eq:swl}) is an infinite sequence $\{\pv_n\}_{n\ge 0}$ of program valuations such that each $\pv_n$ is the program valuation right before the $(n+1)$-th loop iteration. Note that if $\pv_n\models\Phi$, then $\pv_{n+1}=\updf(\loc_n, \pv_n,\sv_n)$ where $\loc_n$ (resp. $\sv_n$) specifies the probabilistic resolution to all the probabilistic branches (resp. the sampled values for the sampling variables) at the $(n+1)$-th loop iteration, respectively; otherwise, $\pv_{n+1}=\pv_n$.

\noindent{\em Notations.} To ease the use of notations, we always use $\pv$ for a program valuation, $\sv$ for a sampling valuation and $\loc$ for an element in $\mathbf{L}$, with possible super-/sub-scripts.
Given a simple while loop $Q$ in the form (\ref{eq:swl}), we always use  $\pvars$ for its set of program variables, $\rvars$ for
sampling variables, $\updf$ for the update function, $\Phi$ for the loop guard, and $P$ for the loop body.
Moreover, we denote by $\Sat{\Phi}$ the set $\{\pv\mid \pv\models\Phi\}$ of program valuations that satisfy the loop guard $\Phi$.

To reason about expected sensitivity of simple while loops, we require that the loop body is Lipschitz continuous.
This requirement is standard and corresponds to the ``$\eta$-expansiveness'' introduced in \cite[Definition 2.3]{DBLP:conf/icml/HardtRS16}.
This continuity condition needs the standard notion of \emph{metrics} that measures the distance between two program valuations, as follows.

\noindent{\emph{Metrics}.} 
A \emph{metric} is a function $\dist:\Rset^{|\pvars|}\times \Rset^{|\pvars|}\rightarrow [0,\infty)$ that satisfies (i) $\dist(\pv,\pv')=0$ iff $\pv=\pv'$, (ii) $\dist(\pv,\pv')=\dist(\pv',\pv)$ (\emph{symmetry}) and (iii) $\dist(\pv,\pv')\le \dist(\pv,\pv'')+\dist(\pv'',\pv')$ (\emph{triangle inequality}).
Informally, $\dist(\pv,\pv')$ is interpreted as the distance between the two program valuations.
For example, we can define $\dist$ either through the max norm by $\dist(\pv,\pv'):={\|}{\pv-\pv'}{\|}_\infty$ where the max norm ${\|}{\centerdot}{\|}_\infty$ is given as
${\|}{\pv''}{\|}_\infty:=\max_{z\in \pvars}|\pv''[z]|$, or through the Euclidean norm by $\dist(\pv,\pv'):={\|}{\pv-\pv'}{\|}_2$ where
${\|}{\centerdot}{\|}_2$ is given as ${\|}{\pv''}{\|}_2 :=\sqrt{(\pv'')^\mathrm{T}\pv''}$.
In this paper, we consider metrics that are comparable with the max norm, \ie, there exist real constants $D_1,D_2>0$ such that
\begin{equation}\label{metric}
D_1\cdot {\|}{\pv-\pv'}{\|}_\infty\le \dist(\pv,\pv')\le D_2\cdot {\|}{\pv-\pv'}{\|}_\infty\enskip.
\end{equation}
Note that the comparability is naturally satisfied for metrics derived from norms of finite dimension.


Below we describe the continuity of the loop body under a metric $\dist$.

\begin{definition}[Lipschitz Continuity $L$ of the Loop Body]\label{def:LipconLB1}
We say that the loop body of a simple while loop in the form (\ref{eq:swl}) is \emph{Lipschitz continuous} if there exists a real constant $L>0$ such that
\begin{compactitem}
\item[(B1)] $\forall\loc\,\forall \mathbf{r}\,\forall \pv,\pv'\colon\left[\pv,\pv'\models \Phi\Rightarrow \dist (\updf(\loc,\pv,\sv),\updf(\loc,\pv',\sv)) \leq L\cdot \dist(\pv,\pv')\right]$\enskip.
\end{compactitem}
If we can choose $L=1$ in (B1), then the loop is \emph{non-expansive}; otherwise it is \emph{expansive} (\ie, the minimum $L$ is greater than $1$).
\end{definition}

\lstset{language=prog}
\lstset{tabsize=3}
\newsavebox{\progb}
\begin{lrbox}{\progb}
\begin{lstlisting}[mathescape]

while $x\le 1000$ do
  $x:=x+r$
od

\end{lstlisting}
\end{lrbox}

\lstset{language=prog}
\lstset{tabsize=3}
\newsavebox{\progrunningexampled}
\begin{lrbox}{\progrunningexampled}
\begin{lstlisting}[mathescape]
while $\Phi$ do
  $i:= \mathsf{unif}[1,\ldots,n]$;
  $\mathbf{w}:= \mathbf{w}-\gamma\cdot\nabla G_i(\mathbf{w})$
od
\end{lstlisting}
\end{lrbox}

\lstset{language=prog}
\lstset{tabsize=3}
\newsavebox{\progloopguard}
\begin{lrbox}{\progloopguard}
\begin{lstlisting}[mathescape]

while $y\le x\wedge y\ge 0$ do
  $y:=y-1$
od

\end{lstlisting}
\end{lrbox}


\begin{figure}
\begin{minipage}{0.25\textwidth}
\centering
\usebox{\progb}
\caption{Running Example}
\label{fig:runningexample1}
\end{minipage}
\quad
\begin{minipage}{0.29\textwidth}
\centering
\usebox{\progrunningexampled}
\caption{An SGD Algorithm}
\label{fig:example4}
\end{minipage}
\begin{minipage}{0.4\textwidth}
\centering
\usebox{\progloopguard}
\caption{A Counterexample for Sensitivity}
\label{fig:loopguardcounterexample}
\end{minipage}
\end{figure}

\begin{remark}[Simple While Loops]
In general, any imperative probabilistic program can be transformed equivalently into a simple while loop by adding a variable for the program counter and then simulating the original program
through transitions between program counters and valuations.
However, the class of simple while loops that can be handled by our approach is restricted to those with Lipschitz-continuous loop body. Thus generally, our approach cannot handle conditional branches that usually breaks the continuity property.
\end{remark}


\begin{example}[The Running Example]\label{ex:runningexample}
Consider the simple while loop in Figure~\ref{fig:runningexample1}.
In the program, $x$ is a program variable and $r$ is a sampling variable.
In every loop iteration, the value of $x$ is increased by a value sampled w.r.t the probability distribution of $r$ until it is greater than $1000$.
There is no probabilistic branch so $\mathbf{L}$ is a singleton set that only contains the empty function.
The update function $\updf$ for the loop body is then given by
$\updf(\loc,\pv,\sv)[x]=\pv[x]+\sv[r]$ for program valuation $\pv$ and sampling valuation $\sv$, where
$\loc$ is the only element in $\mathbf{L}$.
By definition, the loop is non-expansive.
\end{example}

\vspace{-1em}
\section{Expected Sensitivity Analysis of Probabilistic Programs}\label{sect:sensitivitypro}

In this paper, we focus on averaged sensitivity which is one of the most fundamental sensitivity notions in expected sensitivity analysis of probabilistic programs.
Informally, averaged sensitivity compares the distance between the expected outcomes
from two close-by input program valuations.
The notion of averaged sensitivity has an important applicational value in that it can
be used to model algorithmic stability in many machine-learning algorithms (see e.g.~\cite{DBLP:journals/jmlr/BousquetE02}).


Below we illustrate the notion of averaged sensitivity formally. To ensure well-definedness, we only consider probabilistic programs that terminate with probability one (i.e., with \emph{almost-sure termination}~\cite{SriramCAV}) for all input program valuations.
Furthermore, as our approach will rely on ranking supermartingales, we actually require that the probabilistic programs we consider terminate with finite expected termination time~\cite{DBLP:journals/toplas/ChatterjeeFNH18}.
Below we fix a probabilistic program $Q$ 
and a metric $\dist:\Rset^{|\pvars|}\times \Rset^{|\pvars|}\rightarrow [0,\infty)$. 
\begin{definition}[Averaged Sensitivity~\cite{DBLP:journals/jmlr/BousquetE02,DBLP:journals/pacmpl/BartheEGHS18}]\label{def:EAS}
We say that the program $Q$ is \emph{averaged affine-sensitive} over a subset $U\subseteq \Rset^{|\pvars|}$ of input program valuations if there exist real constants $A,B\ge 0$ and $\theta\in (0,\infty]$ such that for all program variables $z$ and $\pv,\pv'\in U$,
\begin{equation}\label{eq:expaffsen}
\mbox{if }\dist(\pv,\pv')\le\theta\mbox{ then }|\expv_{\pv}(Z)-\expv_{\pv'}(Z')|\leq A\cdot{\dist(\pv,\pv')}+B
\end{equation}
where $Z,Z'$ are random variables representing the values of $z$ after the execution of the program $Q$ under the input program valuations $\pv,\pv'$, respectively.
Furthermore, if we can choose $B=0$ in (\ref{eq:expaffsen}), then the program $Q$ is said to be \emph{averaged linear-sensitive} in the program variable $z$.
\end{definition}

In the definition, the constants $A,B$ are sensitivity coefficients, while $\theta$ is the threshold below which the sensitivity is applicable; if $\theta=\infty$ then the sensitivity is applicable regardless of the distance between $\pv,\pv'$.
Informally, a program $Q$ is averaged affine-sensitive
if the difference in the expected value of any program variable $z$ after the termination of $Q$ is bounded by an affine function in the difference of the input program valuations.
Likewise, the program is averaged linear-sensitive if the difference can be bounded by a linear function.
In this way, we consider the expected sensitivity of the return values where each program variable represents an individual return value.
Note that another subtle issue arising from the well-definedness is that the random variables $Z,Z'$ in (\ref{eq:expaffsen})  may not be integrable.
In the following, we will always guarantee that the random variables are integrable.

As we only consider averaged sensitivity, in the rest of the paper we will refer to \emph{averaged affine-/linear-sensitivity} simply  as \emph{expected affine-/linear-sensitivity}.
It is worth noting that in ~\cite{DBLP:journals/pacmpl/BartheEGHS18}, a coupling-based definition for expected sensitivity is proposed.
Compared with their definition, our definition treats expected
sensitivity directly and do not consider couplings.
\section{Motivating Examples}\label{sect:motiv}

In the following, we show several motivating examples for expected sensitivity analysis of probabilistic programs.
We consider in particular probabilistic programs with a randomized number of loop iterations that also depends on the input program valuation.
As existing results~\cite{DBLP:journals/pacmpl/BartheEGHS18,DBLP:conf/atva/HuangWM18,DBLP:conf/icml/HardtRS16} only consider probabilistic loops with a fixed number of loop iterations,
none of the examples in this section can be handled by these approaches.

%
	
\lstset{language=prog}
\lstset{tabsize=3}
\newsavebox{\progexamplea}
\begin{lrbox}{\progexamplea}
\begin{lstlisting}[mathescape]
while $x\geq 1$ do
  if prob($\frac{6}{65}$) then
     $x:=x+1$;$w:=w+2$
  else if prob($\frac{4}{59}$) then
       $x:=x+2$;$w:=w+3$
  else if prob($\frac{3}{55}$) then
       $x:=x+3$;$w:=w+4$
  else if prob($\frac{2}{52}$) then
       $x:=x+5$;$w:=w+5$
  else if prob($\frac{1}{50}$) then
       $x:=x+11$;$w:=w+6$
  else $x:=x-1$
  fi fi fi fi fi od
\end{lstlisting}
\end{lrbox}
\lstset{language=prog}
\lstset{tabsize=3}
\newsavebox{\progexampleaa}
\begin{lrbox}{\progexampleaa}
\begin{lstlisting}[mathescape]
while $x\geq 1$ do
  if prob($\frac{6}{65}$) then
     $x:=x+r_1$;$w:=w+2$
  else if prob($\frac{4}{59}$) then
       $x:=x+r_2$;$w:=w+3$
  else if prob($\frac{3}{55}$) then
       $x:=x+r_3$;$w:=w+4$
  else if prob($\frac{2}{52}$) then
       $x:=x+r_4$;$w:=w+5$
  else if prob($\frac{1}{50}$) then
       $x:=x+r_5$;$w:=w+6$
  else $x:=x-r_6$
  fi fi fi fi fi od
\end{lstlisting}
\end{lrbox}
\begin{figure}[htbp]
\begin{minipage}{0.5\textwidth}
\usebox{\progexamplea}
\end{minipage}\hfill
\begin{minipage}{0.5\textwidth}
\usebox{\progexampleaa}
\end{minipage}
\caption{A Mini-roulette example (left) and its continuous variant (right)}
\label{fig:example1}
\end{figure}

\begin{example}[Mini-roulette]
A particular gambler's-ruin game is called \emph{mini-roulette}, which is a popular casino game based on a 13-slot wheel. A player starts the game with $x$ amount of chips.
He needs one chip to make a bet and he bets as long as he has chips. If he loses a bet, the chip will not be returned, but a winning bet will not consume the chip and results in a specific amount of (monetary) reward, and possibly even more chips. The following types of bets can be placed at each round. (1) \emph{Even-money bets}: In these bets, $6$ specific slots are chosen. Then the ball is rolled and the player wins the bet if it lands in one of the $6$ slots. So the player has a winning probability of $\frac{6}{13}$. Winning them gives a reward of two unit and one extra chip. (2) \emph{2-to-1 bets}: these bets correspond to $4$ chosen slots and winning them gives a reward of $3$ and $2$ extra chips. (3) \emph{3-to-1, 5-to-1 and 11-to-1 bets}: These are defined similarly and have winning probabilities of $\frac{3}{13}$, $\frac{2}{13}$ and $\frac{1}{13}$ respectively. Suppose at each round, the player chooses each type of bets with the same probability (i.e., chooses each type with probability $\frac{1}{5}$). The probabilistic program for this example is shown in Figure \ref{fig:example1}(left), where the program variable $x$ represents the amount of chips and
the program variable $w$ records the accumulated rewards. (In the program we consider that $x$ can take a real value.)
We also consider a continuous variant of the mini-roulette example in Figure \ref{fig:example1}(right), where
we replace increments to the variable $x$ by uniformly-distributed sampling variables $r_i(i=1,\dots,6)$ and
one may choose $r_1\sim \mathsf{unif}(1,2)$, $r_2\sim\mathsf{unif}(2,3)$, $r_3\sim\mathsf{unif}(3,4)$, $r_4\sim\mathsf{unif}(4,5)$, $r_5\sim\mathsf{unif}(8,9)$, $r_6\sim\mathsf{unif}(1,2)$  or other uniform distributions that ensure the termination of the program.
Note that the number of loop iterations in all the programs in Figure~\ref{fig:example1} is randomized and depends on the input program valuation as the loop guard is $x\ge 1$ and the increment/decrement of $x$ is random in each loop iteration.
In both the examples, we consider the expected sensitivity in the output program variable $w$ that records the accumulated reward.
\end{example}

\begin{example}[Multi-room Heating]
We consider a case study on multi-room heating from~\cite{DBLP:journals/ejcon/AbateKLP10}, modelled as a stochastic hybrid system that involves discrete and probabilistic dynamics.
In the case study, there are $n$ rooms each equipped with a heater. The heater can heat the room and the heat can be transferred to another room if the rooms are adjacent.
We follow the setting from~\cite{DBLP:journals/ejcon/AbateKLP10} that
the average temperature of each room, say room $i$, evolves according to the following stochastic difference equation that describes the transition from the $k$-th time step to the $(k+1)$-th time step, with constant time-interval $\Delta t$:
\begin{equation}\label{eq:hybrid}
\textstyle x_i(k+1)=x_i(k)+b_i(x_a-x_i(k))+\sum_{i\neq j}a_{ij}(x_j(k)-x_i(k))+c_i+w_i(k)
\end{equation}
where (i) $x_a$ represents the ambient temperature (assumed to be constant and equal for the whole building), (ii) the quantities $b_i$, $a_{ij}$, $c_i$ are nonnegative constants representing respectively the average heat transfer rate from room $i$ to the ambient (i.e., $b_i$), to adjacent rooms $j\neq i$ (i.e., $a_{ij}$'s), supplied to room $i$ by the heater (i.e., $c_i$), and (iii) $w_i(k)$  is the noise that observes a predefined probability distribution, such as Gaussian, Poisson or uniform distribution, etc.
In this paper, we consider two simplified scenarios. The first is a single-room heating modelled as the probabilistic program in Figure~\ref{fig:hybrida}, where the program variable $x$ represents the current
room temperature, the constants $x_a,b,c$ are as in (\ref{eq:hybrid}) and $w$ is the noise (as a sampling variable); the goal in the first scenario is to raise the room temperature up to \SI{20}{\degreeCelsius}.
We assume that the starting room temperature is between \SI{0}{\degreeCelsius} and \SI{20}{\degreeCelsius}.
The second is a double-room heating modelled in Figure~\ref{fig:hybridb}, where the heater of the main room is on and the heater for the side room is off.
In the figure, the program variable $x_1$ (i.e., $x_2$) represents the temperature for the main room (resp. the side room), respectively;
the constants $x_a,b_i,c_i, a_{ij}$ are as in (\ref{eq:hybrid}); the sampling variables $w_1,w_2$ represent the noises.
In the program, we adopt the succinct form of simultaneous vector assignment for updates to $x_1,x_2$.
In both scenarios, we have a loop counter $n$ that records the number of stages until the (main) room reaches \SI{20}{\degreeCelsius}.
We consider in particular the expected sensitivity w.r.t the total number of stages as recorded in $n$, for which we assume that the value of $n$ always starts with $0$.
\end{example}

\lstset{language=prog}
\lstset{tabsize=3}
\newsavebox{\hybrida}
\begin{lrbox}{\hybrida}
\begin{lstlisting}[mathescape]

while $0\le x\le 20$ do
  $x:=x+b*(x_a-x)+c+w$;
  $n:=n+1$
od

\end{lstlisting}
\end{lrbox}

\lstset{language=prog}
\lstset{tabsize=3}
\newsavebox{\hybridb}
\begin{lrbox}{\hybridb}
\begin{lstlisting}[mathescape]
while $0\le x_1\le 20\wedge 0\le x_2\le 20$ do
  $\begin{pmatrix} x_1\\ x_2\end{pmatrix}:=\begin{pmatrix} x_1+b_1*(x_a-x_1)+a_{12}*(x_{2}-x_1)+c_1+w_1 \\  x_2+b_2*(x_a-x_2)+a_{21}*(x_{1}-x_2)+w_2 \end{pmatrix}$;
  $n:=n+1$
od
\end{lstlisting}
\end{lrbox}

\lstset{language=prog}
\lstset{tabsize=3}
\newsavebox{\hybridc}
\begin{lrbox}{\hybridc}
\begin{lstlisting}[mathescape]
while $x_1\le 30 \wedge x_2\le 30 \wedge x_3 \le 30$ do
  $x_1:=x_1+b_1(x_a-x_1)+a_{12}(x_2-x_1)$
       $+a_{13}(x_3-x_1)+c_1+\mathbf{w}_1$;
  $x_2:=x_2+b_2(x_a-x_2)+a_{21}(x_1-x_2)$
       $+a_{23}(x_3-x_2)+c_2+\mathbf{w}_2$;
  $x_3:=x_3+b_3(x_a-x_3)+a_{31}(x_1-x_3)$
       $+a_{32}(x_2-x_3)+c_3+\mathbf{w}_3$;
od
\end{lstlisting}
\end{lrbox}

\lstset{language=prog}
\lstset{tabsize=3}
\newsavebox{\hybridd}
\begin{lrbox}{\hybridd}
\begin{lstlisting}[mathescape]
while $k\le N$ do
  $q_{last,1}=q_1$;$q_{last,2}=q_2$;$q_{last,3}=q_3$;$q_{last,4}=q_4$;
  $q_1\sim T_{q,1}$;$q_2\sim T_{q,2}$;$q_3\sim T_{q,3}$;$q_4\sim T_{q,4}$;
  $x_1:=x_1+b_1(x_a-x_1)+c_1\cdot\mathds{1}_{\mathbf{Q}_1}(q_{last,1})+\mathbf{w}_1$;
  $x_2:=x_2+b_2(x_a-x_2)+c_2\cdot\mathds{1}_{\mathbf{Q}_2}(q_{last,2})+\mathbf{w}_2$;
  $x_3:=x_3+b_3(x_a-x_3)+c_3\cdot\mathds{1}_{\mathbf{Q}_3}(q_{last,3})+\mathbf{w}_3$;
  $x_4:=x_4+b_4(x_a-x_4)+c_4\cdot\mathds{1}_{\mathbf{Q}_4}(q_{last,4})+\mathbf{w}_4$;
  $k:=k+1$
od
\end{lstlisting}
\end{lrbox}

\begin{figure}[htbp]
\begin{minipage}{0.4\textwidth}
\usebox{\hybrida}
\caption{Single-Room Heating}
\label{fig:hybrida}
\end{minipage}\hfill
\begin{minipage}{0.6\textwidth}
\usebox{\hybridb}
\caption{Double-Room Heating}
\label{fig:hybridb}
\end{minipage}
\end{figure}

\begin{example}[Stochastic Gradient Descent]\label{ex:motivsgd}
The most widely-used method in machine learning is \emph{stochastic gradient descent} (SGD).
The general form of an SGD algorithm is illustrated in Figure~\ref{fig:example4} on Page~\pageref{fig:example4}.
In the figure, an SGD algorithm with $n$ training data is modelled as a simple while loop, where
(i) $\mathsf{unif}[1,\ldots,n]$ is a sampling variable whose value is sampled uniformly from $1,2,\dots, n$, (ii) $\mathbf{w}$ is a vector of program variables that represents parameters to be learned,
(iii) $i$ is a program variable that represents the sampled index of the training data,
and (iv) $\gamma$ is a positive constant that represents the \emph{step size}.
The symbol $\nabla$ represents the \emph{gradient}, while each $G_i$ ($1\le i\le n$) is the loss function for the $i$th training data. By convention, the total loss function $G$ is given as the expected sum of all $G_i$'s, i.e., $G:=\frac{1}{n}\sum_i G_i$.
At each loop iteration, a data $i$ is chosen uniformly from all $n$ training data and the parameters in $\mathbf{w}$ are adjusted by the product of the step size and the gradient of the $i$th loss function $G_i$.
The loop guard $\Phi$ can either be practical so that a fixed number of iterations is performed (as
is analyzed in existing approaches~\cite{DBLP:conf/icml/HardtRS16,DBLP:journals/pacmpl/BartheEGHS18,DBLP:conf/atva/HuangWM18}), or the local criteria that the magnitude  ${\parallel}{\nabla G}{\parallel}_2$
of the gradient of the total loss function $G$  is small enough, or the global criteria that the value of $G$ is small enough. In this paper, we consider the global criteria, \ie, the loop guard is of the form $G(\mathbf{w})\ge \zeta$ where $\zeta$ is the threshold for ``small enough''.
Note that the SGD algorithm with the global criteria has randomized loop iterations which depends on the initial parameters.
\end{example}

\section{Proving Expected Sensitivity for Non-expansive Simple Loops}\label{sect:simplecase}

In this section, we demonstrate a sound approach for proving expected sensitivity over non-expansive simple while loops, whose number of loop iterations is randomized and depends on the input program valuation.
The main difficulty is that when the number of loop iterations depends on both the randomized execution and the input program valuation, the executions from two close-by input program valuations may be \emph{non-synchronous} in the sense that they do not terminate at the same time.
The following example illustrates this situation.

\begin{example}[Non-synchronicity]\label{exx:running}
Consider our running example in Figure~\ref{fig:runningexample1}, where the sampling variable $r$ observes the Dirac distribution such that $\probm(r=1)=1$, so that the program is completely deterministic.
Choose the initial inputs $x^*_1,x^*_2$ by setting $x^*_1=1-\epsilon$ and $x^*_2=1+\epsilon$, where $\epsilon>0$ can be sufficiently small.
Since we add $1$ to the value of $x$ in each loop iteration, the output value $x^\mathrm{out}_2$ under the input $x^*_2$ equals $1000+\epsilon$, while at the same step
the execution from $x^*_2$ stops,
the execution from $x^*_1$ does not terminate as the corresponding value is $1000-\epsilon$. Note that the final output from $x^*_1$ is $1001-\epsilon$.
\end{example}

The non-synchronicity prevents us from inferring the total expected sensitivity from the local sensitivity incurred in each loop iteration. To address this issue, we explore a martingale-based approach. In previous results such as~\cite{DBLP:journals/toplas/ChatterjeeFNH18,SriramCAV}, martingales have been successfully applied to prove termination properties of probabilistic programs. Besides qualitative termination properties, martingales can also derive tight quantitative upper/lower bounds for expected termination time and resource usage~\cite{DBLP:journals/toplas/ChatterjeeFNH18,DBLP:conf/pldi/NgoC018,ijcai18,DBLP:conf/pldi/Wang0GCQS19}. In this paper, we utilize the quantitative feature of martingales to bound the difference caused by non-synchronous situations.

We first recall the notion of \emph{ranking-supermartingale maps} (RSM-maps), a core notion in the application of martingale-based approaches to probabilistic programs. As we consider simple while loops as the basic building block of probabilistic programs, we present a simplified version for simple while loops.
Below we fix a simple while loop $Q$ in the form (\ref{eq:swl}). 

\begin{definition}[RSM-maps~\cite{SriramCAV,DBLP:journals/toplas/ChatterjeeFNH18}]\label{def:RSMmaps1}
A \emph{ranking-supermartingale map} (RSM-map) is a Borel-measurable function $\eta: \Rset^{|\pvars|}\rightarrow \Rset$ 
such that there exist real numbers $\epsilon>0, K\le 0$ satisfying the following conditions:
	\begin{compactitem}
        \item[(A1)] $\forall\pv:\big(\pv\models \Phi \Rightarrow \eta(\pv)\geq 0\big)$;
        \item[(A2)] $\forall\pv\,\forall\loc\,\forall\sv:\big((\pv\models \Phi \wedge \updf(\loc,\pv,\sv)\not\models\Phi) \Rightarrow K\leq \eta(\updf(\loc,\pv,\sv))\leq 0\big)$;
		\item[(A3)] $\forall\pv:\big(\pv\models \Phi\Rightarrow \expv_{\sv,\loc}(\eta(\updf(\loc,\pv,\sv)))\leq \eta(\pv)-\epsilon\big)$;
	\end{compactitem}
where $\expv_{\sv,\loc}(\eta(\updf(\loc,\pv,\sv)))$ is the expected value of $\eta(\updf(\loc,\pv,\sv))$ such that $\pv$ is treated as a constant vector and $\sv$ (resp. $\loc$) observes the joint probability distributions of sampling variables (resp. the probabilities of the probabilistic branches), respectively.
\end{definition}

Informally, (A1) specifies that the RSM-map should be non-negative before program termination, (A2) specifies the condition at loop termination, and (A3) specifies the \emph{ranking} condition that the expected value of the RSM-map should decrease (by the positive amount $\epsilon$) after each loop iteration.

The existence of an RSM-map provides a finite upper bound on the expected termination time of a probabilistic program~\cite{DBLP:journals/toplas/ChatterjeeFNH18} (see Theorem~\ref{thm:rsmmaps1} in Appendix~\ref{app:simplecase}).
In this way, an RSM-map controls the randomized number of loop iterations.
However,
simply having an upper bound for the expected termination time is not enough,
as what we need to bound is the difference between the expected values in non-synchronous situations.
To resolve the non-synchronicity, we need some additional conditions.
The first is the \emph{bounded-update} requiring that the value-change
in one loop iteration is bounded.
The second is the \emph{RSM-continuity} specifying that the RSM-map should be Lipschitz continuous over the loop guard.
Below we fix  a metric $\dist$.

\begin{definition}[Bounded Update $d$]\label{def:bu1}
We say that a simple while loop $Q$ has \emph{bounded update}
if there exists a real constant $d\ge 0$ such that
\begin{compactitem}
\item[(B2)] $\forall\loc\,\forall \pv\,\forall \mathbf{r} :\big(\pv\models \Phi\Rightarrow \dist(\pv,\updf(\loc,\pv,\sv)) \leq d\big)$.
\end{compactitem}
\end{definition}
The bounded-update condition simply bounds the change of values
during each loop iteration.
This condition is standard as it comes from the ``$\sigma$-finiteness'' proposed in the analysis of stochastic gradient descent~\cite[Definition~2.4]{DBLP:conf/icml/HardtRS16}. 

\begin{definition}[RSM-continuity $M$]\label{def:LipconRSM1}
An RSM-map $\eta$ has \emph{RSM-continuity} if there exists a real constant $M>0$ such that
\begin{compactitem}
\item[(B3)] $\forall \pv,\pv':|\eta(\pv)-\eta(\pv')|\leq M\cdot \dist(\pv,\pv')$.
\end{compactitem}
\end{definition}
By definition, the RSM-continuity bounds the difference of the RSM-map value proportionally in the metric $\dist$ 
when the program valuations $\pv,\pv'$ are close.
This condition is used to bound the difference in non-synchronous situations and is naturally satisfied if the RSM-map is linear.
Although this condition seems a bit restrictive, latter we will show that it can be relaxed (see Remark~\ref{rmk:rsmc1} and Remark~\ref{rmk:rsmc2}).
We keep the condition in its current form for the sake of brevity.

Below we first present our result for affine sensitivity, then linear sensitivity.
We fix a metric $\dist$.

\subsection{Proving Expected Affine-Sensitivity}

The main result for proving expected affine-sensitivity of non-expansive simple loops is
as follows. 

\begin{theorem}\label{thm:affine}
A non-expansive simple while loop $Q$ in the form~(\ref{eq:swl}) is expected affine-sensitive
over its loop guard $\Sat{\Phi}$ if we have that
\begin{compactitem}
\item
$Q$ has bounded update,
and
\item
there exists an RSM-map for $Q$ that has RSM-continuity.
\end{compactitem}
In particular, we can choose $\theta=\infty$ and $A=2\cdot\frac{d\cdot M+\epsilon}{  \epsilon\cdot D_1}, B=-2\cdot\frac{d\cdot K}{\epsilon\cdot D_1}$
in (\ref{eq:expaffsen}),
where the parameters
$d,M,\epsilon,K,D_1$ are from Definition~\ref{def:RSMmaps1}, Definition~\ref{def:bu1}, Definition~\ref{def:LipconRSM1} and (\ref{metric}).
\end{theorem}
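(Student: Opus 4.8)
The plan is to run the two executions under a single shared source of randomness (a \emph{synchronous coupling}), bound the gap between the two outputs by the desynchronization $|T-T'|$ of their termination times, and then bound $\expv|T-T'|$ in terms of $\dist(\pv,\pv')$ using the RSM-map together with RSM-continuity. Concretely, I would put both runs on a common probability space driven by one sequence $(\loc_n,\sv_n)_{n\ge 0}$ of probabilistic-branch resolutions and sampled valuations, distributed as prescribed by the program; the run from $\pv$ is then $\pv_{n+1}=\updf(\loc_n,\pv_n,\sv_n)$ whenever $\pv_n\models\Phi$ (and $\pv_{n+1}=\pv_n$ otherwise), and analogously for the run $\{\pv'_n\}_{n\ge 0}$ from $\pv'$. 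By construction the marginal law of $\{\pv_n\}$ is $\probm_\pv$ and that of $\{\pv'_n\}$ is $\probm_{\pv'}$, so, writing $T$ and $T'$ for the (a.s.\ finite, by Theorem~\ref{thm:rsmmaps1}) termination times of the two runs, the quantities $\expv_\pv(Z)$ and $\expv_{\pv'}(Z')$ are the expectations of $\pv_T[z]$ and $\pv'_{T'}[z]$ on the coupled space. These are integrable: by (B2) and~(\ref{metric}) an active run changes any coordinate by at most $d/D_1$ per iteration, so $|\pv_T[z]-\pv[z]|\le dT/D_1$, which is integrable since $\expv_\pv(T)<\infty$, and likewise for $Z'$.

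The second step reduces the target quantity to $\expv|T-T'|$. Let $\tau=T\wedge T'$. For $n<\tau$ both runs satisfy $\Phi$, so non-expansiveness (B1) with $L=1$ gives $\dist(\pv_{n+1},\pv'_{n+1})\le\dist(\pv_n,\pv'_n)$, whence $\dist(\pv_\tau,\pv'_\tau)\le\dist(\pv,\pv')$. After time $\tau$ exactly one of the runs keeps moving, for exactly $|T-T'|$ more iterations, each iteration moving it by at most $d$ in $\dist$ by bounded update (B2); the triangle inequality then yields $\dist(\pv_T,\pv'_{T'})\le\dist(\pv,\pv')+d\cdot|T-T'|$, and~(\ref{metric}) turns this into $|Z-Z'|=|\pv_T[z]-\pv'_{T'}[z]|\le\frac1{D_1}\big(\dist(\pv,\pv')+d\,|T-T'|\big)$. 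Taking expectations over the coupling gives $|\expv_\pv(Z)-\expv_{\pv'}(Z')|\le\frac1{D_1}\big(\dist(\pv,\pv')+d\,\expv|T-T'|\big)$.

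The crux is to bound $\expv|T-T'|$ by a constant multiple of $\dist(\pv,\pv')$ plus a constant. My plan here: the stopped process $W_n:=\eta(\pv'_{n\wedge T'})+\epsilon\,(n\wedge T')$ is a supermartingale by the ranking condition (A3), is bounded below by $K$ by (A1)--(A2), and grows at most linearly in $n$ by (B2)+(B3); applying optional stopping at the stopping times $\tau\le T'$ then gives $\expv\!\big[\epsilon(T'-\tau)\mid\mathcal{F}_\tau\big]\le\eta(\pv'_\tau)-K$. On the $\mathcal{F}_\tau$-measurable event $\{T\le T'\}$ we have $\tau=T$, and there RSM-continuity (B3), combined with $\dist(\pv_T,\pv'_T)\le\dist(\pv,\pv')$ from the previous paragraph and $\eta(\pv_T)\le0$ from (A2), forces $\eta(\pv'_\tau)=\eta(\pv'_T)\le\eta(\pv_T)+M\dist(\pv_T,\pv'_T)\le M\dist(\pv,\pv')$; hence $\expv[\epsilon(T'-\tau)\,\mathbf{1}_{\{T\le T'\}}]\le M\dist(\pv,\pv')-K$. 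A symmetric estimate on $\{T'\le T\}$ together with the identity $|T-T'|=(T-\tau)+(T'-\tau)$ (where the mismatched indicators vanish) combine to give $\expv|T-T'|\le\frac2\epsilon\big(M\dist(\pv,\pv')-K\big)$. I expect the genuinely delicate points to sit here: checking the supermartingale property of $W_n$ across the step at which the first run terminates, securing the integrability needed to invoke optional stopping at the random time $\tau$ (which is where the bounded-update hypothesis re-enters), and verifying $\{T\le T'\}\in\mathcal{F}_\tau$.

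Finally, substituting the bound on $\expv|T-T'|$ into the inequality from the second step,
\[
|\expv_\pv(Z)-\expv_{\pv'}(Z')|\ \le\ \tfrac1{D_1}\Big(\dist(\pv,\pv')+\tfrac{2d}{\epsilon}\big(M\dist(\pv,\pv')-K\big)\Big)\ =\ \tfrac{\epsilon+2dM}{\epsilon D_1}\,\dist(\pv,\pv')-\tfrac{2dK}{\epsilon D_1},
\]
which is at most $A\cdot\dist(\pv,\pv')+B$ with $A=2\cdot\frac{dM+\epsilon}{\epsilon D_1}$ (since $\epsilon\le 2\epsilon$) and $B=-2\cdot\frac{dK}{\epsilon D_1}\ge0$ (since $K\le0$). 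Since this inequality holds with no restriction on $\dist(\pv,\pv')$ we may take $\theta=\infty$, and since $z$ was an arbitrary program variable and $\pv,\pv'\in\Sat{\Phi}$ arbitrary, this would establish that $Q$ is expected affine-sensitive over $\Sat{\Phi}$ with the stated coefficients.
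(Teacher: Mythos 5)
Your proposal is correct, and it arrives at the paper's constants (in fact with a marginally smaller $A$, since $\tfrac{\epsilon+2dM}{\epsilon D_1}\le 2\cdot\tfrac{dM+\epsilon}{\epsilon D_1}$). The core ingredients coincide with the paper's: runs driven by shared randomness, non-expansiveness until the first termination, and the RSM bound on the survivor's remaining expected iterations combined with RSM-continuity and (A2) to force $\eta\le M\cdot\dist(\pv,\pv')$ at the desynchronization point. What differs is the decomposition. The paper expands $\expv_\pv(Z)-\expv_{\pv'}(Z')$ via an integral expansion down to the first step $m$ at which one run exits, does a case analysis there, and sums $\probm(T=m)+\probm(T'=m)$ over $m$ to obtain the factor $2$. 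You instead prove the single pathwise inequality $|Z-Z'|\le\frac{1}{D_1}\bigl(\dist(\pv,\pv')+d\,|T-T'|\bigr)$ and then bound $\expv|T-T'|\le\frac{2}{\epsilon}\bigl(M\dist(\pv,\pv')-K\bigr)$, with the factor $2$ coming from the two symmetric events $\{T\le T'\}$ and $\{T'\le T\}$. The delicate points you flag all go through: $W_n-K$ is a non-negative supermartingale by (A1)--(A3), so optional stopping at the a.s.-finite stopping times $\tau\le T'$ needs no extra integrability, and $\{T\le T'\}\in\mathcal{F}_\tau$ because $\{T\le T'\}\cap\{\tau\le n\}=\bigcup_{k\le n}\bigl(\{T=k\}\cap\{T'\ge k\}\bigr)\in\mathcal{F}_n$. (You could also sidestep the optional-stopping verification entirely by applying Theorem~\ref{thm:rsmmaps1} at the restart point $\pv'_\tau$ via the Markov property, which is exactly what the paper does in its Case~2.) What your route buys is a cleaner separation of concerns --- the desynchronization penalty is isolated in the single quantity $\expv|T-T'|$ --- and it avoids the union bound $\probm(T=m\vee T'=m)\le\probm(T=m)+\probm(T'=m)$; what the paper's route buys is that the same case analysis and integral-expansion machinery is reused verbatim for Proposition~\ref{prop:linear}, Theorem~\ref{thm:exp} and Theorem~\ref{thm:seqcom}, where a purely pathwise bound no longer suffices.
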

\begin{proof}[Proof Sketch]
Choose any program variable $z$.
Let $d$ be a bound
from Definition~\ref{def:bu1}, and $\eta$ be an RSM-map with the parameters $\epsilon, K$ from Definition~\ref{def:RSMmaps1} that has RSM-continuity with a constant $M$ from Definition~\ref{def:LipconRSM1}.
Consider input program valuations $\pv,\pv'$ such that $\pv,\pv'\models\Phi$.
Let $\delta := \dist(\pv,\pv')$.
Denote by $T_{\pv''}$ (resp. $Z_{\pv''}$) the random variable for the number of loop iterations (resp. the value of $z$ after the execution of $Q$) from an input program valuation $\pv''$, respectively.
Define $\mathbf{W}_{\pv''}$ as the vector of random variables that represents the program valuation after the execution of $Q$, starting from $\pv''$.
We illustrate the main proof idea through clarifying the relationships between any runs $\omega=\{\pv_n\}_{n\ge 0}$, $\omega'=\{\pv'_n\}_{n\ge 0}$ that start from respectively $\pv,\pv'$ (i.e., $\pv_0=\pv$ and $\pv'_0=\pv'$) and follow the \emph{same} probabilistic branches and sampled values in every loop iteration.
Consider at a step $n$ the event $\min\{T_{\pv}, T_{\pv'}\}\ge n$ holds (i.e., both the executions do not terminate before the $n$th loop iteration). We have the following cases:
\begin{compactitem}
\item[{\em Case 1.}]
	Both $\pv_n$ and $\pv'_n$ violate the loop guard $\Phi$, i.e., $\pv_n,\pv'_n\models \neg\Phi$.
	This case describes that the loop $Q$ terminates exactly after the $n$th loop iteration for both the executions. From the non-expansiveness, we obtain directly that
	$\dist(\pv_n,\pv'_n)\leq \delta$. Hence $|\pv_n[z]-\pv'_n[z]|\le \frac{\dist(\pv_n,\pv'_n)}{D_1}\le \frac{\delta}{D_1}$.
\item[{\em Case 2.}] Exactly one of $\pv_n, \pv'_n$ violates the loop guard $\Phi$. This is the non-synchronous situation that needs to be addressed through martingales.
W.l.o.g., we can assume that $\pv_n\models\Phi$ and $\pv'_n\models \neg\Phi$. From the upper-bound property of RSM-maps (see Theorem~\ref{thm:rsmmaps1} in Appendix~\ref{app:simplecase}), we derive that $\expv_{\pv_n}(T_{\pv_n})\leq\dfrac{\eta(\pv_n) -K}{\epsilon}$.
From the bounded-update condition (B2) and the triangle inequality of metrics, we have that $|\pv_n[z]-Z_{\pv_n}|\le \frac{1}{D_1}\cdot\dist(\pv_n,\mathbf{W}_{\pv_n})\le \frac{d}{D_1}\cdot T_{\pv_n}$.
Thus, we obtain that
\begin{equation}
|\expv_{\pv_n}(Z_{\pv_n})-\pv_n[z]| \leq \expv_{\pv_n}(|\pv_n[z]-Z_{\pv_n}|) \leq  \expv_{\pv_n}\left(\frac{d}{D_1}\cdot T_{\pv_n}\right)   \leq \frac{d}{D_1}\cdot \dfrac{\eta(\pv_n) -K}{\epsilon}\enskip.
\end{equation}
By the non-expansiveness, we have $\dist(\pv_n,\pv'_n)\leq \delta$. Then by the RSM-continuity (B3), we have $|\eta(\pv_n)-\eta(\pv'_n)|\leq M\cdot \delta$.
    Furthermore, from  (A2) we have $\eta(\pv'_n)\le 0$. So we obtain that $\eta(\pv_n)\leq M\cdot \delta$.
It follows that
\begin{eqnarray*}
|\expv_{\pv_n}(Z_{\pv_n})-\expv_{\pv'_n}(Z_{\pv'_n})| &=& |\expv_{\pv_n}(Z_{\pv_n})-\pv'_n[z]|  \\
&\leq &  |\expv_{\pv_n}(Z_{\pv_n})-\pv_n[z]|+ |\pv_n[z]-\pv'_n[z] |  \\
&\leq & \frac{d}{D_1}\cdot\dfrac{M\cdot  \delta-K}{\epsilon}+ \frac{\delta}{D_1} = \frac{d\cdot M+\epsilon}{\epsilon\cdot D_1}\cdot   \delta- \frac{d\cdot K}{\epsilon\cdot D_1}\enskip.
\end{eqnarray*}
\item[{\em Case 3.}] Neither $\pv_n$ nor $\pv'_n$ violates the loop guard $\Phi$. In this case, the loop $Q$ will continue from both $\pv_n$ and $\pv'_n$.
Then in the next iteration, the same analysis can be carried out for the next program valuations $\pv_{n+1},\pv'_{n+1}$, and so forth.
\end{compactitem}
From Theorem~\ref{thm:rsmmaps1} (in Appendix~\ref{app:simplecase}),
the probability that the third case happens infinitely often equals zero.
Thus, the sensitivity analysis eventually reduces to the first two cases.
In these two cases, the difference contributed to the total expected sensitivity $|\expv_{\pv}(Z_{\pv})-\expv_{\pv'}(Z_{\pv'})|$ when one of the runs terminates
after the $n$th loop iteration is at most
$\probm(T_\pv=n\vee T_{\pv'}=n)\cdot \left(\frac{d\cdot M+\epsilon}{  \epsilon\cdot D_1}\cdot   \delta- \frac{d\cdot K}{\epsilon\cdot D_1}\right)$
which is no greater than
$(\probm(T_\pv=n)+ \probm(T_{\pv'}=n))\cdot \left(\frac{d\cdot M+\epsilon}{  \epsilon\cdot D_1}\cdot   \delta- \frac{d\cdot K}{\epsilon\cdot D_1}\right)$.
Then by a summation over all $n$'s, we derive the desired result that
$|\expv_{\pv}(Z_{\pv})-\expv_{\pv'}(Z_{\pv'})| \le  A\cdot\delta+ B$
where $A:=2\cdot\frac{d\cdot M+\epsilon}{  \epsilon\cdot D_1}$ and $B:=-2\cdot\frac{d\cdot K}{\epsilon\cdot D_1}$.
The detailed proof requires an explicit representation of the expected values through Lebesgue integral (see Appendix~\ref{app:simplecase}).
In particular, the integral representation allows us to consider the \emph{same} probabilistic branches and sampled values in each loop iteration.
Another subtle point is that the integrability of the random variables $Z,Z'$ in (\ref{eq:expaffsen}) are guaranteed by the bounded-update condition and finite expected termination time from Theorem~\ref{thm:rsmmaps1}.
\end{proof}

\begin{remark}
Although in the statement of Theorem~\ref{thm:affine} we do not bound the constant $B$, the result is non-trivial as it applies to all program valuations that satisfies
the loop guard.
This is because input program valuations in the satisfaction set may lead to unbounded expected outcome as the expected number of loop iterations depend on input program valuations. Thus, simply raising the value of $B$ does not suffice to bound the unbounded expected outcomes.
\end{remark}

\begin{remark}\label{rmk:lpg}
In Theorem~\ref{thm:affine} we only consider the sensitivity over the loop guard.
The reason is that since we consider loops with randomized and input-dependent loop iterations, sensitivity usually applies to the loop guard only.
Consider the (nonprobabilistic) loop in Figure~\ref{fig:loopguardcounterexample} on Page~\pageref{fig:loopguardcounterexample}. We can construct an RSM-map $\eta$ by $\eta(x,y):=y$ with parameters $\epsilon=1, K=-1$. Then by Theorem~\ref{thm:affine} (where we set $d=M=1$ and $\dist$ to be the max-norm) we derive that the loop is expected affine-sensitive
over its loop guard.
However, it is straightforward to observe that if we choose two input program valuations $\pv,\pv'$ such that $\pv[x]=n$, $\pv'[x]=n-\epsilon$ and $\pv[y]=\pv'[y]=n$, where $n$ is a natural number that can be arbitrarily large and $\epsilon$ is a positive real number that can be arbitrarily close to zero (so that $\pv\models\Phi$ and $\pv'\not\models\Phi$), then the expected affine-sentivity does not hold.
The reason is that the execution from $\pv$ enters the loop and ends with $-1$ for $y$, and that from $\pv'$ does not enter the loop and hence keeps its input value $n$.
\end{remark}

\begin{example}[The Running Example]\label{ex:running}
Consider our running example in Figure~\ref{fig:runningexample1}.
We choose the sampling variable $r$ to observe the Bernoulli distribution $\probm(r=0)=\probm(r=1)=\frac{1}{2}$ and
the metric $\dist$ as the max norm. 
Then we can construct an RSM-map $\eta(x)=1000-x$ with $\epsilon=\frac{1}{2}, K=-1$.
Moreover, the RSM-map $\eta$ has the RSM-continuity with $M=1$, and the loop has bounded update with $d=1$.
Hence by Theorem~\ref{thm:affine}, the loop is expected affine-sensitive
over its loop guard.
\end{example}

\begin{example}[Mini-roulette]\label{ex:roulette2}
We show that the Mini-roulette example in Figure~\ref{fig:example1}(left) is expected affine-sensitive in the program variable $w$ over its loop guard.
To show this, we construct the function $\eta(x,w)=13\cdot x-13$ with $\epsilon=1, K=-13$.
We also clarify the following points.
\begin{compactenum}
\item For any values $x_1,x_2$ to the program variable $x$ before a loop iteration and any $\loc\in\mathbf{L}$ that resolves the probabilistic branches, we have that $|(x_1+a)-(x_2+a)|=|x_1-x_2|$ after the loop iteration where the value of $a$ is determined by the probabilistic branch (i.e for branch $5$, $a=11$).
    The same applies to the program variable $w$. Thus the loop is non-expansive.
\item All increments to $x$ and $w$ are bounded, hence the loop has bounded update, which ensures (B2).
\item The loop guard $x\ge 1$ implies $\eta(x,w)=13\cdot x-13\ge 0$, thus (A1) is satisfied. When $x\ge 1$, $\loc
\in\mathbf{L}$ and $\updf(\loc, (x,w), -)<1$, we have $-13\le \eta(\updf(\loc, (x,w), -))\le 0$, ensuring (A2). When $x\ge 1$, $\loc
\in\mathbf{L}$, we have $\expv_{\loc}(\eta(\updf(\loc,(x,w),-)))\leq \eta(x,w)-1$.
Thus $\eta$ is an RSM-map.
\item Given any values $x_1,x_2\ge 1$ and $w_1,w_2$ to the program variables $x,w$, we have $|\eta(x_1,w_1)-\eta(x_2,w_2)|=13\cdot |x_1-x_2|$. Thus $\eta$ has \emph{RSM-continuity}.
\end{compactenum}
By Theorem \ref{thm:affine}, we obtain that the program is expected affine-sensitive over its loop guard.
\end{example}

\begin{remark}\label{rmk:rsmc1}
For proving expected affine sensitivity, one can relax the RSM-continuity to the condition that
$\exists C>0.\forall \pv,\pv'. \left[(\updf(\loc,\pv,\sv)\models\Phi\wedge \updf(\loc,\pv',\sv)\not\models\Phi)\Rightarrow \eta(\pv)\le C\right]$,
so that the difference in non-synchronous situations is guaranteed to be bounded by $C$.
\end{remark}

\begin{remark}[Conditional Branches]\label{rmk:condbranch}
In certain scenarios, it is possible to extend Theorem~\ref{thm:affine} to conditional branches at which it may happen that one program valuation satisfies the condition of the branch and another close-by valuation does not.
Consider a scenario where we are to use piecewise-linear functions to approximate a complex loop body.
If the approximation is sufficiently strong to ensure that neighbouring pieces of functions behave similarly, then our approach can handle the scenario as follows.
We first relax the Lipschitz continuity in Definition~\ref{def:LipconLB1} to ``$\le L\cdot\mathfrak{d}(\pv,\pv')+V$'' where $V$ is a nonnegative constant that bounds the difference between neighbouring pieces of functions in each loop iteration (which is small for sufficiently strong approximation).
This ensures that the final sensitivity would be
$A\cdot\mathfrak{d}(\pv,\pv')+(B+V\cdot \expv(T))$ in the non-expansive case where $\expv(T)$ is the expected termination time of the loop (that depends on the initial input).
\end{remark}

\subsection{Proving Expected Linear-Sensitivity}\label{sect:avsen}

To develop a sound approach for proving expected linear-sensitivity, one possible way is to extend the approach for expected affine-sensitivity.
However, simply extending the approach is not correct, as is shown by the following example.


\begin{example}\label{ex:linearcounter}
Consider our running example in Figure~\ref{fig:runningexample1}. We first consider that the sampling variable $r$ observes the Dirac distribution such that $\probm(r=1)=1$, the same as in Example
~\ref{exx:running}.
By choosing the same initial values $x^*_1$ and $x^*_2$ from Example~\ref{exx:running}, we have that the outcomes satisfy $|x^\mathrm{out}_1-x^\mathrm{out}_2|=1-2\cdot\epsilon$.
Hence, we could not find a constant $A$ such that $|x^\mathrm{out}_1-x^\mathrm{out}_2|\le A\cdot |x^*_1-x^*_2|=2\cdot A\cdot \epsilon$ when $\epsilon\rightarrow 0$.
Similar situation happens even if we have non-Dirac discrete probability distributions. For example, consider now that the sampling variable $r$ observes the distribution such that $\probm(r=0)=\probm(r=1)=0.5$.
Then with the same initial values $x^*_1$ and $x^*_2$, as the increment to the program variable $x$ is either $0$ or $1$, we have the same outputs $x^\mathrm{out}_1, x^\mathrm{out}_2$, refuting 
expected linear-sensitive.
\end{example}

The reason why we have such a situation in Example~\ref{ex:linearcounter} is again due to the non-synchronous situation where the number of loop iterations depends on the input program valuation.
While proving expected affine-sensitivity we can use a constant $B$ (cf. Definition~\ref{def:EAS}) to bound the difference caused by non-synchronous situations, in proving linear-sensitivity we need to set $B=0$, leading to a difficulty that cannot be resolved by the technique developed for affine-sensitivity.
To address this issue, we introduce
another Lipschitz continuity
w.r.t a given metric $\dist$.

\begin{definition}[Lipschitz Continuity in Next-step Termination $L'$]\label{def:LipconT1}
We say that a simple while loop $Q$ in the form (\ref{eq:swl}) is
\emph{Lipschitz continuous in next-step termination}
if there exists a constant $L'>0$ such that
\begin{compactitem}
\item[(B4)] $\forall\loc\,\forall \pv,\pv': \left(\pv,\pv' \models \Phi\Rightarrow \probm_{\sv}(\updf(\loc,\pv,\sv)\models \Phi\wedge \updf(\loc,\pv',\sv)\models \neg\Phi) \leq L'\cdot \dist(\pv,\pv') \right)$
\end{compactitem}
where given the program valuations $\pv,\pv'$ before the loop iteration and the resolution $\loc$ for the probabilistic branches,
the value $\probm_{\sv}(\updf(\loc,\pv,\sv)\models \Phi\wedge \updf(\loc,\pv',\sv)\models \neg\Phi)$ is the probability regarding the sampled values that after one loop iteration we have $\updf(\loc,\pv,\sv)$ can still enter the loop, while $\updf(\loc,\pv',\sv)$ violates the loop guard.
\end{definition}

The condition (B4) specifies that when the program valuations $\pv,\pv'$ are close, the probability that after the current loop iteration one of them stays in the loop while the other jumps out of the loop is
small as it is proportional to the distance between $\pv,\pv'$.
This condition handles the non-synchronous situation in the sense that the probability of non-synchronous situations is bounded linearly by the distance between the program valuations before a loop iteration.
For simple while loops with only \emph{discrete} probability distributions, this condition is usually not met.
This is because in discrete probability distributions there often exists a vector $\sv$ of sampled values with a minimum probability $p>0$ that $\updf(\loc,\pv,\sv)\models\Phi$ and $\updf(\loc,\pv',\sv)\not\models\Phi$. In some cases such probability $p$ may be very large, e.g., in our running example (Example~\ref{ex:linearcounter}), with $\probm(r=1)=1$ we have that $\probm_{r}(\updf(\loc,999-\epsilon,r)\models \Phi\wedge \updf(\loc,999+\epsilon,r)\models \neg\Phi)=1$, and with $\probm(r=0)=\probm(r=1)=\frac{1}{2}$ we have that
the same probability is
$\frac{1}{2}$,
when $\epsilon\rightarrow 0$.
In contrast, loops with \emph{continuous} distributions often satisfy this condition. For example, consider again our running example where $r$ now observes the uniform distribution over the interval $[0,1]$. Then for any initial values $x''\le x'\le 1000$ for the program variable $x$, the probability that $x'+r>1000$ but $x''+r\le 1000$ equals the chance that the sampled value of $r$ falls in $(1000-x', 1000-x'']$, which is no greater than $|x-x'|$ as the probability density function of $r$ is $1$ over the interval $[0,1]$.

In the following, we show that a large class of \emph{affine} simple while loops with continuous distributions guarantees the (B4) condition.
Below we say that an update function $\updf$ is \emph{affine} if for all $\loc\in\mathbf{L}$, we have that $\updf(\loc, \pv,\sv)=\mathbf{B}\cdot \pv+ \mathbf{C}\cdot \sv +\mathbf{c}$ for constant matrices $\mathbf{B},\mathbf{C}$ and vector $\mathbf{c}$.
Moreover, a boolean expression $\Phi$ is said to be \emph{affine} if $\Phi$ can be equivalently rewritten into a disjunctive normal form (DNF) $\bigvee_{i\in\mathcal{I}} (\mathbf{A}_i\cdot\pv\le \mathbf{d}_i)$ with constant matrices $\mathbf{A}_i$ and vectors $\mathbf{d}_i$ so that for all program valuations $\pv$, we have $\pv\models\Phi$ iff the disjunctive formula $\bigvee_{i\in\mathcal{I}} (\mathbf{A}_i\cdot \pv\le \mathbf{d}_i)$ holds.
The class of simple while loops that guarantees (B4) is as follows.

\begin{lemma}
\label{le:LipconT1}
Consider a simple while loop $Q$ in the form (\ref{eq:swl}) that satisfies the following conditions: 
\begin{compactenum}
\item both $\updf$ and $\Phi$ are affine and $\Phi$ is equivalent to some DNF $\bigvee_{i\in\mathcal{I}} (\mathbf{A}_i\cdot \pv\le \mathbf{d}_i)$;
\item all sampling variables are continuously-distributed whose probability density functions have bounded values;
\item for all $i\in\mathcal{I}$, $\loc\in\mathbf{L}$ and program valuations $\pv\models\Phi$, the coefficients for the sampling variables $\sv$ in $\mathbf{A}_i\cdot \updf(\loc, \pv, \sv)$  are not all zero at each row, \ie, the truth value of each disjunctive clause in $\Phi$ for $\updf(\loc,\pv,\sv)$ depends on $\sv$ at every row.
\end{compactenum}
Then the loop $Q$
is Lipschitz continuous in next-step termination
w.r.t any metric $\dist$.
\end{lemma}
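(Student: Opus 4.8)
The plan is to bound, separately for each branch-resolution $\loc\in\mathbf{L}$ and each pair of valuations $\pv,\pv'\models\Phi$, the probability of the non-synchronous event $\updf(\loc,\pv,\sv)\models\Phi\wedge\updf(\loc,\pv',\sv)\models\neg\Phi$ occurring in (B4), and then to take a maximum over the finitely many $\loc$'s. Write the affine update as $\updf(\loc,\pv,\sv)=\mathbf{B}_\loc\cdot\pv+\mathbf{C}_\loc\cdot\sv+\mathbf{c}_\loc$. First I would use the DNF $\bigvee_{i\in\mathcal{I}}(\mathbf{A}_i\cdot\pv\le\mathbf{d}_i)$ of the loop guard to decompose that event: if $\updf(\loc,\pv,\sv)\models\Phi$ then every row of some clause $i^*$ holds at $\updf(\loc,\pv,\sv)$, and if in addition $\updf(\loc,\pv',\sv)\not\models\Phi$ then clause $i^*$ is in particular violated at $\updf(\loc,\pv',\sv)$, so some single row of clause $i^*$ switches from true at $\updf(\loc,\pv,\sv)$ to false at $\updf(\loc,\pv',\sv)$. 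Hence the event lies in the finite union $\bigcup_{i\in\mathcal{I}}\bigcup_j E_{i,j}$, where $E_{i,j}$ is the event $(\mathbf{A}_i)_j\cdot\updf(\loc,\pv,\sv)\le(\mathbf{d}_i)_j<(\mathbf{A}_i)_j\cdot\updf(\loc,\pv',\sv)$ for the $j$-th row of clause $i$; by the union bound it then suffices to bound each $\probm_\sv(E_{i,j})$ linearly in $\dist(\pv,\pv')$.

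Second, I would linearise $E_{i,j}$. Setting $\mathbf{a}:=(\mathbf{A}_i)_j$, $\mathbf{g}:=\mathbf{a}\,\mathbf{C}_\loc$ and $\alpha(\pv''):=\mathbf{a}\,\mathbf{B}_\loc\,\pv''+\mathbf{a}\,\mathbf{c}_\loc$, one has $\mathbf{a}\cdot\updf(\loc,\pv'',\sv)=\mathbf{g}\cdot\sv+\alpha(\pv'')$, so $E_{i,j}$ forces $\mathbf{g}\cdot\sv$ into an interval of length at most $|\alpha(\pv)-\alpha(\pv')|=|\mathbf{a}\,\mathbf{B}_\loc(\pv-\pv')|\le\|\mathbf{a}\,\mathbf{B}_\loc\|_1\cdot\|\pv-\pv'\|_\infty\le D_1^{-1}\,\|\mathbf{a}\,\mathbf{B}_\loc\|_1\cdot\dist(\pv,\pv')$, the last step by the comparability~(\ref{metric}). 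The remaining task is the anti-concentration estimate $\probm_\sv(\mathbf{g}\cdot\sv\in I)\le c\cdot|I|$ uniformly in the interval $I$. Here conditions~2 and~3 come in: condition~3 guarantees $\mathbf{g}=\mathbf{a}\,\mathbf{C}_\loc\ne\mathbf{0}$, so I would fix a coordinate $k$ with $g_k\ne0$; conditioning on the remaining (independent) sampling variables turns $\mathbf{g}\cdot\sv$ into an affine function $g_k\cdot\sv[r_k]+\text{const}$ of the single variable $r_k$, whose density is bounded by some $M_k<\infty$ by condition~2, giving conditional probability at most $M_k|I|/|g_k|$; integrating out the conditioning gives $c=M_k/|g_k|$. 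Combining, $\probm_\sv(E_{i,j})\le\frac{M_k\,\|\mathbf{a}\,\mathbf{B}_\loc\|_1}{|g_k|\,D_1}\cdot\dist(\pv,\pv')$, and summing over the finitely many $(i,j)$ and maximising over $\loc\in\mathbf{L}$ yields the constant $L'$ and hence (B4).

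I expect the main obstacle to be the bookkeeping in the first step: isolating, from ``$\updf(\loc,\pv,\sv)$ satisfies the whole guard while $\updf(\loc,\pv',\sv)$ violates it'', a single scalar half-space whose truth value differs between the two updated valuations, and checking that this really follows from the hypotheses. The rest should be routine --- the linearisation is a one-line computation and the anti-concentration bound is a standard conditioning argument --- and the degenerate situations ($\mathbf{a}\,\mathbf{B}_\loc=\mathbf{0}$, or $\alpha(\pv)>\alpha(\pv')$ so that $E_{i,j}$ is empty) only help, forcing $\probm_\sv(E_{i,j})=0$. One point worth flagging is that the conditioning step relies on (mutual) independence of the sampling variables, which is part of the programming model.
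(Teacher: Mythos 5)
Your proposal is correct and follows essentially the same route as the paper's proof: the same union-bound decomposition over disjunctive clauses and then over individual rows, the same reduction to the event that $(\mathbf{A}_i\cdot\mathbf{C})_j\cdot\sv$ lands in an interval of length $|(\mathbf{A}_i\cdot\mathbf{B})_j\cdot(\pv-\pv')|$, and the same anti-concentration bound via a nonzero coefficient guaranteed by condition~3 together with the bounded densities of condition~2 (your conditioning argument is exactly the paper's explicit iterated integral, and your closing remark on independence of the sampling variables matches the paper's implicit use of the product measure). The only cosmetic difference is that the paper sums the resulting constants over $\loc\in\mathbf{L}$ where you take a maximum; both yield a valid $L'$ since $\mathbf{L}$ is finite.
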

Informally, the lemma guarantees the (B4) condition by requiring that (i) both the update function and the loop guard are affine, (ii) all sampling variables are continuously-distributed, and (iii) the truth value of every linear inequality in the loop guard after the current loop iteration depends on the sampled values.
The proof of Lemma~\ref{le:LipconT1} is elementary and is put in Appendix~\ref{app:avsen}.

\begin{remark}
We note that Lemma~\ref{le:LipconT1} serves as a sound condition only, and there are situations where the prerequsite of Lemman~\ref{le:LipconT1} fails but the condition (B4) still holds.
For example, consider that a program variable $x$ is assigned to $1$ every time in a loop iteration, and the loop guard involves the condition  $x\le 2$.
Then the condition $x\le 2$ does not affect the truth value of the loop guard since it is always satisfied, but has zero coefficients for all sampling variables.
While to derive weaker conditions is possible, in this work we consider Lemma~\ref{le:LipconT1} as a simple guarantee for ensuring (B4).
\end{remark}

Now we demonstrate our approach for proving expected linear-sensitivity.
Our first result is a sound approach for proving \emph{local} linear-sensitivity. Below given a program valuation $\pv\models\Phi$, a radius $\rho>0$ and a metric $\dist$, we denote by $U_{\Phi,\dist}(\pv,\rho)$ the neighbourhood $\{\pv'\mid \pv'\models\Phi\wedge \dist(\pv,\pv')\le\rho\}$.

\begin{proposition}\label{prop:linear}
A non-expansive simple while loop $Q$ in the form~(\ref{eq:swl}) has expected linear-sensitivity
over some neighbourhood $U_{\Phi,\dist}(\pv^*, \rho)$ of any given $\pv^*\in\Sat{\Phi}$ if $Q$ has (i) bounded update,
(ii) an RSM-map with RSM-continuity and (iii) the \emph{Lipschitz continuity in next-step termination}.
\end{proposition}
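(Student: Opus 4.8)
The plan is to refine the argument of Theorem~\ref{thm:affine}, tracking the non-synchronous contribution more carefully so that it becomes a constant multiple of $\delta = \dist(\pv^*,\pv')$ rather than a fixed additive term $B$. As in the proof sketch of Theorem~\ref{thm:affine}, I would fix a program variable $z$ and couple the two runs $\omega=\{\pv_n\}_{n\ge 0}$, $\omega'=\{\pv'_n\}_{n\ge 0}$ from $\pv^*$ and $\pv'\in U_{\Phi,\dist}(\pv^*,\rho)$ by feeding them the \emph{same} probabilistic branch resolutions $\loc_n$ and the \emph{same} sampled values $\sv_n$ in every iteration. At each step $n$ with $\min\{T_{\pv^*},T_{\pv'}\}\ge n$ we are in one of the three cases of the earlier proof. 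In Cases~1 and~3 nothing changes. The heart of the matter is Case~2 — exactly one of $\pv_n,\pv'_n$ leaves the loop — which is precisely the non-synchronous event. Here the new ingredient (B4) enters: conditioned on $\min\{T_{\pv^*},T_{\pv'}\}\ge n$, the probability that Case~2 occurs at step $n$, i.e.\ that $\updf(\loc_{n-1},\pv_{n-1},\sv_{n-1})\models\Phi$ while $\updf(\loc_{n-1},\pv'_{n-1},\sv_{n-1})\models\neg\Phi$ (or vice versa), is at most $L'\cdot\dist(\pv_{n-1},\pv'_{n-1})\le L'\cdot\delta$ by non-expansiveness. On that event the difference contributed to $|\expv_{\pv^*}(Z)-\expv_{\pv'}(Z)|$ is bounded, exactly as in Theorem~\ref{thm:affine}'s Case~2 analysis, by $\frac{d}{D_1}\cdot\frac{M\delta - K}{\epsilon} + \frac{\delta}{D_1}$, which is affine in $\delta$ of the form $A'\delta - \frac{dK}{\epsilon D_1}$. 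The subtlety is that this bound still has the constant term $-\frac{dK}{\epsilon D_1}$; but now it is multiplied by a probability $\le L'\delta$, so the product $L'\delta\cdot\bigl(A'\delta - \frac{dK}{\epsilon D_1}\bigr)$ is $O(\delta^2 + \delta)$, and on a bounded neighbourhood $\dist(\pv^*,\pv')\le\rho$ this is $\le (\text{const depending on }\rho)\cdot\delta$ — a genuinely linear bound with $B=0$.

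Concretely, I would write $|\expv_{\pv^*}(Z)-\expv_{\pv'}(Z)|$ via its Lebesgue-integral representation over the coupled runs, split the sample space according to which loop iteration the \emph{first} run to terminate stops at, and within each such slice further split into the synchronous sub-case (both stop, Case~1) and the non-synchronous sub-case (Case~2). The synchronous contribution at step $n$ is bounded by $\probm(T_{\pv^*}=n \wedge T_{\pv'}=n)\cdot\frac{\delta}{D_1}$, and summing over $n$ gives $\le\frac{\delta}{D_1}$. The non-synchronous contribution at step $n$ is bounded, using (B4) and non-expansiveness, by $\probm(\min\{T_{\pv^*},T_{\pv'}\}\ge n)\cdot L'\delta\cdot\bigl(\frac{d}{D_1}\cdot\frac{M\delta-K}{\epsilon}+\frac{\delta}{D_1}\bigr)$. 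Summing over $n$ produces a factor $\sum_n\probm(\min\{T_{\pv^*},T_{\pv'}\}\ge n) = \expv(\min\{T_{\pv^*},T_{\pv'}\})\le\expv_{\pv^*}(T_{\pv^*})$, which by the upper-bound property of RSM-maps (Theorem~\ref{thm:rsmmaps1}) is at most $\frac{\eta(\pv^*)-K}{\epsilon}$ — a finite quantity that depends on $\pv^*$ but not on $\pv'$. Combining, $|\expv_{\pv^*}(Z)-\expv_{\pv'}(Z)|\le\bigl(\frac{1}{D_1} + L'\cdot\frac{\eta(\pv^*)-K}{\epsilon}\cdot\bigl(\frac{dM+\epsilon}{\epsilon D_1}\cdot\rho - \frac{dK}{\epsilon D_1}\bigr)\bigr)\cdot\delta$, which has the required form $A\cdot\dist(\pv^*,\pv')$ with $B=0$, valid for all $\pv'$ in the neighbourhood of radius $\rho$. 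The integrability of $Z,Z'$ is again guaranteed by the bounded-update condition together with finite expected termination time.

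The main obstacle I anticipate is making the coupling-and-summation bookkeeping fully rigorous: one must argue that feeding identical $\loc_n,\sv_n$ to both runs is legitimate (this is the integral-representation trick already used in Theorem~\ref{thm:affine}), and one must correctly bound the \emph{conditional} probability of the non-synchronous event at step $n$ — (B4) bounds it in terms of $\dist(\pv_{n-1},\pv'_{n-1})$, and one needs non-expansiveness to replace that by $\delta$ uniformly in $n$, then verify this conditional bound integrates against $\probm(\min\{T_{\pv^*},T_{\pv'}\}\ge n)$ correctly via a union/tower-of-expectations argument. A secondary delicate point is that the "after the first run terminates, the surviving run still takes $\le\frac{\eta(\pv_n)-K}{\epsilon}$ more expected steps, and its subsequent value-drift is $\le\frac{d}{D_1}$ per step" reasoning must be applied at the random stopping time $n$; this requires either the optional-stopping-type argument from Theorem~\ref{thm:rsmmaps1} applied at $\pv_n$, or a strong-Markov observation, to conclude $\expv(\cdot\mid\mathcal{F}_n)$ is controlled by $\eta(\pv_n)\le M\delta$ via RSM-continuity and (A2). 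Everything else is the routine summation already carried out for the affine case.
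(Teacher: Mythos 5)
Your proposal is correct and follows essentially the same route as the paper's proof in Appendix~\ref{app:avsen}: the same three-case decomposition inherited from Theorem~\ref{thm:affine}, with condition (B4) plus non-expansiveness bounding the probability of the non-synchronous Case~2 at step $n$ by $L'\cdot\dist(\pv_{n-1},\pv'_{n-1})\le L'\cdot\delta$, the per-event contribution bounded on the neighbourhood by the constant $C:=A'\cdot\rho+B'$, and the sum over steps controlled by $\expv_{\pv}(T_{\pv})\le\frac{\eta(\pv^*)-K}{\epsilon}$ via Theorem~\ref{thm:rsmmaps1}, yielding a coefficient of the same form $\bigl(C\cdot L'\cdot\frac{\eta(\pv^*)-K}{\epsilon}+O(1)\bigr)$. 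The only cosmetic difference is that you anchor one of the two runs at $\pv^*$, whereas the statement requires the bound for arbitrary pairs $\pv,\pv'$ in the neighbourhood; the paper patches this with a final triangle-inequality step through $\pv^*$ (doubling the constant), and your argument extends just as directly since RSM-continuity gives $\eta(\pv)\le\eta(\pv^*)+M\cdot\rho$, keeping $\expv_{\pv}(T_{\pv})$ uniformly bounded over the neighbourhood.
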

The proof resembles the one for expected affine-sensitivity (Theorem~\ref{thm:affine}). The obtained
the coefficient $A$ (cf. Definition~\ref{def:EAS}) depends on the expected termination time from the input program valuation $\pv^*$. See Appendix~\ref{app:avsen} for details.

Proposition~\ref{prop:linear} gives a sound approach for proving \emph{local} linear-sensitivity in that the coefficient $A$
only works for a small neighbourhood of a given input.
A natural question arises whether we can obtain \emph{global} linear-sensitivity so that the coefficient $A$ works for all program valuations that satisfy the loop guard.
The major barrier in the proof of Proposition~\ref{prop:linear} to obtain global linear sensitivity
is that in general we can only treat every program valuation uniformly, without distinguishing between program valuations with large and small RSM-map values.
To overcome this difficulty, we partition program valuations into \emph{finitely} many classes so that each class shares a common coefficient, but different classes may have different coefficients. Based on the partition, we utilize the inter-relationship between different classes to prove the existence of a collection of coefficients for the expected linear-sensitivity.
The partition relies on the \emph{difference-bounded} condition for RSM-maps proposed for proving concentration properties of termination time~\cite{DBLP:journals/toplas/ChatterjeeFNH18}.


\begin{definition}[The Difference-bounded Condition~\cite{DBLP:journals/toplas/ChatterjeeFNH18}]\label{def:diffbounded}
We say that an RSM-map $\eta$ (for a simple while loop) is \emph{difference-bounded} if there exists a constant $c\ge 0$ such that
\begin{compactitem}
\item[(A4)] $\forall\pv\,\forall\loc\,\forall \sv: \left(\pv\models \Phi \Rightarrow |\eta(\updf(\loc,\pv,\sv))-\eta(\pv)|\le c\right)$\enskip.
\end{compactitem}
\end{definition}
The difference-bounded condition ensures the concentration property for program termination~\cite{DBLP:journals/toplas/ChatterjeeFNH18}, see Theorem~\ref{thm:rsmmaps} in Appendix~\ref{app:sect6}.
Below we demonstrate how this condition helps to partition the program valuations into finitely-many regions by their corresponding RSM-map values.
First we show that this condition derives a minimum positive probability that the value of an RSM-map decreases by a minimum positive amount.

\begin{lemma}\label{lemm:pd}
If $\eta$ is a difference-bounded RSM-map with the parameters $\epsilon,c$ specified in Definition~\ref{def:RSMmaps1} and Definition~\ref{def:diffbounded}, then there exists a constant $p\in (0,1]$ such that
\begin{compactitem}
\item[\emph{(\dag)}] $\forall\pv: \left(\pv\models\Phi\Rightarrow \probm_{\sv,\loc}(\eta(\updf(\loc,\pv,\sv))-\eta(\pv)\leq -\frac{1}{2}\cdot\epsilon)\ge p\right)$
\end{compactitem}
where the probability $\probm_{\sv,\loc}(-)$ is taken w.r.t the sampled valuation $\sv$ and the resolution $\loc$ for probabilistic branches, and treats the program valuation $\pv$ as constant.
In particular, we can take $p:=\frac{\epsilon}{2\cdot c-\epsilon}$.
\end{lemma}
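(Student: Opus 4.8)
The plan is to derive the claimed lower bound on $\probm_{\sv,\loc}(\eta(\updf(\loc,\pv,\sv))-\eta(\pv)\le -\frac{1}{2}\epsilon)$ by a standard one-sided concentration argument, using (A3) to control the mean and (A4) to control the range of the random decrease. Fix a program valuation $\pv\models\Phi$ and let $X:=\eta(\pv)-\eta(\updf(\loc,\pv,\sv))$ be the (random) decrease in the RSM-map value over one loop iteration, where $\sv$ and $\loc$ are distributed as in the definition. By (A3) we have $\expv(X)\ge\epsilon$, and by (A4) we have $|X|\le c$ almost surely (in particular $X\le c$). We want a uniform lower bound $p>0$ on $\probm(X\ge\frac{1}{2}\epsilon)$.

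First I would split the expectation according to whether the event $\{X\ge\frac{1}{2}\epsilon\}$ holds. On the complement we use $X<\frac{1}{2}\epsilon$, and on the event itself we use the crude bound $X\le c$ coming from (A4). This gives
\begin{align*}
\epsilon\le\expv(X)&=\expv\bigl(X\cdot\mathds{1}_{\{X\ge\frac{1}{2}\epsilon\}}\bigr)+\expv\bigl(X\cdot\mathds{1}_{\{X<\frac{1}{2}\epsilon\}}\bigr)\\
&\le c\cdot\probm\bigl(X\ge\tfrac{1}{2}\epsilon\bigr)+\tfrac{1}{2}\epsilon\cdot\Bigl(1-\probm\bigl(X\ge\tfrac{1}{2}\epsilon\bigr)\Bigr)\enskip.
\end{align*}
(One should double-check that on the complement event the contribution is genuinely at most $\frac{1}{2}\epsilon$: if $X$ could be as negative as $-c$, the term $\expv(X\cdot\mathds{1}_{\{X<\frac12\epsilon\}})$ is still bounded above by $\frac12\epsilon\cdot\probm(X<\frac12\epsilon)$ since we are only taking an upper bound, so dropping the negative part only helps.) Rearranging the inequality $\epsilon\le c\cdot q+\frac{1}{2}\epsilon\cdot(1-q)$ with $q:=\probm(X\ge\frac{1}{2}\epsilon)$ yields $\frac{1}{2}\epsilon\le(c-\frac{1}{2}\epsilon)\cdot q$, hence $q\ge\frac{\epsilon}{2c-\epsilon}$, which is exactly the claimed bound $p=\frac{\epsilon}{2c-\epsilon}$.

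Finally I would check that $p$ lies in $(0,1]$: since $\eta$ is a valid RSM-map we may assume $\epsilon>0$, and the difference-bounded constant satisfies $c\ge\epsilon$ (indeed, any single-iteration decrease of mean at least $\epsilon$ that is bounded in absolute value by $c$ forces $c\ge\epsilon$; if not one can harmlessly enlarge $c$), so $2c-\epsilon\ge\epsilon>0$ and $0<\frac{\epsilon}{2c-\epsilon}\le 1$. Since the bound obtained for $q$ did not depend on $\pv$, the same $p$ works for every $\pv\models\Phi$, establishing~(\dag). The only genuinely delicate point is the bookkeeping of signs in the expectation split — making sure the negative tail of $X$ is discarded in the direction that preserves the inequality — and confirming $c\ge\epsilon$ so that $p$ is a legitimate probability; everything else is routine.
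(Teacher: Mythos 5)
Your proof is correct and is essentially the paper's own argument: the paper applies Markov's inequality to the nonnegative shifted variable $\eta(\updf(\loc,\pv,\sv))-\eta(\pv)+c$ with threshold $c-\frac{1}{2}\epsilon$, and your expectation-splitting of $X=\eta(\pv)-\eta(\updf(\loc,\pv,\sv))$ is precisely that inequality written out from first principles, yielding the identical bound $p=\frac{\epsilon}{2c-\epsilon}$. Your side remarks (that the negative tail only helps the upper bound, and that (A3) together with (A4) force $\epsilon\le c$ so that $p\in(0,1]$) match the observations the paper makes as well.
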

\begin{proof}[Proof Sketch]
The proof is through Markov's inequality. See Appendix~\ref{app:avsen} for details.
\end{proof}

\noindent{\em The finite partition.}
Based on Lemma~\ref{lemm:pd}, we partition the satisfaction set $\Sat{\Phi}$ into finitely many regions.
Our aim is to have a partition $R_1,\dots, R_{n^*}, R_\infty$ based on which we find individual sensitivity coefficients on each $R_k$.
First we choose a smallest natural number $n^*$ such that $(n^*-1)\cdot \frac{1}{2}\cdot \epsilon\le c+1 < n^*\cdot  \frac{1}{2}\cdot \epsilon$. (Note that  $n^*\ge 3$ as $\epsilon\le c$.)
Then we define the region $R_k:=\{\pv\in \Sat{\Phi}\mid (k-1)\cdot\frac{1}{2}\cdot\epsilon\le \eta(\pv)< k\cdot\frac{1}{2}\cdot\epsilon\}$ for natural numbers $1\le k\le n^*$, and
$R_{\infty}:=\{\pv\in\Sat{\Phi}\mid \eta(\pv)\ge n^*\cdot \frac{1}{2}\cdot\epsilon\}$.
It follows that $\Sat{\Phi}$ is a disjoint union of all $R_k$'s. 
Especially, we treat $R_{\infty}$ as the region for program valuations with ``large enough'' RSM-map values.
After the partitioning, we are to prove that for each $R_k$ there is a coefficient $A_k$ for Definition~\ref{def:EAS}, and for different $R_k$'s there may be different $A_k$'s.
The following result presents the first step of the proof, where each $A_k$ ($1\le k\le n^*$) represents  the coefficient for $R_k$ and $A_\infty$ for $R_{\infty}$.

\begin{proposition}\label{prop:Ak}
For any natural number $n\ge 1$, real numbers $C,D\ge 0$ and probability value $p\in (0,1]$, the following system of linear inequalities (with real variables $A_k$'s ($0\le k\le n$) and $A_\infty$)
\begin{eqnarray*}
(1-p)\cdot A_\infty+ C+ p\cdot A_0  \le  A_1 \\
(1-p)\cdot A_\infty+ C+ p\cdot A_1  \le  A_2 \\
\qquad\qquad\qquad \vdots \qquad\qquad\qquad\\
(1-p)\cdot A_\infty+ C+ p\cdot A_{n-1}  \le  A_n \\
D=A_0\le A_1\le  \dots \le A_n \le A_\infty \\
\end{eqnarray*}
has a solution.
\end{proposition}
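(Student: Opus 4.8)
The plan is to satisfy the first $n$ inequalities \emph{with equality}, treating $A_\infty$ as a parameter to be fixed last, and then to choose $A_\infty$ large enough to close the remaining chain $A_n \le A_\infty$. Concretely, I would set $A_0 := D$ and define, for $1 \le k \le n$,
\[
A_k \;:=\; (1-p)\cdot A_\infty + C + p\cdot A_{k-1}.
\]
By construction the first $n$ inequalities then hold as equalities, whatever the value of $A_\infty$, so the task reduces to choosing $A_\infty$ such that $D = A_0 \le A_1 \le \dots \le A_n \le A_\infty$.

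For the monotonicity chain, a direct computation gives $A_1 - A_0 = (1-p)\cdot A_\infty + C + p\cdot D - D = (1-p)(A_\infty - D) + C$, which is nonnegative as soon as $A_\infty \ge D$ (using $C \ge 0$ and $p \le 1$). Subtracting consecutive instances of the defining recurrence yields $A_{k+1} - A_k = p\cdot (A_k - A_{k-1})$ for every $k \ge 1$, so a one-line induction shows that all consecutive differences remain nonnegative once $A_1 - A_0$ is; hence $A_0 \le A_1 \le \dots \le A_n$ whenever $A_\infty \ge D$.

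It remains to guarantee $A_n \le A_\infty$, and this is the only step requiring a little thought: one cannot simply take all the $A_k$ equal, since that would force $C \le 0$, so the $A_k$'s must genuinely ramp up, and one must check that $A_n$ does not overshoot $A_\infty$. Unwinding the recurrence, the coefficient of $A_\infty$ in $A_k$ satisfies the same induction, $(1-p) + p\cdot(1 - p^{k-1}) = 1 - p^k$, so $A_n$ is an affine function of $A_\infty$ with slope $1 - p^n < 1$; consequently $A_n - A_\infty \to -\infty$ as $A_\infty \to \infty$, and any sufficiently large $A_\infty$ (which also ensures $A_\infty \ge D$) yields $A_n \le A_\infty$. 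Explicitly, one may take $A_\infty := D + C\cdot \sum_{j=0}^{n-1} p^{-j}$, which equals $D + nC$ when $p = 1$ and $D + \frac{C(1-p^n)}{p^n(1-p)}$ when $p < 1$; a short calculation from the closed form $A_k = \big(A_\infty + \tfrac{C}{1-p}\big) - p^k\big(A_\infty + \tfrac{C}{1-p} - D\big)$ for $p < 1$ (and $A_k = D + kC$ for $p = 1$) then gives $A_n = A_\infty$. With this choice of $A_0, \dots, A_n, A_\infty$ every inequality of the system holds, so it has a solution. The whole argument is elementary; the only mild subtlety --- and the reason the naive uniform guess fails --- is that the $A_k$'s must increase geometrically while $A_n$ still grows strictly more slowly than $A_\infty$, which is exactly what permits the final inequality to be satisfied.
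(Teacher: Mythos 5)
Your approach is essentially the paper's: saturate the first $n$ inequalities to equalities, observe that consecutive differences satisfy $A_{k+1}-A_k=p\cdot(A_k-A_{k-1})$ so the chain is monotone once $A_1\ge A_0$, and then choose $A_\infty$ to close the loop. The paper pins $A_\infty$ down by imposing one additional equation $(1-p)\cdot A_\infty+C+p\cdot A_n=A_\infty$, whereas you keep $A_\infty$ free and note that $A_n$ is affine in $A_\infty$ with slope $1-p^n<1$, so any sufficiently large $A_\infty$ works; that qualitative argument is complete and correct on its own.

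One concrete slip in your supplementary explicit witness: the two expressions you give for $A_\infty$ do not agree, and only one of them works. From the closed form $A_k=\bigl(A_\infty+\tfrac{C}{1-p}\bigr)-p^k\bigl(A_\infty+\tfrac{C}{1-p}-D\bigr)$, the equation $A_n=A_\infty$ forces $A_\infty=D+\tfrac{C(1-p^n)}{p^n(1-p)}=D+C\cdot\sum_{j=1}^{n}p^{-j}$, not $D+C\cdot\sum_{j=0}^{n-1}p^{-j}$; the latter equals $D+\tfrac{C(1-p^n)}{p^{n-1}(1-p)}$ and yields $A_n-A_\infty=C\cdot(1-p^n)\ge 0$, which violates $A_n\le A_\infty$ whenever $C>0$ and $p<1$. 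Since your main argument only needs ``$A_\infty$ sufficiently large,'' this does not invalidate the proof, but the displayed sum should start at $j=1$ and end at $j=n$ to match the fraction you (correctly) wrote next to it.
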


\begin{proof}[Proof Sketch]
First, we equate all inequalities but the last line above (i.e., $D=A_0\le A_1\le  \dots \le A_n \le A_\infty$), so that we directly get the solution as follows:
\begin{compactitem}
\item $A_\infty= \frac{1}{p^{n+1}}\cdot (\sum_{m=1}^{n+1} p^{m-1})\cdot C+D= \frac{1}{p^{n+1}}\cdot \frac{1-p^{n+1}}{1-p}\cdot C+D$,
\item $A_k= (\frac{1-p^{k}}{p^{n+1}}\cdot (\sum_{m=1}^{n+1} p^{m-1})+ (\sum_{m=1}^k p^{m-1}))\cdot C + D=  (\frac{1-p^{k}}{p^{n+1}}\cdot \frac{1-p^{n+1}}{1-p} +  \frac{1-p^{k}}{1-p})\cdot C + D$ for $1\le k\le n$.
\end{compactitem}
Then we check that the inequalities in the last line above hold.
See Appendix~\ref{app:avsen} for details.
\end{proof}

Now we state our main result for proving global linear sensitivity on non-expansive simple loops.

\begin{theorem}\label{thm:linear}
A non-expansive simple while loop $Q$ in the form~(\ref{eq:swl}) has expected linear-sensitivity
over its loop guard $\Sat{\Phi}$ if $Q$ has (i) bounded update,
(ii) a difference-bounded RSM-map with RSM-continuity and (iii) the
Lipschitz continuity in next-step termination.
In particular, we can choose $\theta=\frac{1}{M}$ in (\ref{eq:expaffsen}) where the parameter $M$ is from the RSM-continuity (Definition~\ref{def:LipconRSM1}).
\end{theorem}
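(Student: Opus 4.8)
The plan is to combine the finite partition $R_1,\dots,R_{n^*},R_\infty$ of $\Sat{\Phi}$ introduced above with the sensitivity coefficients produced by Proposition~\ref{prop:Ak}, and to verify the resulting bound by an induction on a step‑truncation of the output. Fix a program variable $z$; for an input $\pv''$ let $Z^{(N)}_{\pv''}$ (resp. $T^{(N)}_{\pv''}$) denote the value of $z$ (resp. number of iterations) after $\min\{T_{\pv''},N\}$ loop iterations. I would prove the uniform claim: for every $N$, every $\pv\in\Sat{\Phi}$ lying in $R_k$ (or $R_\infty$), and every $\pv'\in\Sat{\Phi}$ with $\delta:=\dist(\pv,\pv')\le\theta:=\tfrac1M$, one has $|\expv_{\pv}(Z^{(N)}_{\pv})-\expv_{\pv'}(Z^{(N)}_{\pv'})|\le A_k\cdot\dist(\pv,\pv')$ (resp. $\le A_\infty\cdot\dist(\pv,\pv')$), where $A_1\le\dots\le A_{n^*}\le A_\infty$ is the solution of the linear system of Proposition~\ref{prop:Ak} instantiated with $n:=n^*$, $D:=\tfrac1{D_1}$, $p:=\tfrac{\epsilon}{2c-\epsilon}$ (from Lemma~\ref{lemm:pd}) and a constant $C$ identified below. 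Letting $N\to\infty$ and using dominated convergence — integrability of the outputs coming, as in Theorem~\ref{thm:affine}, from bounded update~(B2) together with the finite expected termination time of Theorem~\ref{thm:rsmmaps1} — then gives expected linear‑sensitivity over $\Sat{\Phi}$ with $\theta=\tfrac1M$ and $A:=A_\infty$.

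The inductive step unfolds one coupled loop iteration of the two runs from $\pv,\pv'$ (same probabilistic branches and sampled values), producing $\pv_1=\updf(\loc,\pv,\sv)$ and $\pv'_1=\updf(\loc,\pv',\sv)$, and splits on whether each of $\pv_1,\pv'_1$ still satisfies $\Phi$. If both violate $\Phi$, non‑expansiveness and (\ref{metric}) bound the contribution by $\tfrac{\delta}{D_1}$. If exactly one of them, say $\pv_1$, stays in the loop (the non‑synchronous case), the choice $\theta=\tfrac1M$ is the crucial point: by RSM‑continuity~(B3), $|\eta(\pv_1)-\eta(\pv'_1)|\le M\dist(\pv_1,\pv'_1)\le M\delta\le1$, and $\eta(\pv'_1)\le0$ by~(A2), hence $\eta(\pv_1)\le1$; combining $\expv_{\pv_1}(T_{\pv_1})\le\tfrac{\eta(\pv_1)-K}{\epsilon}\le\tfrac{1-K}{\epsilon}$ (Theorem~\ref{thm:rsmmaps1}) with~(B2) bounds this contribution by the constant $\kappa:=\tfrac{d\,(1-K)}{D_1\,\epsilon}+\tfrac1{M\,D_1}$, while Lipschitz continuity in next‑step termination~(B4), averaged over $\loc$, gives this event probability at most $2L'\delta$, so it adds only $C\cdot\delta$ with $C:=2L'\kappa$. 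Finally, if both stay in the loop, non‑expansiveness keeps $\dist(\pv_1,\pv'_1)\le\delta\le\theta$ so the induction hypothesis at level $N$ applies to $(\pv_1,\pv'_1)$ with coefficient $A_{k(\pv_1)}$; here Lemma~\ref{lemm:pd} says that with probability at least $p$ we have $\eta(\pv_1)\le\eta(\pv)-\tfrac12\epsilon$, which together with~(A1) forces $\pv_1\in R_1\cup\dots\cup R_{k-1}$ (coefficient $\le A_{k-1}$), whereas on the complement we use only $A_{k(\pv_1)}\le A_\infty$. Averaging the cases and regrouping probabilities reproduces exactly the inequality $(1-p)A_\infty+C+p\,A_{k-1}\le A_k$ of Proposition~\ref{prop:Ak}, closing the step for $\pv\in R_k$. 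For $\pv\in R_\infty$ the same computation applies but the non‑synchronous and joint‑termination cases cannot occur in one step: since $\eta(\pv)\ge n^*\cdot\tfrac12\epsilon>c+1$ and $|\eta(\pv)-\eta(\pv')|\le1$, the difference‑bounded condition~(A4) gives $\eta(\pv_1),\eta(\pv'_1)>1>0$, so by~(A2) both re‑enter the loop, and the induction hypothesis (coefficient $\le A_\infty$ for any region) yields coefficient $\le A_\infty$. The base case $N=0$ is immediate since $|\pv[z]-\pv'[z]|\le\tfrac{\delta}{D_1}=D\cdot\delta\le A_k\cdot\delta$.

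I expect the main obstacle to be precisely the unbounded region $R_\infty$, whose sensitivity coefficient refers to itself through the recursion, so that a direct fixpoint argument is circular. This is the reason the proof is organized around the step‑truncated quantities $Z^{(N)}$ with an induction on $N$ — each level referring only to the previous one — and the reason the coefficients $A_1,\dots,A_{n^*},A_\infty$ must be supplied beforehand as a self‑consistent solution of the system in Proposition~\ref{prop:Ak} rather than being computed on the fly; a direct attempt to bound the local sensitivity ratio is hopeless because Theorem~\ref{thm:affine} only yields an $A'\delta+B'$ estimate, whose ratio to $\delta$ blows up as $\delta\to0$. A secondary technical point is the limit $N\to\infty$ together with integrability of the returned values, which, as in Theorem~\ref{thm:affine}, is handled by bounded update and the finite expected termination time guaranteed by the RSM‑map.
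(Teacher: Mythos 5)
Your proposal is correct and follows essentially the same route as the paper's proof: the same finite partition $R_1,\dots,R_{n^*},R_\infty$ with coefficients supplied by Proposition~\ref{prop:Ak}, the same induction on a step-truncation $n\mapsto \delta_n(\pv,\pv')$ with the identical four-way case split (your constant $\kappa$ coincides with the paper's $C'$, and your regrouping of the termination/non-synchronous/decrease probabilities is exactly the paper's $q_1+q_2+p'\ge p$ bookkeeping), the same use of $\theta=\tfrac1M$ together with (A4) to keep both runs inside the loop from $R_\infty$, and the same dominated-convergence passage to the limit. The only differences are immaterial choices of the constant $C$ (your $2L'\kappa$ versus the paper's $\max\{L'C',\tfrac1{D_1}\}$).
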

\begin{proof}[Proof Sketch]
Choose any program variable $z$.
Denote by $T,T'$ (resp. $Z_n,Z'_n$) the random variables for the number of loop iterations (resp. the value of $z$  at the $n$-th step),
from two close-by input program valuations $\pv,\pv'$,
respectively.
For each natural number $n\ge 0$, we define $\delta_n(\pv,\pv'):=\expv_{\pv}(Z_{T\wedge n})-\expv_{\pv'}(Z'_{T'\wedge n})$,
where the random variable $T\wedge n$ is defined as $\min\{T,n\}$ and $T'\wedge n$ likewise. We also define $\delta(\pv,\pv'):=\expv_{\pv}(Z_{T})-\expv_{\pv'}(Z'_{T'})$.
First, we prove from the Dominated Convergence Theorem that  $\lim\limits_{n\rightarrow \infty}\delta_n(\pv,\pv')=\delta(\pv,\pv')$.
Second, given a difference-bounded RSM-map $\eta$ with RSM-continuity, we construct the regions $R_k$'s ($1\le k\le n^*$) and $R_\infty$ as in the paragraph below Lemma~\ref{lemm:pd}, and solve $A_k$'s and $A_\infty$ from Proposition~\ref{prop:Ak}.
Third, based on the solved $A_k$'s and $A_\infty$, we prove by induction on $n\ge 0$ that for all $k\in\{1,\dots, n^*,\infty\}$ and all program valuations $\pv,\pv'$, we have $\delta_n(\pv,\pv')\le A_k\cdot \dist(\pv,\pv')$ when $\pv,\pv'\models \Phi$ and $\pv\in R_k$.
In the inductive proof, (i) we apply Lemma~\ref{lemm:pd} to tackle the regions $R_k$ ($1\le k\le n^*$)
and use the inequality $(1-p)\cdot A_\infty+ C+ p\cdot A_{k-1} \le  A_k$ from Proposition~\ref{prop:Ak} to prove the inductive case, and (ii) for $R_\infty$ we ensure the fact that starting from two close-by program valuations in $R_\infty$, the loop will not terminate after the current loop iteration from both the program valuations, as is guaranteed by (A1), (A2), (A4) and the RSM-continuity.
Finally, the result follows from taking the limit $n\rightarrow\infty$ and the fact that we have finitely many regions.
The detailed proof is put in Appendix~\ref{app:avsen}.
\end{proof}

\begin{remark}\label{rmk:rsmc2}
For expected linear sensitivity, one can relax the RSM-continuity as follows.
First, we require the relaxed condition in Remark \ref{rmk:rsmc1} to tackle the non-synchronous situation.
Second, we need the relaxed condition $\exists D>0.\forall \pv,\pv'. \left[(\pv,\pv'\models \Phi\wedge \dist(\pv,\pv')\le \theta\wedge \eta(\pv)>D) \Rightarrow \eta(\pv')>c\right]$,
so that the neighourbood around $\pv$ w.r.t the threshold $\theta$ will not lead to termination after one loop iteration.
\end{remark}

\begin{example}\label{va-roulette}
We now show that the mini-roulette variant in Figure~\ref{fig:example1}(right) is expected linear-sensitive in the program variable $w$ over its loop guard.
To show this, we construct the function $\eta(x,w)=2.45\cdot x-2.45$ with $\epsilon=1, K=-4.91$.
We also clarify the following points.
\begin{compactenum}
\item For any values $x_1,x_2$ to the program variable $x$ before a loop iteration and any $\loc\in\mathbf{L}$ that resolves the probabilistic branches, we have that $|(x_1+r_i)-(x_2+r_i)|=|x_1-x_2|$ after the loop iteration where the value of $r_i$ is decided by the executed branch and its distribution(i.e for branch $5$, $r_i:=r_5\sim unif(8,9)$). The same applies to the program variable $w$. Thus the loop body is non-expansive.
\item All increments to $x$ and $w$ are bounded, hence the loop has bounded update, which ensures (B2).
\item The loop guard $x\ge 1$ implies $\eta(x,w)=2.45\cdot x-2.45\ge 0$, thus (A1) is satisfied. When $x\ge 1$, $\loc
\in\mathbf{L}$ and $\updf(\loc, (x,w), -)<1$, we have $-4.91\le \eta(\updf(\loc, (x,w), -))\le 0$, ensuring (A2). When $x\ge 1$, $\loc
\in\mathbf{L}$, we have $\expv_{\sv,\loc}(\eta(\updf(\loc,(x,w),\sv)))\leq \eta(x,w)-1$, ensuring (A3).
Thus, $\eta$ is an RSM-map.
\item Given any values $x_1,x_2\ge 1$ and $w_1,w_2$ to the program variables $x,w$, we have $|\eta(x_1,w_1)-\eta(x_2,w_2)|=2.45\cdot |x_1-x_2|$. Thus $\eta$ has \emph{RSM-continuity}.
\item When $x\ge 1$, we have $|\eta(\updf(\loc,(x,w),\sv))-\eta(x,w)|\le |\eta(\updf(\loc,(x,w),r_5))-\eta(x,w)|= |2.45\cdot (x+r_5)-2.45-2.45\cdot x+2.45|\le 2.45\cdot 9=22.05$, ensuring (A4). Thus, $\eta$ is \emph{difference-bounded}.
\item Due to the fact that both the update function and the loop guard are affine, all sampling variables are bounded continuously-distributed, and the coefficients for the current sampling variables are not all zero in the loop guard of the next iteration, we can verify that the loop has the
    Lipschitz continuity in next-step termination
    by Lemma \ref{le:LipconT1}.
\end{compactenum}
Then by Theorem \ref{thm:linear}, we can conclude that this probabilistic program is expected linear-sensitive over its loop guard.
\end{example}

\section{Proving Expected Sensitivity for Expansive Simple While Loops}\label{sect:generalsensi}

In this section, we show how our sound approach for proving expected sensitivity of non-expansive loops can be extended to expansive simple while loops.
We first illustrate the main difficulty,
and then enhance RSM-maps to be difference-bounded and show how they can address the difficulty.

The main difficulty to handle expansive loops is that the difference between two program valuations may tend to infinity as the number of loop iterations increases.
For example, consider a simple while loop where at every loop iteration (i) the value of a program variable $z$ is tripled and (ii) the loop terminates  immediately after the current loop iteration with probability $\frac{1}{2}$.
Then given two different initial values $z',z''$ for $z$, we have that
\[
\textstyle \expv_{z'}(Z')-\expv_{z''}(Z'')=\sum_{n=1}^\infty \probm(T=n)\cdot 3^n\cdot |z'-z''|=\sum_{n=1}^\infty \left(\frac{3}{2}\right)^n\cdot |z'-z''|=\infty.
\]
where $Z',Z''$ are given by the same way of $Z,Z'$ as in~(\ref{eq:expaffsen}) and $T$ is the termination time random variable.
Thus the expected-sensitivity properties do not hold for this example, as the increasing speed of $z$ is higher than that for program termination.
To cope with this point, we consider again RSM-maps to be difference-bounded, as in Definition~\ref{def:diffbounded}.
The main idea is to use the exponential decrease from difference-bounded RSM-maps (Theorem~\ref{thm:rsmmaps} in Appendix~\ref{app:sect6}) to counteract the unbounded increase in the difference between input program valuations.

Below we illustrate the main result of this section. Recall that given a program valuation $\pv$, a radius $\rho>0$ and a metric $\dist$, we denote by $U_{\Phi,\dist}(\pv,\rho)$ the neighbourhood $\{\pv'\mid \pv'\models\Phi\wedge \dist(\pv,\pv')\le\rho\}$.


\begin{theorem}\label{thm:exp}
Consider a simple while loop $Q$ in the form (\ref{eq:swl})
that satisfies the following conditions:
\begin{compactitem}
\item the loop body $P$ is Lipschitz continuous with a constant $L$ specified in Definition~\ref{def:LipconLB1}, and has bounded update;
\item there exists a difference-bounded RSM-map $\eta$ for $Q$ with RSM-continuity and parameters $\epsilon, K, c$ from Definition~\ref{def:RSMmaps1} and Definition~\ref{def:diffbounded} such that $L<\mathrm{exp}({\frac{3\cdot\epsilon^2}{8\cdot c^2}})$.
\end{compactitem}
Then for any program valuation $\pv^*$ such that $\pv^*\models \Phi$ and $\eta(\pv^*)>0$, there exists a radius $\rho>0$ such that the loop $Q$ is expected affine-sensitive
over $U_{\Phi,\dist}(\pv^*,\rho)$.
In particular, we can choose in Definition~\ref{def:EAS} that
\begin{eqnarray*}
& & \textstyle A:= 2\cdot  A'\cdot L^N+2\cdot A'\cdot L^N\cdot \exp\left(-\frac{\epsilon\cdot \eta(\pv^*)}{8\cdot c^2}\right)\cdot\sum_{n=1}^\infty \left(L\cdot \exp\left(-\frac{3\cdot \epsilon^2}{8\cdot c^2}\right)\right)^{n}\\
& & \textstyle B:= 2\cdot B'+ 2\cdot B'\cdot \exp\left(-\frac{\epsilon\cdot \eta(\pv^*)}{8\cdot c^2}\right)\cdot \sum_{n=1}^\infty \exp\left(-\frac{3\cdot \epsilon^2}{8\cdot c^2}\cdot n\right)
\end{eqnarray*}
where $A'=\frac{d\cdot M+\epsilon}{D_1\cdot\epsilon}$, $B'=-\frac{d\cdot K}{D_1\cdot\epsilon}$ and $N=\lfloor 4\cdot\frac{\eta(\pv^*)}{\epsilon}\rfloor+1$, for which the parameters
$d,M,\epsilon,K,D_1$ are from Definition~\ref{def:RSMmaps1}, Definition~\ref{def:bu1}, Definition~\ref{def:LipconRSM1} and (\ref{metric}).
\end{theorem}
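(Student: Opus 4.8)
The plan is to reduce, neighbourhood by neighbourhood, to the affine-sensitivity argument of Theorem~\ref{thm:affine}, using the exponential tail supplied by a difference-bounded RSM-map to counteract the geometric blow-up $L^n$ of an expansive loop body. First I would fix a program variable $z$, take $\pv,\pv'\in U_{\Phi,\dist}(\pv^*,\rho)$ with $\delta:=\dist(\pv,\pv')$, and — through the integral expansion already used in Theorem~\ref{thm:affine} — couple the two executions so that the runs $\omega=\{\pv_n\}_{n\ge0}$ and $\omega'=\{\pv'_n\}_{n\ge0}$ with $\pv_0=\pv$, $\pv'_0=\pv'$ follow the \emph{same} probabilistic-branch resolutions and sampled valuations at every iteration. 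Let $T,T'$ be the termination-time random variables, $Z,Z'$ the random variables for the value of $z$ at termination (integrable by bounded update together with the finite expected termination time from Theorem~\ref{thm:rsmmaps1}), and let $\tau:=\min\{n\mid \pv_n\not\models\Phi\text{ or }\pv'_n\not\models\Phi\}$ be the first desynchronization step; Lipschitz continuity of the loop body gives $\dist(\pv_n,\pv'_n)\le L^n\delta$ for all $n\le\tau$.

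Next I would put $N:=\lfloor 4\eta(\pv^*)/\epsilon\rfloor+1$ and, using RSM-continuity, choose $\rho$ small enough that $0<\eta(\pv)\le\tfrac32\eta(\pv^*)$ for every $\pv\in U_{\Phi,\dist}(\pv^*,\rho)$. Since $N\cdot\epsilon$ then exceeds $\eta(\pv)$ by a definite margin, the concentration bound for difference-bounded RSM-maps (Theorem~\ref{thm:rsmmaps}) should yield a burn-in estimate $\probm_\pv(T> N)\le\exp(-\tfrac{\epsilon\eta(\pv^*)}{8c^2})$ together with a per-step post-burn-in tail $\probm_\pv(T> N+n)\le\exp(-\tfrac{\epsilon\eta(\pv^*)}{8c^2})\cdot\exp(-\tfrac{3\epsilon^2}{8c^2}\,n)$ for all $n\ge0$, and the same for $T'$. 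This is exactly where the hypothesis $L<\exp(3\epsilon^2/(8c^2))$ enters: it is precisely the condition $L\cdot\exp(-3\epsilon^2/(8c^2))<1$ that makes the geometric series occurring in the stated $A$ and $B$ converge.

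The core of the argument is then a per-step bound identical in form to the three-case analysis in the proof of Theorem~\ref{thm:affine}, but carried out at the desynchronization step $\tau=n$ with $L^n\delta$ in place of $\delta$: if both runs terminate at step $n$ then $|Z-Z'|\le L^n\delta/D_1$; if exactly one terminates, say $\pv'_n\not\models\Phi$, then (A2) forces $\eta(\pv'_n)\le0$, so RSM-continuity gives $\eta(\pv_n)\le M L^n\delta$, and bounded update plus the RSM upper bound on expected termination time give $|\expv[Z-Z'\mid\mathcal{F}_n]|\le \tfrac{L^n\delta}{D_1}+\tfrac{d}{D_1}\cdot\tfrac{M L^n\delta-K}{\epsilon}=A'L^n\delta+B'$. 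Hence $|\expv[\mathbf{1}_{\tau=n}(Z-Z')]|\le(\probm(T=n)+\probm(T'=n))\cdot(A'L^n\delta+B')$, and I would sum over $n$ by splitting at the burn-in length $N$: for $n\le N$ bound $L^n\le L^N$ and $\sum_{n\le N}(\probm(T=n)+\probm(T'=n))\le2$, giving the leading term $2(A'L^N\delta+B')$; for $n>N$ use the per-step tail to bound $\probm(T=n)+\probm(T'=n)$ by $2\exp(-\tfrac{\epsilon\eta(\pv^*)}{8c^2})\exp(-\tfrac{3\epsilon^2}{8c^2}(n-N))$, so the remaining contribution is a convergent geometric sum in $L\exp(-3\epsilon^2/(8c^2))<1$. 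Adding the two parts yields $|\expv_\pv(Z)-\expv_{\pv'}(Z')|\le A\cdot\delta+B$ with the stated $A,B$ — the factor $2$ arising, as in Theorem~\ref{thm:affine}, from replacing $\probm(T=n\vee T'=n)$ by $\probm(T=n)+\probm(T'=n)$ — and $\theta=\infty$ is admissible over the neighbourhood since the bound holds for every $\delta$.

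I expect the main obstacle to be the second step: extracting from Theorem~\ref{thm:rsmmaps} not merely the burn-in estimate after $N\approx 4\eta(\pv^*)/\epsilon$ steps but, more delicately, the \emph{per-step} post-burn-in decay rate $\exp(-3\epsilon^2/(8c^2))$, which requires instantiating the Azuma--Hoeffding-type estimate with the correct drift/variance bookkeeping, and then matching that rate against the worst-case geometric divergence $L^n$ of an expansive loop body — the inequality $L<\exp(3\epsilon^2/(8c^2))$ being exactly the break-even point. The remaining points (the rigorous integral expansion, the uniform choice of $\rho$ via RSM-continuity, and integrability of $Z,Z'$) proceed exactly as in Theorem~\ref{thm:affine}, and I would defer the full computation to the appendix.
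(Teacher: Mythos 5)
Your proposal is correct and follows essentially the same route as the paper's proof in Appendix~E: the same choice of neighbourhood radius via RSM-continuity (the paper takes $\rho=\frac{\eta(\pv^*)}{2\cdot M}$, giving exactly your bounds $\frac{1}{2}\cdot\eta(\pv^*)\le\eta(\pv)\le\frac{3}{2}\cdot\eta(\pv^*)$), the same three-case analysis with $L^n\cdot\delta$ replacing $\delta$ to obtain the per-step bound $A'\cdot L^n\cdot\delta+B'$, the same extraction of the per-step decay factor $\exp(-\frac{3\cdot\epsilon^2}{8\cdot c^2})$ from the ratio $p_{n+1}/p_n$ of the concentration bounds of Theorem~\ref{thm:rsmmaps}, and the same split of the sum at $N=\lfloor 4\cdot\frac{\eta(\pv^*)}{\epsilon}\rfloor+1$ with the hypothesis $L<\exp(\frac{3\cdot\epsilon^2}{8\cdot c^2})$ guaranteeing convergence of the tail geometric series. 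The step you flag as the main obstacle is indeed where the paper does its only nontrivial computation, and your sketch of it matches the paper's.
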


The proof resembles the one for Theorem~\ref{thm:affine} and compares $L$ with the exponential-decreasing factor $\mathrm{exp}({\frac{3\cdot\epsilon^2}{8\cdot c^2}})$, see Appendix~\ref{app:sect6} for the detailed proof. Note that in the statement of the theorem we do not care for $\theta$, this is because we have already restricted the threshold to the neighbourhood $U_{\Phi,\dist}(\pv^*,\rho)$.



Theorem~\ref{thm:exp} cannot be directly extended to linear sensitivity as the technique to derive linear sensitivity (e.g. Theorem~\ref{thm:linear}) requires non-expansiveness as an important prerequisite.
We leave a more detailed investigation of the expansive case (including the linear sensitivity) as a future work.

\noindent{\bf Summary of Prerequisites for Expected Sensitivity.} In Table~\ref{ex:requirements}, we summarize the prerequisites for expected sensitivity. The first column specifies the program type (i.e. non-expansive/expansive loops), the second column specifies the sensitivity type (i.e. expected affine/linear-sensitive) for the program, the third column specifies the related theorem for this expected sensitivity of the program, and the last column contains all the prerequisites of this expected sensitivity.

\begin{remark}
All our results cover the degenerate case where the number of loop iterations is fixed and bounded.
To see this, suppose that the number of loop iterations is fixed to be $n$, and there is a program variable $i$ that serves as the loop counter. Then we can choose $n-i$ as a (difference-bounded) RSM-map that is independent of the program variables other than $i$, so that our result for expected affine-sensitivity (Theorem~\ref{thm:affine}) holds directly for this degenerate case.
Furthermore, the condition (B4) is satisfied directly as the termination depends only on the loop counter $i$, so that our linear-sensitivity result (Theorem~\ref{thm:linear}) holds for the degenerate case;
for expansive loops, we even do not need to check whether $L<\exp(\frac{3\cdot\epsilon^2}{8\cdot c^2})$ in Theorem~\ref{thm:exp} as the number of loop iterations is bounded.
\end{remark}

\begin{table*}
\caption{Overview of Expected Sensitivity Results and Their Prerequisites}
\label{ex:requirements}
\centering
\begin{threeparttable}
\begin{tabular}{c|c|c|c}
\hline
\hline
Program & Type & Theorem &  Prerequisites \\
\hline
non-expansive & expected & Theorem~\ref{thm:affine}& basic prerequisites\tnote{*}  \\
& affine-sensitivity & & \\
\hline
non-expansive & expected   &Theorem~\ref{thm:linear} &  basic prerequisites\tnote{*}   \\
&linear-sensitivity & & additional prerequisites\tnote{**}  \\
\hline
 & expected  & &  $L<\mathrm{exp}({\frac{3\cdot\epsilon^2}{8\cdot c^2}})$ (Definition~\ref{def:LipconLB1})\\
expansive &affine-sensitivity &  Theorem~\ref{thm:exp} &  basic prerequisites\tnote{*}   \\
 &&&  difference-bounded condition (Definition~\ref{def:diffbounded})  \\
\hline
\hline
\end{tabular}
\begin{tablenotes}
\footnotesize
\item[*] basic:
bounded update (Definition~\ref{def:bu1}), RSM-map (Definition~\ref{def:RSMmaps1}), RSM-continuity (Definition~\ref{def:LipconRSM1})
\item[**] additional:
difference-bounded condition (Definition~\ref{def:diffbounded}),
Lipschitz continuity in next-step termination
(Definition~\ref{def:LipconT1})
\end{tablenotes}
\end{threeparttable}
\end{table*}

\section{Sequential Composition of Simple While Loops}\label{sect:seqcom}

In this section, we demonstrate the compositionality of our martingale-based approach for proving expected sensitivity of probabilistic programs. We follow the previous work~\cite{DBLP:journals/pacmpl/BartheEGHS18} to consider the sequential composition of probabilistic programs. We show that under the same side condition from ~\cite{DBLP:journals/pacmpl/BartheEGHS18}, our approach is compositional under sequential composition.

We first show that the compositionality under sequential composition does not hold in general. The main point is that if the output range of a preceding program $Q$ does not match the input range
over which the latter program $Q'$ is expected sensitive, then the global expected sensitivity for the sequential composition $Q;Q'$ may not hold.
A detailed example is as follows.

\begin{example}
Consider the sequential composition $Q=\mathbf{skip}; Q'$ where $Q'$ is the simple while loop from Remark~\ref{rmk:lpg}. We know that $\mathbf{skip}$ is expected sensitive over all input program valuations.
From Theorem~\ref{thm:affine} and Remark~\ref{rmk:lpg}, we have $Q'$ is expected affine-sensitive only over its loop guard.
Then the program $Q$ is not expected affine-sensitive over all input program valuations, as one can choose two input program valuations such that one satisfies the loop guard of $Q'$ but the other does not.
\end{example}

Thus, in order to ensure compositionality, we need to require that the output range of the preceding program should match the input sensitivity range of the latter program,
as is also required in ~\cite{DBLP:journals/pacmpl/BartheEGHS18}.
Under this side condition, we prove that our approach is compositional over sequential composition of non-expansive simple while loops.
Below for a probabilistic program $Q$, we denote by $\outp(Q)$ the set of all possible outcome program valuations after the execution of $Q$ under some input program valuation in
the satisfaction set of its loop guard.

\begin{theorem}
\label{thm:seqcom}
Consider a non-expansive simple while loop $Q$ with  bounded-update
and an RSM-map with RSM-continuity, and a general program $Q'$ that has expected  affine-sensitivity
over a subset $U$ of input program valuations with threshold $\theta$ in (\ref{eq:expaffsen}).
If $\outp(Q)\subseteq U$ and assuming integrability in (\ref{eq:expaffsen}), then the sequential composition $Q;Q'$ is expected affine-sensitive
over the satisfaction set of the loop guard of $Q$ with threshold $\theta$. 
\end{theorem}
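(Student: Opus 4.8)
The plan is to reduce the sequential composition to a single run of $Q$ whose ``payoff at termination'' is the expected output of $Q'$. Fix a program variable $z$, and let $g(\mathbf{w}) := \expv_{\mathbf{w}}(Z_{Q'})$ be the expected value of $z$ after running $Q'$ from valuation $\mathbf{w}$ (well defined and finite by the integrability assumed in~(\ref{eq:expaffsen})). If $\mathbf{W}_{\pv}$ denotes the random program valuation produced by $Q$ on input $\pv$ and $Z$ the value of $z$ after $Q;Q'$ on input $\pv$, then the law of total expectation gives $\expv_{\pv}(Z) = \expv_{\pv}(g(\mathbf{W}_{\pv}))$, and likewise for $\pv'$. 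Since $\pv\models\Phi$ forces $\mathbf{W}_{\pv}\in\outp(Q)\subseteq U$ almost surely, the map $h_{\pv} := \expv_{\pv}(Z)$ obeys exactly the integral recursion used in the proof of Theorem~\ref{thm:affine}, except that the boundary value at a terminal valuation is $g$ instead of the bare coordinate $\pv[z]$.

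Next I would replay the coupling argument of Theorem~\ref{thm:affine}: couple the two executions of $Q$ from $\pv$ and $\pv'$ so that they draw the same probabilistic branches and sampled values in every iteration, yielding runs $\{\pv_n\}$, $\{\pv'_n\}$ with $\dist(\pv_n,\pv'_n)\le\delta := \dist(\pv,\pv')$ (by non-expansiveness) for as long as both stay in $\Sat{\Phi}$. At the first step $n$ where one run leaves $\Sat{\Phi}$ there are two cases. \emph{Synchronous case} ($\pv_n,\pv'_n\models\neg\Phi$): both outputs are the respective divergence points, they lie in $\outp(Q)\subseteq U$, and $\dist(\pv_n,\pv'_n)\le\delta\le\theta$, so the expected affine-sensitivity of $Q'$ applies directly and the contribution to $|h_{\pv}-h_{\pv'}|$ is at most $A'\cdot\delta + B'$, where $A',B'$ are the sensitivity coefficients of $Q'$. \emph{Non-synchronous case} (say $\pv'_n\models\neg\Phi$, $\pv_n\models\Phi$): here $h_{\pv'_n}=g(\pv'_n)$ while $h_{\pv_n}=\expv_{\pv_n}(g(\mathbf{W}_{\pv_n}))$ still involves a full continued run of $Q$; I would bound $|h_{\pv_n}-g(\pv'_n)|$ by combining (i) $\dist(\pv_n,\pv'_n)\le\delta\le\theta$ together with $\eta(\pv_n)\le M\cdot\delta$ (from (A2) and RSM-continuity), (ii) the bounded-update bound (B2) with the RSM bound on the remaining expected termination time $\expv_{\pv_n}(T_{\pv_n})\le(\eta(\pv_n)-K)/\epsilon\le(M\delta-K)/\epsilon$ from Theorem~\ref{thm:rsmmaps1}, which controls how far the continued run of $Q$ can wander from $\pv_n$, and (iii) the expected affine-sensitivity of $Q'$ on $\outp(Q)\subseteq U$ applied to the endpoints; this yields a bound of the form $A_1\cdot\delta + B_1$ with $A_1,B_1$ depending only on $d,M,\epsilon,K,D_1,A',B'$.

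Finally I would sum the per-step contributions over all $n$, using $\probm(T_{\pv}=n\vee T_{\pv'}=n)\le\probm(T_{\pv}=n)+\probm(T_{\pv'}=n)$ and the finiteness of $\expv_{\pv}(T_{\pv})$, $\expv_{\pv'}(T_{\pv'})$ (Theorem~\ref{thm:rsmmaps1}) --- which also makes the ``both still inside'' case happen only finitely often almost surely and makes all the series and integrals converge --- to conclude $|\expv_{\pv}(Z)-\expv_{\pv'}(Z')|\le A''\cdot\dist(\pv,\pv')+B''$ for all $\pv,\pv'\models\Phi$ with $\dist(\pv,\pv')\le\theta$, i.e., expected affine-sensitivity of $Q;Q'$ over $\Sat{\Phi}$ with threshold $\theta$.

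The main obstacle is the non-synchronous case. Unlike in Theorem~\ref{thm:affine}, where the quantity compared at termination is a single coordinate of a non-expansive map, here it is the composite quantity $g(\mathbf{W})$, which is controlled by the expected affine-sensitivity of $Q'$ only within the threshold $\theta$. The side condition $\outp(Q)\subseteq U$ is exactly what guarantees that the continued run of $Q$ --- which may drift far from the other run's terminal valuation --- still lands where $Q'$ is sensitive; turning this into a quantitative estimate, by marrying the martingale and bounded-update bounds with the sensitivity of $Q'$, is the delicate step, and it is also why the threshold $\theta$ of $Q'$ is inherited unchanged by $Q;Q'$.
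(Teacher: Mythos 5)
Your proposal is correct and follows essentially the same route as the paper's proof in Appendix~\ref{app:seqcom}: the same coupling of the two runs of $Q$, the same three-case analysis at each step, and in the non-synchronous case the same combination of the triangle inequality, the bounded-update bound $\dist(\pv'_n,\pv'')\le d\cdot T_{\pv'_n}$, the RSM bound $\expv(T)\le(\eta-K)/\epsilon\le(M\delta-K)/\epsilon$, and the affine-sensitivity of $Q'$ applied at the two (possibly far-apart) endpoints, followed by the summation $\probm(T_\pv=n\vee T_{\pv'}=n)\le\probm(T_\pv=n)+\probm(T_{\pv'}=n)$. Your reformulation via the terminal payoff $g(\mathbf{w})=\expv_{\mathbf{w}}(Z_{Q'})$ and the law of total expectation is only a cosmetic repackaging of the paper's nested-expectation computation, and the delicacy you flag about the threshold $\theta$ in the non-synchronous case is present in the paper's own argument as well.
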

\begin{proof}[Proof Sketch]
The proof is basically an extension to the previous proof for Theorem~\ref{thm:affine}. We consider the same three cases from the previous proof, and use the expected affine-sensitivity from $Q'$ to derive the new sensitivity coefficients. The detailed proof is put in Appendix~\ref{app:seqcom}.
\end{proof}

Theorem~\ref{thm:seqcom} presents a general compositional result where the program $Q'$ can be an arbitrary probabilistic program.
By extending the proof for Theorem~\ref{thm:linear} in a similar way as from Theorem~\ref{thm:affine} to Theorem~\ref{thm:seqcom}, we can also derive a compositional result for expected linear-sensitivity.
However, we now need to consider $Q'$ as a sequential composition of simple while loops and impose \emph{linearity} on the RSM-maps.
(An RSM-map is \emph{linear} if it can be expressed as a linear combination of program variables and a possible constant term.)
Then for a sequential composition $Q=Q_1;\dots ;Q_n$ of simple while loops,
we require the condition (\ddag) that
(i) each $Q_i$ having bounded update and a \emph{linear} RSM-map $\eta_i$ that witnesses its expected linear-sensitivity (i.e., that satisfies the conditions (A1) -- (A4).) and (ii) $\outp(Q_i)\subseteq \Sat{\Phi_{i+1}}$ for all $i$, where
$\Phi_{i+1}$ is the loop guard of $Q_{i+1}$.
By extending the proof for Theorem~\ref{thm:linear} (see Appendix~\ref{app:seqcom} for the detailed proof), we establish the following theorem.

\begin{theorem}
\label{thm:seqcomlinear}
Consider a non-expansive simple while loop $Q$ with loop guard $\Phi$ that has (i) bounded-update,
(ii) a difference-bounded linear RSM-map with RSM-continuity, and (iii) the 
Lipschitz continuity in next-step termination.
Then for any sequential composition $Q'$ of simple while loops that (a) satisfies the condition (\ddag) (defined right before the theorem) and (b) has expected linear-sensitivity
over a subset $U$ of input program valuations,
if $\Sat{\Phi}\cup\outp(Q)\subseteq U$, then the sequential composition $Q;Q'$ is expected linear-sensitive
over the satisfaction set of the loop guard of $Q$. 
\end{theorem}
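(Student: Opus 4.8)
The plan is to adapt the proof of Theorem~\ref{thm:linear}, replacing the program variable $z$ tracked there by the function $g(\pv'') := \expv_{\pv''}(Z)$ which records the expected value of the output variable after running $Q'$ from an input valuation $\pv''$. By the law of total expectation, for every $\pv''\models\Phi$ we have $\expv_{\pv''}(Z_{Q;Q'}) = \expv_{\pv''}\!\bigl[\,g(\mathbf{W}_{\pv''})\,\bigr]$, where $\mathbf{W}_{\pv''}$ is the random valuation at which $Q$ terminates from $\pv''$; so it suffices to bound $\bigl|\expv_{\pv}[g(\mathbf{W}_{\pv})] - \expv_{\pv'}[g(\mathbf{W}_{\pv'})]\bigr| \le A\cdot\dist(\pv,\pv')$ for all $\pv,\pv'\models\Phi$ with $\dist(\pv,\pv')\le\theta$, where $\theta$ is taken as the minimum of $\tfrac1M$ (the threshold produced for $Q$ by Theorem~\ref{thm:linear}) and the threshold at which $Q'$ is linearly sensitive. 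Two facts about $g$ drive the argument: (i) $g$ is Lipschitz with constant $A_{Q'}$ (the linear-sensitivity coefficient of $Q'$) on pairs of valuations in $U$ within that threshold, which is hypothesis (b); and (ii) by the side condition $\Sat{\Phi}\cup\outp(Q)\subseteq U$, every valuation occurring during an execution of $Q$ (which lies in $\Sat{\Phi}$) and every valuation at which $Q$ terminates (which lies in $\outp(Q)$) belongs to $U$, so (i) applies to all of them.

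Next I would rebuild the finite-partition induction. Using the difference-bounded \emph{linear} RSM-map $\eta$ of $Q$ (linearity plus bounded update makes (A4) and RSM-continuity automatic with a uniform constant), form the regions $R_1,\dots,R_{n^*},R_\infty$ exactly as in the paragraph following Lemma~\ref{lemm:pd}, obtain the probability $p$ from Lemma~\ref{lemm:pd}, and solve the system of Proposition~\ref{prop:Ak} for coefficients $A_1,\dots,A_{n^*},A_\infty$ with $D := A_{Q'}$ and $C$ a constant dominating the per-iteration error terms below. One then proves by induction on the step count $n\ge 0$ that $\delta_n(\pv,\pv')\le A_k\cdot\dist(\pv,\pv')$ whenever $\pv,\pv'\models\Phi$ and $\pv\in R_k$, where $\delta_n$ is the stopped-process difference defined exactly as in the proof of Theorem~\ref{thm:linear} but with the tracked value being $g$ applied to the current valuation rather than its $z$-coordinate. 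The base case $n=0$ reads $|\delta_0(\pv,\pv')|=|g(\pv)-g(\pv')|\le A_{Q'}\dist(\pv,\pv')$, which is just hypothesis (b) and is not circular since $g$ refers to $Q'$ alone. The $n\to\infty$ limit of $\delta_n$ is the quantity to be bounded, justified by the Dominated Convergence Theorem, with integrability following from bounded update of $Q$, finiteness of $\expv(T)$ (from the RSM-map), and the linear sensitivity of $Q'$.

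The inductive step splits one iteration of $Q$ into the same cases as in Theorem~\ref{thm:linear}. In \emph{Case~1} (both executions of $Q$ terminate after this iteration) the two terminal valuations lie in $\outp(Q)\subseteq U$ at distance $\le\dist(\pv,\pv')\le\theta$ by non-expansiveness, so fact~(i) bounds the contribution by $A_{Q'}\cdot\dist(\pv,\pv')$. In \emph{Case~2} (exactly one terminates, say the $\pv$-side at $\pv_1\not\models\Phi$ while the $\pv'$-side continues from $\pv'_1\models\Phi$) the contribution $g(\pv_1)-\expv_{\pv'_1}[\cdots]$ is split into $|g(\pv_1)-g(\pv'_1)|$ (again $\le A_{Q'}\dist(\pv,\pv')$ by (i), since $\pv_1\in\outp(Q)\subseteq U$ and $\pv'_1\in\Sat{\Phi}\subseteq U$) plus the variation of $g$ along the continued $Q$-execution from $\pv'_1$; the latter is bounded, exactly as in Theorem~\ref{thm:linear}, by a quantity of order $\expv(T_{\pv'_1})$, using bounded update step-by-step along the execution path (each single step moves the valuation by at most $d$, with all intermediate valuations in $\Sat{\Phi}\cup\outp(Q)\subseteq U$, so (i) applies per step) and then $\expv(T_{\pv'_1})\le\tfrac{\eta(\pv'_1)-K}{\epsilon}$ together with $\eta(\pv'_1)\le M\cdot\dist(\pv,\pv')$ (from RSM-continuity and $\eta(\pv_1)\le 0$ by (A2)). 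This yields a bound linear in $\dist(\pv,\pv')$ plus a constant proportional to $-K$; the constant is absorbed because, by (B4), Case~2 occurs with probability at most $L'\cdot\dist(\pv,\pv')$, so its total contribution to $\delta_{n+1}$ is $O(\dist(\pv,\pv'))$ since $\dist(\pv,\pv')\le\theta$. In \emph{Case~3} (neither terminates) the new valuations are at distance $\le\dist(\pv,\pv')$ and land in some $R_j$, landing in $R_{k-1}$ with probability $\ge p$ by Lemma~\ref{lemm:pd}, so the induction hypothesis together with $(1-p)A_\infty+C+pA_{k-1}\le A_k$ from Proposition~\ref{prop:Ak} closes the case; for $\pv\in R_\infty$ one checks via (A1),(A2),(A4) and RSM-continuity that \emph{neither} execution terminates after this iteration, leaving only Case~3 there. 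Taking $n\to\infty$ and the maximum over the finitely many $A_k$'s gives the global expected linear-sensitivity of $Q;Q'$ over $\Sat{\Phi}$. Condition (\ddag) is what makes hypothesis (b) available: it lets one establish $Q'$'s linear sensitivity by applying this very argument recursively to $Q_1;(Q_2;\dots;Q_m)$, using the range-matching clauses $\outp(Q_i)\subseteq\Sat{\Phi_{i+1}}$ so each successor loop is always entered in the regime where its sensitivity holds, and using linearity of each $\eta_i$ to obtain a single positive threshold for all of $Q'$, with the base case the empty composition ($\mathbf{skip}$).

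The main obstacle I expect is the combination, in Case~2, of two demands: first, keeping the bound \emph{linear} (no additive $B$), i.e.\ showing that every would-be additive constant is ultimately multiplied by a probability that is itself $O(\dist(\pv,\pv'))$ — which is precisely the role of condition (B4) and of the finite-partition trick of Proposition~\ref{prop:Ak}; and second, controlling the variation of $g$ over a possibly long continued execution of $Q$ using only the \emph{local} Lipschitz property of $Q'$, which forces the step-by-step decomposition along the execution path, the bookkeeping that keeps every intermediate valuation inside $U$, and (when $Q$'s single-step bound $d$ exceeds $Q'$'s threshold) either a further subdivision or the mild compatibility assumption $d\le\theta$. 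Everything else is a faithful re-run of the proof of Theorem~\ref{thm:linear} with the output functional $g\circ\mathbf{W}$ in place of the program variable $z$.
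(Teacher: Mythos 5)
Your proposal is correct and follows essentially the same route as the paper's proof: replace the tracked program variable by the functional $g(\pv'')=\expv_{\pv''}(Z_{Q'})$, re-run the finite-partition induction of Theorem~\ref{thm:linear} with base case $A_0:=A_{Q'}$, handle the non-synchronous case via (B4) and the bounded-update/RSM bound on $\expv(T)$, and close the induction with Proposition~\ref{prop:Ak}; the paper likewise uses condition (\ddag) (via an auxiliary lemma giving a linear bound on $\expv_{\pv''}(|Z_{Q'}|)$ from the linear RSM-maps of $Q'$) to justify the Dominated Convergence step, which is the one ingredient you gesture at only loosely.
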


By an iterated application of Theorem~\ref{thm:seqcom} and Theorem~\ref{thm:seqcomlinear} (i.e., loop-by-loop), we obtain directly the compositionality over sequential composition of non-expansive simple  while loops.

\begin{remark}[Expansive Loops]
Up till now we only consider non-expansive loops. The main issue arising from expansive loops is that the expected sensitivity is restricted to a small neighbourhood of a fixed input program valuation, and the sensitivity coefficients often depend on the input program valuation. 
More precisely, these coefficients may be exponential in general (see Theorem~\ref{thm:exp}). Thus, compositionality for expansive loops 
depends on the exact post probability distribution after the execution of the preceding loops.
To overcome this difficulty, new technique needs to be developed and we plan it as a future work.
\end{remark}

\begin{remark}[Comparison with~\cite{DBLP:journals/pacmpl/BartheEGHS18}]
A similar compositional result is established in~\cite[Proposition 4.3]{DBLP:journals/pacmpl/BartheEGHS18} for loops with a fixed number of loop iterations.
The approach is similar to ours as it also considers sequential composition and requires that the output range of the preceding loop should match the input sensitivity range of the latter loop, and
treats each individual program in the sequential composition separately.
The only difference is that they prove directly that their coupling-based sensitivity has the compositional property regardless of the detailed program structure, while our approach requires an explicit RSM-map for each loop.
This is however due to the fact that our approach considers the more complex situation that the loop iterations are randomized and depend on the input program valuation.
\end{remark}

\begin{remark}[Compositionality]
We would like to note that the level of compositionality depends on the \emph{side condition} in an approach.
Some authors insist that compositionality should require no side condition, while other authors allow side conditions~\cite{DBLP:conf/compos/KupfermanV97}.
Our approach, like the approach in~\cite{DBLP:journals/pacmpl/BartheEGHS18}, has the least side condition, as we only require that the output range of the preceding program matches the input sensitivity range of the latter.
Our result is also different from the original one in~\cite{DBLP:journals/pacmpl/BartheEGHS18} as in our case we need to tackle the non-trivial point of non-synchronicity (see Example~\ref{exx:running}).
\end{remark}

\newcommand{\fnames}{\mathit{F}}
\newcommand{\fn}[1]{\mathsf{#1}}

\newcommand{\Val}{\mbox{\sl Val}}
\newcommand{\PV}{V}
\newcommand{\RV}{U}
\newcommand{\ndc}{\ell}

\newcommand{\locs}{\mathit{L}}
\newcommand{\blocs}{\mathit{L}_{\mathrm{b}}}
\newcommand{\alocs}{\mathit{L}_{\mathrm{a}}}
\newcommand{\plocs}{\mathit{L}_{\mathrm{p}}}
\newcommand{\tlocs}{\mathit{L}_{\mathrm{t}}}
\newcommand{\Alocs}[1]{\mathit{L}_{\mathrm{A}}^{\mathsf{#1}}}
\newcommand{\Dlocs}{\mathit{L}_{\mathrm{nd}}}
\newcommand{\transitions}{{\rightarrow}}

\newcommand{\assgn}[2]{\left[#1/#2\right]}
\newcommand{\lin}{\loc_\mathrm{in}}
\newcommand{\lout}{\loc_\mathrm{out}}
\newcommand{\val}[1]{\mbox{\sl Val}_{#1}}
\newcommand{\samples}{\val{}^\mathrm{r}}
\newcommand{\id}{\mbox{\sl id}}

\newcommand{\dpd}{q}
\newcommand{\supp}[1]{{\mathrm{supp}}{\left(#1\right)}}

\newcommand{\condexpv}[2]{{\expv}{\left({#1}{\mid}{#2}\right)}}
\newcommand{\condvar}[2]{{\mbox{\sl Var}}{\left({#1}{\mid}{#2}\right)}}
\newcommand{\infruns}{\Lambda}

\newcommand{\last}[1]{{#1}{\downarrow}}
\newcommand{\sat}[1]{\langle #1 \rangle}
\newcommand{\monoid}{\mbox{\sl Monoid}}

\newcommand{\configs}{\mathcal{C}}
\newcommand{\stackelems}{\mathcal{E}}

\newcommand{\sampfunc}{\Upsilon}
\newcommand{\sampdpd}{\overline{\Upsilon}}

\newcommand{\enabled}[1]{\mathrm{En}(#1)}
\newcommand{\cyl}{\mbox{\sl Cyl}}

\newcommand{\POLYS}[1]{{\mathfrak{R}}{\left[#1\right]}}
\newcommand{\DNF}[1]{\mathrm{DNF}(#1)}
\newcommand{\MS}[1]{\mathrm{MS}(#1)}
\newcommand{\TRUE}{\mbox{\sl true}}
\newcommand{\FALSE}{\mbox{\sl false}}
\newcommand{\SAT}[1]{\mathsf{Sat}\left({#1}\right)}
\newcommand{\pre}{\mathrm{pre}}

\newcommand{\infval}{\mbox{\sl infval}}
\newcommand{\supval}{\mbox{\sl supval}}
\newcommand{\handelmanformat}{(\dagger)}

\newcommand{\pspace}{(\Omega,\mathcal{F},\probm)}
\newcommand{\setR}{\mathbb{R}}
\newcommand{\setN}{\mathbb{N}}
\newcommand{\setZ}{\mathbb{Z}}
\newcommand{\run}{\{ (\loc_n, \pv_n) \}_{n=0}^\infty}

\newcommand{\px}[1]{{\color{blue} Peixin}:~{\color{red}#1}}
\newcommand{\rd}[1]{{#1}}

\section{An Automated Approach through RSM-synthesis Algorithms} \label{sec:computational}


In this section, we describe an automated algorithm that, given a non-expansive probabilistic loop $Q$ in the form~(\ref{eq:swl}),
synthesizes an RSM-map  with extra conditions required for proving expected sensitivity.
We consider affine programs whose loop guard and update function are affine,
and linear templates for an RSM-map.
Our algorithm runs in polynomial time and reduces the problem of RSM-map synthesis to linear programming by applying Farkas' Lemma.
We strictly follow the framework from previous synthesis algorithms~\cite{DBLP:journals/toplas/ChatterjeeFNH18,SriramCAV,DBLP:conf/cav/ChatterjeeFG16,ijcai18,ChatterjeeNZ2017,DBLP:conf/pldi/Wang0GCQS19,DBLP:conf/atva/FengZJZX17}.
As our synthesis framework is not novel, we describe only the essential details of the algorithm.

We first recall Farkas' Lemma.

\smallskip
\begin{theorem}[Farkas' Lemma~\cite{FarkasLemma}]\label{thm:farkas}
Let $\mathbf{A}\in\mathbb{R}^{m\times n}$, $\mathbf{b}\in\mathbb{R}^m$, $\mathbf{c}\in\mathbb{R}^{n}$ and $d\in\mathbb{R}$.
Suppose that $\{\mathbf{x}\in\mathbb{R}^n\mid \mathbf{A}\mathbf{x}\le \mathbf{b}\}\ne\emptyset$.
Then
$\{\mathbf{x}\in\mathbb{R}^n\mid \mathbf{A}\mathbf{x}\le \mathbf{b}\}\subseteq \{\mathbf{x}\in\mathbb{R}^n\mid \mathbf{c}^{\mathrm{T}}\mathbf{x}\le d\}$
iff there exists $\mathbf{y}\in\mathbb{R}^m$ such that $\mathbf{y}\ge \mathbf{0}$, $\mathbf{A}^\mathrm{T}\mathbf{y}=\mathbf{c}$ and $\mathbf{b}^{\mathrm{T}}\mathbf{y}\le d$.
\end{theorem}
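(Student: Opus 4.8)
The plan is to prove the two directions of Theorem~\ref{thm:farkas} separately. The ``if'' direction is an immediate one-line computation, while the ``only if'' direction I would derive from the homogeneous (cone) form of Farkas' Lemma by a standard homogenization trick.

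For the ``if'' direction, suppose $\mathbf{y}\ge\mathbf{0}$ satisfies $\mathbf{A}^{\mathrm{T}}\mathbf{y}=\mathbf{c}$ and $\mathbf{b}^{\mathrm{T}}\mathbf{y}\le d$. Then for every $\mathbf{x}$ with $\mathbf{A}\mathbf{x}\le\mathbf{b}$ we have $\mathbf{c}^{\mathrm{T}}\mathbf{x}=\mathbf{y}^{\mathrm{T}}\mathbf{A}\mathbf{x}\le\mathbf{y}^{\mathrm{T}}\mathbf{b}=\mathbf{b}^{\mathrm{T}}\mathbf{y}\le d$, where the middle inequality uses $\mathbf{y}\ge\mathbf{0}$ together with $\mathbf{A}\mathbf{x}\le\mathbf{b}$ coordinatewise. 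Hence $\{\mathbf{x}\mid\mathbf{A}\mathbf{x}\le\mathbf{b}\}\subseteq\{\mathbf{x}\mid\mathbf{c}^{\mathrm{T}}\mathbf{x}\le d\}$. Note this direction does not use the nonemptiness hypothesis.

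For the ``only if'' direction, I would invoke the homogeneous form of Farkas' Lemma: for any matrix $\mathbf{M}$ and vector $\mathbf{q}$, exactly one of the following holds --- (I)~there is $\mathbf{z}\ge\mathbf{0}$ with $\mathbf{M}\mathbf{z}=\mathbf{q}$, or (II)~there is $\mathbf{w}$ with $\mathbf{M}^{\mathrm{T}}\mathbf{w}\le\mathbf{0}$ and $\mathbf{q}^{\mathrm{T}}\mathbf{w}>0$. I then apply this with
\[
\mathbf{M}=\begin{pmatrix}\mathbf{A}^{\mathrm{T}}&\mathbf{0}\\ \mathbf{b}^{\mathrm{T}}&1\end{pmatrix},\qquad \mathbf{q}=\begin{pmatrix}\mathbf{c}\\ d\end{pmatrix},
\]
so that alternative (I) --- the existence of a nonnegative $(\mathbf{y},\lambda)$ with $\mathbf{A}^{\mathrm{T}}\mathbf{y}=\mathbf{c}$ and $\mathbf{b}^{\mathrm{T}}\mathbf{y}+\lambda=d$, whence $\mathbf{b}^{\mathrm{T}}\mathbf{y}\le d$ --- is precisely the conclusion sought. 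It therefore suffices to rule out alternative (II). Writing a putative witness as $\mathbf{w}=(\mathbf{u},\mu)$, condition (II) reads $\mathbf{A}\mathbf{u}+\mu\mathbf{b}\le\mathbf{0}$, $\mu\le0$, and $\mathbf{c}^{\mathrm{T}}\mathbf{u}+d\mu>0$. If $\mu<0$, set $\mathbf{x}^\star:=\mathbf{u}/(-\mu)$; dividing the first inequality by $-\mu>0$ gives $\mathbf{A}\mathbf{x}^\star\le\mathbf{b}$, and dividing $\mathbf{c}^{\mathrm{T}}\mathbf{u}>-d\mu$ by $-\mu$ gives $\mathbf{c}^{\mathrm{T}}\mathbf{x}^\star>d$, contradicting the hypothesised inclusion. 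If $\mu=0$, then $\mathbf{A}\mathbf{u}\le\mathbf{0}$ and $\mathbf{c}^{\mathrm{T}}\mathbf{u}>0$; picking any $\mathbf{x}_0$ in the set $\{\mathbf{x}\mid\mathbf{A}\mathbf{x}\le\mathbf{b}\}$ (nonempty by hypothesis), the ray $\mathbf{x}_0+t\mathbf{u}$ for $t\ge0$ stays feasible while $\mathbf{c}^{\mathrm{T}}(\mathbf{x}_0+t\mathbf{u})\to+\infty$, again contradicting the inclusion. Hence (II) is impossible, so (I) holds and we are done.

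The only substantive ingredient is the homogeneous Farkas lemma itself, whose proof in turn rests on the closedness of the finitely generated cone $\{\mathbf{M}\mathbf{z}\mid\mathbf{z}\ge\mathbf{0}\}$ combined with the separating hyperplane theorem, or alternatively can be obtained directly by Fourier--Motzkin elimination on the columns of $\mathbf{M}$; I would simply cite it as standard. Everything else --- the homogenization and the two short case distinctions --- is routine bookkeeping. It is worth observing that the nonemptiness assumption on $\{\mathbf{x}\mid\mathbf{A}\mathbf{x}\le\mathbf{b}\}$ is used exactly once, in the case $\mu=0$, and it is genuinely needed: without it the inclusion would hold vacuously while no certificate $\mathbf{y}$ need exist.
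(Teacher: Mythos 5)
The paper does not prove this statement at all: it is quoted as a classical result with a citation to the literature, so there is no in-paper proof to compare against. Your argument is correct and complete. The ``if'' direction is the standard one-line computation $\mathbf{c}^{\mathrm{T}}\mathbf{x}=\mathbf{y}^{\mathrm{T}}\mathbf{A}\mathbf{x}\le\mathbf{y}^{\mathrm{T}}\mathbf{b}\le d$, and the ``only if'' direction is the standard homogenization: your matrix $\mathbf{M}$ is set up so that alternative (I) of the homogeneous Farkas alternative yields exactly the certificate $(\mathbf{y},\lambda)$ with $\mathbf{A}^{\mathrm{T}}\mathbf{y}=\mathbf{c}$ and $\mathbf{b}^{\mathrm{T}}\mathbf{y}=d-\lambda\le d$, and both cases of a putative witness for alternative (II) ($\mu<0$ giving a feasible point violating $\mathbf{c}^{\mathrm{T}}\mathbf{x}\le d$ after rescaling, and $\mu=0$ giving an unbounded recession direction from any feasible $\mathbf{x}_0$) genuinely contradict the hypothesised inclusion. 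Your observation that the nonemptiness hypothesis is used exactly in the $\mu=0$ case, and is indispensable (for an infeasible system the inclusion is vacuous but no multiplier $\mathbf{y}$ need exist), is also correct and worth stating, since the paper's statement includes that hypothesis for precisely this reason.
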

Intuitively, Farkas' Lemma transforms the inclusion problem of a nonempty polyhedron within a halfspace into a feasibility problem of a system of linear inequalities.
As a result, one can decide the inclusion problem in polynomial time through linear programming.

\noindent{\bf The RSM-synthesis Algorithm.} Our algorithm has the following four steps:
\begin{compactenum}
\item {\em Template.} The algorithm sets up a column vector $\mathbf{a}$ of $|\pvars{}|$ fresh variables and a fresh scalar variable $b$
such that the template for an RSM-map $\eta$ is
$\eta(\pv)=
\mathbf{a}^{\mathrm{T}}\cdot\pv+b$. Note that since we use linear templates, the RSM-continuity condition is naturally satisfied.
\item {\em Constraints on $\mathbf{a}$ and $b$.} The algorithm first encodes the condition (A1) for the template $\eta$ as the inclusion assertion
$\{\pv\mid \pv\models\Phi\}\subseteq \{\pv\mid \mathbf{c}_1^{\mathrm{T}}\cdot\pv\le d_1\}$
where $\mathbf{c}_1, d_1$ are unique linear combinations of unknown coefficients $\mathbf{a},b$ satisfying that $\mathbf{c}_1^{\mathrm{T}}\cdot\pv\le d_1\Leftrightarrow\eta(\pv)\geq 0$.
Next, the algorithm encodes the condition (A2) as the inclusion assertion
$\{(\pv,\sv)\mid \pv\models \Phi \wedge \updf(\loc,\pv,\sv)\not\models\Phi\} \subseteq \{(\pv,\sv)\mid K\leq \eta(\updf(\loc,\pv,\sv))\leq 0\}$
parameterized with $\mathbf{a},b,K$ for every $\ell\in \mathbf{L}$,
where $K$ is a fresh unknown constant.
Then the algorithm encodes (A3) as
$\{\pv\mid \pv\models\Phi\}\subseteq \{\pv\mid \mathbf{c}_2^{\mathrm{T}}\cdot\pv\le d_2\}$
where $\mathbf{c}_2, d_2$ are unique linear combinations of unknown coefficients $\mathbf{a},b$ satisfying that $\mathbf{c}_2^{\mathrm{T}}\cdot\pv\le d_2\Leftrightarrow  \expv_{\sv,\loc}(\eta(\updf(\loc,\pv,\sv)))\leq \eta(\pv)-\epsilon$.
The algorithm can also encode (A4) as
$\{(\pv,\sv)\mid \pv\models\Phi\}\subseteq \{(\pv,\sv)\mid |\eta(\updf(\loc,\pv,\sv))-\eta(\pv)|\le c\}$
parameterized with $\mathbf{a},b,c$ for every $\ell\in \mathbf{L}$,
where $c$ is a fresh unknown constant.
All the inclusion assertions (with parameters $\mathbf{a},b,K,\epsilon,c$) are grouped \emph{conjunctively} so that these inclusions should all hold.
\item {\em Applying Farkas' Lemma.} The algorithm applies Farkas' Lemma to all the inclusion assertions 
from the previous step and obtains a system of linear inequalities
with the parameters $\mathbf{a},b,K,\epsilon,c$, where we over-approximate all strict inequalities (with `$<$') by non-strict ones (with `$\le$').
\item {\em Constraint Solving.} The algorithm calls a linear programming (LP) solver on the linear program consisting of the system of linear inequalities generated in the previous step.
\end{compactenum}
Besides the RSM-synthesis, we guarantee the non-expansiveness either directly from the structure of the program or by manually inspection (note that it can also be verified automatically through SMT solvers on the first order theory of reals). We check the bounded-update condition by a similar application of Farkas' Lemma (but without unknown parameters), and the 
Lipschitz continuity in next-step termination
by Lemma~\ref{le:LipconT1}.
If the output of the algorithm is successful, i.e.~if the obtained system of linear inequalities is feasible, then the solution to the LP obtains concrete values for $\mathbf{a},b,K,\epsilon,c$ and leads to a concrete (difference-bounded) RSM-map $\eta$.

As our algorithm is based on LP solvers, we obtain polynomial-time complexity of our algorithm.

\begin{theorem}
Our RSM-synthesis algorithm has polynomial-time complexity. 
\end{theorem}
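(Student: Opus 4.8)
The plan is to bound the running time of each of the four steps of the algorithm (plus the auxiliary checks) in terms of the input size --- the affine loop guard $\Phi$ given as a DNF, the affine update functions $\updf(\loc,\cdot,\cdot)$ for $\loc\in\mathbf{L}$, and the linear template --- and to observe that every step is either purely syntactic or reduces to a single linear program over polynomially many variables and constraints; the final bound then follows by composing these polynomial bounds.

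First, Step~1 introduces only $|\pvars|+1$ fresh unknowns $\mathbf{a},b$ (together with the scalar unknowns $K,\epsilon,c$ used later), which is linear in the input. For Step~2, the key observation I would use is that composing the linear template $\eta(\pv)=\mathbf{a}^{\mathrm{T}}\pv+b$ with an affine update $\updf(\loc,\pv,\sv)=\mathbf{B}\pv+\mathbf{C}\sv+\mathbf{c}$ yields $\eta(\updf(\loc,\pv,\sv))=(\mathbf{a}^{\mathrm{T}}\mathbf{B})\pv+(\mathbf{a}^{\mathrm{T}}\mathbf{C})\sv+(\mathbf{a}^{\mathrm{T}}\mathbf{c}+b)$, whose coefficients are \emph{affine} in the unknowns $\mathbf{a},b$. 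Consequently each of (A1)--(A4) and the bounded-update condition can be written, after splitting $\Phi$ into its DNF disjuncts and ranging over $\loc\in\mathbf{L}$, as a conjunction of polytope-in-halfspace inclusions $\{\mathbf{x}\mid\mathbf{A}\mathbf{x}\le\mathbf{d}\}\subseteq\{\mathbf{x}\mid\mathbf{c}_\star^{\mathrm{T}}\mathbf{x}\le d_\star\}$ in which the polytope data $\mathbf{A},\mathbf{d}$ are constants coming from the affine guard, while $\mathbf{c}_\star,d_\star$ depend affinely on $\mathbf{a},b,K,\epsilon,c$. The number of these inclusions is polynomial --- at most a constant per disjunct of $\Phi$ and per element of $\mathbf{L}$ --- and each has size polynomial in $|\pvars|,|\rvars|$ and the number of guard atoms.

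For Step~3 I would apply Farkas' Lemma (Theorem~\ref{thm:farkas}) to each of the polynomially many inclusions: an inclusion with an $m\times n$ polytope matrix produces $m$ fresh dual variables $\mathbf{y}\ge\mathbf{0}$ and the constraints $\mathbf{A}^{\mathrm{T}}\mathbf{y}=\mathbf{c}_\star$ and $\mathbf{d}^{\mathrm{T}}\mathbf{y}\le d_\star$, which are \emph{linear} in $(\mathbf{y},\mathbf{a},b,K,\epsilon,c)$ precisely because $\mathbf{A},\mathbf{d}$ are constant and only the halfspace side carries the unknowns. Over-approximating strict inequalities by non-strict ones costs nothing, so the whole conjunction becomes one linear program of polynomial size, which Step~4 solves in polynomial time by any polynomial-time LP algorithm (ellipsoid or interior-point). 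The remaining obligations are no harder: non-expansiveness and bounded update are validity statements over affine data and reduce --- again via Farkas' Lemma --- to polynomial-size LP feasibility queries with no parameters on the halfspace side, and Lipschitz continuity in next-step termination is settled by the purely syntactic test of Lemma~\ref{le:LipconT1}.

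I expect the main obstacle to be the bookkeeping around the Boolean structure of the problem: one has to argue that splitting $\Phi$ into DNF disjuncts, taking the complement $\updf(\loc,\pv,\sv)\not\models\Phi$ for condition (A2), and enumerating the branch-resolution set $\mathbf{L}$ do not cause a super-polynomial blow-up, which is why we keep the guard (and its complement) in a fixed DNF representation as part of the input, and why the affineness hypothesis is genuinely needed --- it is exactly what guarantees that, after the Farkas reduction, every generated constraint is still linear in the unknowns, so that the final problem is an honest linear program rather than a more general (and potentially intractable) feasibility problem.
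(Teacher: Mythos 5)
Your proposal is correct and takes essentially the same route as the paper, whose entire justification is the single observation that the algorithm reduces, via Farkas' Lemma, to a polynomial-size linear program solvable by a polynomial-time LP algorithm; you have simply filled in the details (affineness of $\eta\circ\updf$ in the unknowns, constant polytope data on the left of each inclusion so that the Farkas dual constraints stay linear) that the paper leaves implicit. The only caveat worth keeping in mind is that the enumeration over $\mathbf{L}$ you flag is measured against $|\mathbf{L}|$ itself, which the paper likewise treats as part of the input size rather than bounding in terms of the program text.
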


\begin{example}\label{al:simple}
Consider the mini-roulette example showed in Figure~\ref{fig:example1}(left) (Page~\pageref{fig:example1}).
\begin{compactenum}
\item The algorithm sets a linear template $\eta(x,w):=a_{1}\cdot x+a_2\cdot w+a_3$.
\item The algorithm encodes the conditions (A1)--(A3) as the inclusion assertions:
$$
\begin{matrix*}[l]
(A1) & \{(x,w)\mid x\ge 1\wedge w\ge 0\}\subseteq \{(x,w)\mid -a_1\cdot x-a_2\cdot w\le a_3\} \\
(A2) & \{(x,w)\mid x\ge 1\wedge w\ge 0\wedge x<2\}\subseteq \{(x,w)\mid K\le a_1\cdot (x-1)+a_2\cdot w+a_3\le 0\} \\
(A3) & \{(x,w)\mid x\ge 1\wedge w\ge 0\}\subseteq \{(x,w)\mid 0\le\frac{1}{13}a_1-\frac{4}{5}a_2-\epsilon\} \\
\end{matrix*}
$$
\item The algorithm applies Farkas' Lemma to all the inclusion assertions generated in the previous step and obtains a system of linear inequalities
involving the parameters $a_1,a_2,a_3,K,\epsilon$, where we over-approximate all strict inequalities (with `$<$') by non-strict ones (with `$\le$').
\item The algorithm calls a linear programming (LP) solver on the linear program consisting of the system of linear inequalities generated in the previous step.
\end{compactenum}
Finally, the algorithm outputs an optimal answer $\eta(x)=13\cdot x-13$ with $\epsilon=1, K=-13$(see Example~\ref{ex:roulette2}). We can verify this $\eta$ is an RSM-map with RSM-continuity. Due to the fact that this loop is non-expansive and has bounded-update, we can conclude that this loop is expected affine-sensitive over its loop guard by Theorem \ref{thm:affine}.
\end{example}

\begin{example}
Consider the mini-roulette variant example showed in Figure~\ref{fig:example1}(right) (Page~\pageref{fig:example1}).
\begin{compactenum}
\item The algorithm sets a linear template $\eta(x,w):=a_{1}\cdot x+a_2\cdot w+a_3$.
\item The algorithm encodes the conditions (A1)--(A4) as the inclusion assertions:
$$
\begin{matrix*}[l]
(A1) & \{(x,w)\mid x\ge 1\wedge w\ge 0\}\subseteq \{(x,w)\mid -a_1\cdot x-a_2\cdot w\le a_3\} \\
(A2) & \{\left((x,w),r_6\right)\mid x\ge 1\wedge w\ge 0\wedge x<3\} \subseteq \{\left((x,w),r_6\right)\mid K\le a_1\cdot (x-r_6)+a_2\cdot w+a_3\le 0\} \\
(A3) & \{(x,w)\mid x\ge 1\wedge w\ge 0\}\subseteq \{(x,w)\mid 0\le\frac{53}{130}a_1-\frac{4}{5}a_2-\epsilon\} \\
\end{matrix*}
$$\vspace{-2mm} $$
\begin{matrix*}[l]
(A4)&\{(x,w)\mid x\ge 1\wedge w\ge 0\}\subseteq &\{(x,w)\mid |a_1\cdot r_1+2a_2|\le c\wedge|a_2\cdot r_2+3a_2|\le c\wedge |a_3\cdot r_3+4a_2|\le c \\
   &  &\wedge |a_4\cdot r_4+5a_2|\le c\wedge |a_5\cdot r_5+6a_2|\le c\wedge |a_6\cdot r_6|\le c\} \\
\end{matrix*}
$$
\item The algorithm applies Farkas' Lemma to all the inclusion assertions generated in the previous step and obtains a system of linear inequalities
involving the parameters $a_1,a_2,a_3,K,\epsilon$, where we over-approximate all strict inequalities (with `$<$') by non-strict ones (with `$\le$').
\item The algorithm calls a linear programming (LP) solver on the linear program consisting of the system of linear inequalities generated in the previous step.
\end{compactenum}
Finally, the algorithm outputs an optimal answer $\eta(x,w)=2.45\cdot x-2.45$ with $\epsilon=1, K=-4.91$(see Example~\ref{va-roulette}). We can verify this $\eta$ a difference-bounded RSM-map with RSM-continuity. In this example, we find both the update function and the loop guard are affine, all sampling variables are bounded continuously-distributed, and the coefficients for the current sampling variables are not all zero in the loop guard of the next iteration, so we can conclude this loop has the
Lipschitz continuity in next-step termination
by Lemma \ref{le:LipconT1}. Therefore, by Theorem \ref{thm:linear}, we can obtained that this loop is expected linear-sensitive over its loop guard.
\end{example}

\begin{remark}[Scalability]
As our approach is based on martingale synthesis, the scalability of our approach relies on the efficiency of martingale synthesis algorithms~\cite{DBLP:journals/toplas/ChatterjeeFNH18,SriramCAV,DBLP:conf/cav/ChatterjeeFG16,ijcai18,ChatterjeeNZ2017,DBLP:conf/pldi/Wang0GCQS19,DBLP:conf/atva/FengZJZX17}.
\end{remark}

\vspace{-1em}
\section{Case Studies and Experimental Results}\label{sect:algorithm}
\vspace{0.5em}

We demonstrate the effectiveness of our approach through case studies and experimental results. First, we consider two case studies of the SGD algorithm.
Then in the experimental results, we use existing RSM-synthesis algorithms ~\cite{DBLP:journals/toplas/ChatterjeeFNH18,SriramCAV,DBLP:conf/cav/ChatterjeeFG16} (as is illustrated in the previous section) to synthesize linear RSM-maps for non-expansive simple loops such as the mini-roulette examples, the heating examples and many other examples from the literature.
Note that by Theorem~\ref{thm:affine} and Theorem~\ref{thm:linear}, the existence of RSM-maps with proper conditions (such as (A4), (B1)--(B4), etc.) leads to the expected sensitivity of all these examples.

\subsection{Case Studies on Stochastic Gradient Descent}\label{sect:sgdcasestudies}

For the general SGD algorithm from Figure~\ref{fig:example4}, we view each value $i$ from $1,2,\dots,n$ a probabilistic branch with probability $\frac{1}{n}$.
By viewing so, we have that the value after one loop iteration is $\updf(\mathbf{w}, i)$ where $\updf$ is the update function, $\mathbf{w}$ is the program valuation before the loop iteration and $i$ is the random integer sampled uniformly from $1,2,\dots,n$.
In the case studies, we consider the metric defined from the Euclidean distance.
To prove the expected affine-sensitivity of the SGD algorithm, we recall several properties for smooth convex functions (see~e.g.~\cite{nesterovbook}).

\begin{definition}[Smooth Convex Functions]
A continuous differentiable function $f:\Rset^{n}\rightarrow \Rset$ is \emph{convex} if for all $u,v\in\Rset^n$, we have that $f(v)\ge f(u)+\langle (\nabla f)(u),v-u\rangle$.
\end{definition}

In the following, we denote by $\convex{n}$ the class of convex continuous differentiable Lipschitz-continuous functions, and by $\cconvex{n}{\beta}$ the subclass of convex continuous differentiable Lipschitz-continuous functions whose gradient is Lipschitz-continuous with a Lipschitz constant $\beta$.
We also denote by $\mathbf{0}$ the zero vector.
The following results can be found in ~\cite{nesterovbook,DBLP:conf/icml/HardtRS16}.

\begin{proposition}\label{prop:glminimum} 
If $f\in\convex{n}$ and $(\nabla f)(u^*)=\mathbf{0}$, then $u^*$ is the global minimum of $f$ on $\Rset^n$.
\end{proposition}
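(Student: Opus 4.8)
The plan is to derive the claim directly from the first-order characterization of convexity given in the preceding definition, without appealing to any auxiliary machinery. The key observation is that the convexity inequality $f(v)\ge f(u)+\langle (\nabla f)(u),v-u\rangle$ holds for \emph{all} pairs $u,v\in\Rset^n$, so in particular we are free to instantiate it at $u=u^*$.

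First I would fix an arbitrary $v\in\Rset^n$ and apply the convexity inequality with $u=u^*$, obtaining
\[
f(v)\ge f(u^*)+\langle (\nabla f)(u^*),\,v-u^*\rangle.
\]
Next I would substitute the hypothesis $(\nabla f)(u^*)=\mathbf{0}$, so that the inner product term vanishes: $\langle \mathbf{0},\,v-u^*\rangle=0$. This yields $f(v)\ge f(u^*)$. Since $v$ was arbitrary, it follows that $f(u^*)\le f(v)$ for every $v\in\Rset^n$, which is exactly the statement that $u^*$ is a global minimum of $f$ on $\Rset^n$.

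There is essentially no obstacle here: the proof is a one-line instantiation of the defining inequality for convex functions, and the only thing to be careful about is that we are using the first-order (subgradient-style) form of convexity as stated in the definition rather than the secant/midpoint form, so no equivalence between the two notions needs to be established. I would keep the write-up to two or three sentences, matching the elementary nature of the result, and note in passing that the argument does not even use the Lipschitz-continuity assumption bundled into the class $\convex{n}$ — only differentiability and convexity are needed.
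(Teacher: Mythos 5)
Your proof is correct and is exactly the standard one-line argument: instantiating the paper's first-order definition of convexity at $u=u^*$ and using $(\nabla f)(u^*)=\mathbf{0}$ immediately gives $f(v)\ge f(u^*)$ for all $v$. The paper itself does not include a proof but simply cites the literature (Nesterov's book and Hardt et al.), and your argument is precisely the one found there, so there is nothing to add.
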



\begin{proposition}\label{prop:nonexp}
For any function $f\in\cconvex{n}{\beta}$ and real number $0<\gamma\le \frac{2}{\beta}$, we have that the function $g$ defined by $g(u):= u-\gamma\cdot (\nabla f)(u)$ is Lipschitz-continuous with Lipschitz constant $1$.
\end{proposition}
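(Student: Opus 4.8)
The plan is to reduce the claim to the \emph{co-coercivity} of $\nabla f$ (the Baillon--Haddad inequality): for all $u,v\in\Rset^n$,
\[
\langle \nabla f(u)-\nabla f(v),\, u-v\rangle \ \ge\ \tfrac{1}{\beta}\,\|\nabla f(u)-\nabla f(v)\|^2 .
\]
Once this is available, the conclusion is a short computation. Put $\Delta:=\nabla f(u)-\nabla f(v)$, so that $g(u)-g(v)=(u-v)-\gamma\Delta$; expanding the square and applying co-coercivity to the cross term gives
\[
\|g(u)-g(v)\|^2 \;=\; \|u-v\|^2 - 2\gamma\,\langle u-v,\Delta\rangle + \gamma^2\|\Delta\|^2 \;\le\; \|u-v\|^2 - \gamma\Bigl(\tfrac{2}{\beta}-\gamma\Bigr)\|\Delta\|^2 .
\]
Since $0<\gamma\le\tfrac{2}{\beta}$, the factor $\gamma(\tfrac{2}{\beta}-\gamma)$ is nonnegative, so $\|g(u)-g(v)\|^2\le\|u-v\|^2$, i.e.\ $g$ is Lipschitz-continuous with constant $1$.

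The work is therefore in proving co-coercivity from convexity of $f$ and the $\beta$-Lipschitz continuity of $\nabla f$. First I would record the \emph{descent lemma}: $\beta$-Lipschitzness of $\nabla f$ implies $f(y)\le f(x)+\langle\nabla f(x),y-x\rangle+\tfrac{\beta}{2}\|y-x\|^2$ for all $x,y$ (obtained by integrating $\nabla f$ along the segment from $x$ to $y$ and bounding $\|\nabla f(x+t(y-x))-\nabla f(x)\|\le \beta t\|y-x\|$). Then, for a fixed $u$, introduce the shifted function $\phi_u(z):=f(z)-\langle\nabla f(u),z\rangle$, which is convex, has gradient $\nabla\phi_u(z)=\nabla f(z)-\nabla f(u)$ (hence the same modulus $\beta$), and satisfies $\nabla\phi_u(u)=\mathbf{0}$, so by Proposition~\ref{prop:glminimum} $u$ is a global minimiser of $\phi_u$. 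Applying the descent lemma to $\phi_u$ with $x=v$ and $y=v-\tfrac{1}{\beta}\nabla\phi_u(v)$, and using $\phi_u(u)\le\phi_u\bigl(v-\tfrac{1}{\beta}\nabla\phi_u(v)\bigr)$, yields
\[
\phi_u(u)\ \le\ \phi_u(v)-\tfrac{1}{2\beta}\|\nabla\phi_u(v)\|^2 .
\]
Substituting the definition of $\phi_u$ turns this into $f(v)\ge f(u)+\langle\nabla f(u),v-u\rangle+\tfrac{1}{2\beta}\|\nabla f(u)-\nabla f(v)\|^2$; swapping the roles of $u$ and $v$ gives the symmetric inequality, and adding the two cancels the $f$-values and produces exactly the co-coercivity inequality above.

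The main obstacle is precisely this co-coercivity step --- the final non-expansiveness estimate is immediate given it. A couple of points to be careful about: the descent lemma requires $\nabla f$ to be $\beta$-Lipschitz on all of $\Rset^n$, which is what $f\in\cconvex{n}{\beta}$ provides, and Proposition~\ref{prop:glminimum} is what licenses ``$\phi_u$ attains its minimum at $u$'' (this is where convexity enters). The Lipschitz continuity of $f$ itself is not used. An alternative to the $\phi_u$-argument would be to quote the Baillon--Haddad theorem directly, or to reduce co-coercivity to the one-dimensional statement (monotonicity of the derivative along the line through $u$ and $v$) and integrate; but the shifted-function proof is self-contained and avoids any case distinction.
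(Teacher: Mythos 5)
Your proof is correct. Note that the paper does not actually prove this proposition --- it is stated as a known result and cited from \cite{nesterovbook,DBLP:conf/icml/HardtRS16} --- and your argument (reduce to the Baillon--Haddad co-coercivity inequality, proved via the descent lemma applied to the shifted function $\phi_u(z)=f(z)-\langle\nabla f(u),z\rangle$ together with Proposition~\ref{prop:glminimum}, then expand $\|g(u)-g(v)\|^2$) is precisely the standard route taken in those references, so there is nothing to flag.
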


We consider the SGD algorithm in Figure~\ref{fig:example4} whose execution time is randomized and depends on the input program valuation. Recall that we consider the loop guard $\Phi$ of the SGD algorithm to be $G(\mathbf{w})\ge \zeta$ so that $\zeta$ is the acceptable threshold of the (non-negative) loss function $G$ and the aim of the algorithm is to find a solution $\mathbf{w}^*$ satisfying $G(\mathbf{w}^*)<\zeta$, as described in Example~\ref{ex:motivsgd}.
Following~\cite{DBLP:conf/icml/HardtRS16}, we assume that each loss function $G_i$ lies in $\cconvex{n}{\beta}$
(so that the Euclidean magnitude of its gradient will always be bounded by a constant $M$), with a single Lipschitz constant $\beta$ for all $i$.  
For practical purpose, we further consider the following assumptions:
\begin{compactenum}
\item the function $G$ has a global minimum $\zeta_{\min}$ at $\mathbf{w}_{\min}$ so that $G(\mathbf{w}_{\min})=\zeta_{\min}$, which is satisfied by many convex functions;
\item the parameters $\mathbf{w}$ will always be bounded during the execution of the SGD algorithm, i.e., $\|\mathbf{w}\|_2\le R$ for some radius $R>0$.
This can be achieved by e.g. regularization techniques where a penalty term is added to each $G_i$ to prevent the parameters from growing too large;
\item in order to ensure termination of the algorithm, we consider that the threshold value $\zeta$ is strictly greater than the global minimum $\zeta_{\min}$.
\end{compactenum}
Below we illustrate the case studies,
showing the expected (approximately linear) sensitivity of the SGD algorithm.
The first fixes the vector $\vartheta$ of $n$ training data that results in a non-expansive loop, while the second considers sensitivity w.r.t the training data and leads to an expansive loop in general.

\noindent{\bf Case Study A: sensitivity w.r.t initial parameters.} In this case study, we fix the training data $\vartheta$.
By Proposition~\ref{prop:nonexp}, we have that the loop is non-expansive w.r.t the Euclidean distance if the step size $\gamma$ is small enough,
for all $i$.
Moreover, from the bound $M$, we have that the loop body has bounded update in $\mathbf{w}$, with a bound $d=M''\cdot \gamma$ for some constant $M''$ determined by $M$.

Define the RSM-map $\eta$ by $\eta(\mathbf{w}):=G(\mathbf{w})-\zeta$. We show that $\eta$ is indeed an RSM-map when the step size $\gamma$ is sufficiently small.
Since $G$ is smooth convex, we have from Proposition~\ref{prop:glminimum} that $\nabla G(\mathbf{w})=\mathbf{0}$ iff $\mathbf{w}$ is the global minimum on $\Rset^{n}$.
Thus by the compactness from the radius $R$, we have that $G(\mathbf{w})\ge \zeta>\zeta_{\min}$ implies ${\parallel}{\nabla G(\mathbf{w})}{\parallel}^2_2 \ge \delta$ for a fixed constant $\delta>0$.
Then by the Mean-Value Theorem, there is a vector $\mathbf{w'}$ on the line segment between $\mathbf{w}$ and $\mathbf{w}-\gamma \cdot \nabla G_i(\mathbf{w})$ such that
\begin{eqnarray*}
\eta(\mathbf{w}-\gamma \cdot \nabla G_i(\mathbf{w}))-\eta(\mathbf{w}) &=& (\nabla G(\mathbf{w}'))^\mathrm{T}\cdot (-\gamma \cdot \nabla G_i(\mathbf{w})) \\
&=& (\nabla G(\mathbf{w}))^\mathrm{T}\cdot (-\gamma \cdot \nabla G_i(\mathbf{w}))+(\nabla G(\mathbf{w})-\nabla G(\mathbf{w}'))^\mathrm{T}\cdot \gamma \cdot \nabla G_i(\mathbf{w})\enskip.\\
\end{eqnarray*}
By the smoothness and the Cauchy-Schwarz's Inequality, we have that
\begin{eqnarray*}
|(\nabla G(\mathbf{w})-\nabla G(\mathbf{w}'))^\mathrm{T}\cdot \gamma \cdot \nabla G_i(\mathbf{w})| &\le & \gamma\cdot \|\nabla G(\mathbf{w})-\nabla G(\mathbf{w}')\|_2\cdot \|\nabla G_i(\mathbf{w})\|_2\\
&\le & \gamma\cdot\beta \cdot \|\mathbf{w}-\mathbf{w}'\|_2\cdot \|\nabla G_i(\mathbf{w})\|_2\\
&\le & \gamma^2\cdot\beta \cdot \|\nabla G_i(\mathbf{w})\|_2^2\\
&\le & \gamma^2\cdot \beta\cdot M^2
\end{eqnarray*}
It follows that
\[
\textstyle \expv_i(\eta(\mathbf{w}-\gamma \cdot \nabla G_i(\mathbf{w})))-\eta(\mathbf{w}) = \frac{1}{n}\cdot \sum_{i=1}^n \left[\eta(\mathbf{w}-\gamma \cdot \nabla G_i(\mathbf{w}))-\eta(\mathbf{w})\right] = -\frac{\gamma}{n}\cdot {\parallel}{\nabla G(\mathbf{w})}{\parallel}^2_2 + C
\]
where $|C|\le \gamma^2\cdot \beta\cdot M^2$.
Hence by choosing a step size $\gamma$ small enough, whenever $G(\mathbf{w})\ge\zeta$ and the SGD algorithm enters the loop, we have
$\expv_i(\eta(\mathbf{w}-\gamma \cdot \nabla G_i(\mathbf{w})))\le \eta(\mathbf{w})- \frac{\delta\cdot\gamma}{2\cdot n}$.
Thus, we can choose  $\epsilon=\frac{\delta\cdot\gamma}{2\cdot n}$ to fulfill the condition (A3).
Moreover, by choosing $K=-\gamma\cdot M'$ for some positive constant $M'$ determined by $M$, we have that when the SGD algorithm terminates, it is guaranteed that $K\le \eta(\mathbf{w})\le 0$.
Hence, $\eta$ is an RSM-map for the SGD algorithm. Then by Theorem~\ref{thm:affine}, we obtain the desired result that the SGD algorithm is expected affine-sensitive
w.r.t the initial input parameters. In detail, we have the coefficients $A_\gamma,B_\gamma$ from Theorem~\ref{thm:affine} that
$A_\gamma=2\cdot\frac{d\cdot M+\epsilon}{\epsilon\cdot D_1}, B_\gamma=-2\cdot\frac{d\cdot K}{\epsilon\cdot D_1}$. As both $d, K, \epsilon$ are proportional to the step size $\gamma$, when $\gamma\rightarrow 0$, we have that $A_\gamma$ remains bounded and $B_\gamma\rightarrow 0$.
Thus, our approach derives that the SGD algorithm is approximately expected linear sensitive (over all $G(\mathbf{w})\ge \zeta$)
when the step size tends to zero.

\noindent{\bf Case Study B: sensitivity w.r.t both initial parameters and training data.} Second, we consider the expected affine-sensitivity
around a neighbourhood of initial parameters $\mathbf{w}^*$ and the training data $\vartheta^*$.
Similar to the first case study, we consider that the values of the parameters $\mathbf{w}$ are always bounded in some radius and the magnitude of each individual training data is also bounded.
A major difference in this case study is that we cannot ensure the non-expansiveness of the loop as the variation in the training data may cause the loop to be expansive.
Instead, we consider the general case that the loop is expansive with the Lipschitz constant $L_\gamma=1+\gamma\cdot C$ for some constant $C>0$. 

Define the RSM-map $\eta$ again as $\eta(\mathbf{w},\vartheta):=G(\mathbf{w},\vartheta)-\zeta$. Similarly, we can show that $\eta$ is an RSM-map when the step size $\gamma$ is sufficiently small,
with parameters $\epsilon, K,d,M$ derived in the same way as in the first case study;
in particular, both $d, K, \epsilon,c$ are proportional to the step size $\gamma$ and we denote $\epsilon=M_1\cdot\gamma$. 
Moreover, the RSM-map $\eta$ is difference bounded with bound $c=M_2\cdot \gamma$, where $M_2$ is a constant determined by $M$.
Then by Theorem~\ref{thm:exp}, we obtain that the SGD algorithm is expected affine-sensitive w.r.t both the initial input parameters and the training data.
By a detailed calculation, we have the coefficients $A_\gamma,B_\gamma$ from Theorem~\ref{thm:exp} that
\begin{eqnarray*}
& & \textstyle A_\gamma:= 2\cdot  A'_\gamma\cdot L_\gamma^{N_\gamma}+2\cdot A'\cdot L_\gamma^{N_\gamma}\cdot \exp\left(-\frac{\epsilon\cdot \eta(\mathbf{w}^*,\vartheta^*)}{8\cdot c^2}\right)\cdot\sum_{n=1}^\infty \left(L_\gamma\cdot \exp\left(-\frac{3\cdot \epsilon^2}{8\cdot c^2}\right)\right)^{n}\\
& & \textstyle B_\gamma:= 2\cdot B'_\gamma+ 2\cdot B'_\gamma\cdot \exp\left(-\frac{\epsilon\cdot \eta(\mathbf{w}^*,\vartheta^*)}{8\cdot c^2}\right)\cdot \sum_{n=1}^\infty \exp\left(-\frac{3\cdot \epsilon^2}{8\cdot c^2}\cdot n\right)
\end{eqnarray*}
where $A'_\gamma=\frac{d\cdot M+\epsilon}{D_1\cdot\epsilon}$, $B'_\gamma=-\frac{d\cdot K}{D_1\cdot\epsilon}$ and $N_\gamma=\lfloor 4\cdot\frac{\eta(\mathbf{w}^*,\vartheta^*)}{\epsilon}\rfloor+1$.
As both $d, K, \epsilon,c$ are proportional to the step size $\gamma$, we have that
$A'_\gamma$ remains bounded and $B'_\gamma\rightarrow 0$.
Moreover, we have that
\[
L_\gamma^{N_\gamma}=(1+C\cdot\gamma)^{\lfloor 4\cdot\frac{\eta(\mathbf{w}^*,\vartheta^*)}{M_1\cdot \gamma}\rfloor+1}\le (1+C\cdot\gamma)^{4\cdot\frac{\eta(\mathbf{w}^*,\vartheta^*)}{M_1\cdot \gamma}+1}\le
e^{4\cdot \frac{C}{M_1}\cdot \eta(\mathbf{w}^*,\vartheta^*)}\cdot (1+C\cdot \gamma)\enskip.
\]
where we recall that $e$ is the base for natural logarithm.
Hence, $L_\gamma^{N_\gamma}$ remains bounded when $\gamma\rightarrow 0$.
Furthermore, as $\frac{\epsilon}{c^2}\rightarrow\infty$ and $\frac{\epsilon^2}{c^2}$ is constant when $\gamma\rightarrow 0$, we obtain that $A_\gamma$ remain bounded and $B_\gamma\rightarrow 0$
when the step size tends to zero.
Thus, we can also assert in this case that the SGD algorithm is approximately expected linear-sensitive (around a neighbourhood of the given input parameters and training data)
when the step size tends to zero.


\subsection{Experimental Results}

We implemented our
approach in Section~\ref{sec:computational} and obtained experimental results on a variety of programs.
Recall we use Lemma~\ref{le:LipconT1} to ensure the 
Lipschitz continuity in next-step termination, and manually check whether a loop is non-expansive (which as mentioned in Section~\ref{sec:computational} can also be automated).

\noindent{\bf Experimental Examples.} We consider examples and their variants from the literature~\cite{DBLP:journals/ejcon/AbateKLP10,DBLP:conf/ijcai/ChatterjeeFGO18,DBLP:journals/toplas/ChatterjeeFNH18,DBLP:conf/pldi/NgoC018,DBLP:conf/pldi/Wang0GCQS19}.
Single/double-room heating is from~\cite{DBLP:journals/ejcon/AbateKLP10}. Mini-roulette, American roulette are from \cite{DBLP:conf/ijcai/ChatterjeeFGO18}. Ad rdwalk 1D, Ad rdwalk 2D are from \cite{DBLP:journals/toplas/ChatterjeeFNH18}. rdwalk, prdwalk, prspeed and race are from \cite{DBLP:conf/pldi/NgoC018}. Simple-while-loop, pollutant-disposal are from \cite{DBLP:conf/pldi/Wang0GCQS19}.
See Appendix~\ref{app:running} for these detailed examples.
All the assignment statements in these examples (except for single/double-room heating) are of the form $x:=x+r$, where $r$ is a constant or a random variable, so we can find these examples are non-expansive.

For single/double-room heating (Figure~\ref{fig:hybrida} and Figure \ref{fig:hybridb}), we choose the values of the parameters $b,b_1,b_2,a_{12},a_{21}$ be small enough, so that we can find the two examples non-expansive by manually inspection. In our experiments, we choose $x_a=10,b=b_1=0.03,c=c_1=1.5,w~\sim\mathsf{unif}(-0.3,0.3)$, $b_2=0.02,a_{12}=a_{21}=0.04,w_1~\sim\mathsf{unif}(-0.3,0.3),w_2~\sim\mathsf{unif}(-0.2,0.2)$. We use the max-norm as the metric.
In this example, as the loop counter $n$ starts always with $0$, our experimental results show that the programs are expected affine/linear sensitive in $n$ (i.e., the number of loop iterations).

\noindent{\bf Implementations and Results.}
We implemented our approach in Matlab R2018b. The results were obtained on a Windows machine with an Intel Core i5 2.9GHz processor and 8GB of RAM.
Examples of expected affine-sensitivity are illustrated in Table~\ref{ex:affineresults}, where the first column specifies the example and the program variable of concern, the second is the running time (in seconds) for the example, the third column is the RSM-map, and the last columns specify its related constants. Examples of expected linear-sensitivity are illustrated in Table~\ref{ex:linearresults} with similar layout.

\begin{table}
\caption{Experimental Results for Expected Affine-sensitivity (with $\epsilon=1$, $L=1$)}
\label{ex:affineresults}
\centering
\begin{tabular}{c|c|c|c|c|c}
\hline
\hline
Example & Time/sec & $\eta(\pv)$ & $K$    &$d$ &$M$ \\
\hline
mini-Roulette &5.97&$13\cdot x-13$&-13 &11&13 \\
\hline
rdwalk & 3.91  &$-5\cdot x+5000$ &-5   & 1&5 \\
\hline
prdwalk variant & 4.88 &$-0.2857\cdot x+285.7$ &-1.429   & 5&$0.2857$ \\
\hline
prspeed & 4.31 &$-1.7143\cdot x+1714.3$ &$-5.143$  &3 &$1.7143$ \\
\hline
race variant & 5.43  &$-1.43\cdot h+1.43\cdot t$ &-4.29    &4 &2.86 \\
\hline
ad. rdwalk 2D & 4.55  &$-0.77\cdot x+0.77\cdot y$ &-2.31    &3 &1.54 \\
\hline
ad. rdwalk 1D Variant & 4.49 &$2.86\cdot x$ &-2.86    &2 & 2.86\\
\hline
American Roulette & 7.66 & $20.27\cdot x-20.27$ &-20.27    &35 &20.27 \\
\hline
\hline
\end{tabular}
\end{table}

\begin{table}
\caption{Experimental Results for Expected Linear-sensitivity (with $\epsilon=1$, $L=1$)}
\label{ex:linearresults}
\centering
\begin{tabular}{c|c|c|c|c|c|c}
\hline
\hline
Example & Time/sec & $\eta(\pv)$ & $K$   &$d$ &$M$ &$c$\\
\hline
mini-roulette variant  & 12.69  &$2.45\cdot x-2.45$ &-4.91    &9 &2.45 &22.08\\
\hline
single-room heating  & 5.21 & $-0.833\cdot x+16.67$&-1.25 & 2.1 &0.833 &1.75\\
\hline
double-room heating &5.64 &$-2.27\cdot x_1+45.45$ &-3.41 & 2.87& 2.27&6.518\\
\hline
rdwalk variant & 4.44 &$-2.5\cdot x+2500$ &-7.5   &3 & 2.5 &7.5\\
\hline
prdwalk & 4.46 &$-0.5714\cdot x+571.4$ &-2.86  &5 &$0.5714$ &2.86\\
\hline
prspeed variant & 5.00  &$-0.5333\cdot x+533.3$ &-2.67   &5 &$0.5333$ &2.67\\
\hline
race & 5.17  &$-2\cdot h+2\cdot t$ &-6   &4 & 4& 6\\
\hline
simple while loop &3.59  &$-2\cdot x+2000$ &-2   &1 & 2& 2\\
\hline
pollutant disposal & 4.87 &$n-5$ &-3   &3 &1 &3\\
\hline
ad. rdwalk 2D variant & 6.07 &$-0.606\cdot x+0.606\cdot y$ &-2.424   &4 &1.212 &2.424 \\
\hline
ad. rdwalk 1D & 5.16 &$1.11\cdot x$ &-2.22   &2 &1.11 &2.22 \\
\hline
American roulette variant &  14.95 & $2.08\cdot x-2.08$ &-4.15  &35 &2.08 & 72.63\\
\hline
\hline
\end{tabular}
\end{table}

\begin{remark}
In this work, we only consider the synthesis of linear RSM-maps to prove expected sensitivity. 
Given that algorithms for synthesis of polynomial RSM-maps are also present (see \cite{DBLP:conf/cav/ChatterjeeFG16,DBLP:conf/atva/FengZJZX17}), it is also possible to tackle the case studies in Section~\ref{sect:sgdcasestudies} if the number of training data is fixed (so that the number of program variables is fixed) and the loss function $G$ is polynomial.
We plan the further investigation of automated synthesis of complex RSM-maps for proving expected sensitivity as a future work. 
\end{remark}

\section{Related Work}\label{sect:relatedwork}
In program verification
Lipschitz continuity has been studied extensively:
a SMT-based method for proving programs robust for a core imperative language is presented in~\cite{DBLP:conf/popl/ChaudhuriGL10};
a linear type system for proving sensitivity has been developed in~\cite{DBLP:conf/icfp/ReedP10};
approaches for differential privacy in higher-order languages have also been considered~\cite{DBLP:conf/popl/AmorimGHKC17,DBLP:conf/popl/GaboardiHHNP13,DBLP:journals/pacmpl/Winograd-CortHR17}.

For probabilistic programs computing expectation properties have been studied over the decades,
such as, influential works on PPDL~\cite{DBLP:journals/jcss/Kozen85} and {\scriptsize P}GCL~\cite{DBLP:journals/toplas/MorganMS96}.
Various approaches have been developed to reason about expected termination time of
probabilistic programs~\cite{DBLP:conf/esop/KaminskiKMO16,DBLP:conf/vmcai/FuC19,DBLP:journals/toplas/ChatterjeeFNH18} as well as to reason about whether a probabilistic program terminates with
probability~one~\cite{mciver2017new,DBLP:conf/aplas/HuangFC18,AgrawalCN18,ChatterjeeNZ2017}.
However, these works focus on non-relational properties, such as, upper bounds expected termination
time, whereas expected sensitivity is intrinsically relational.
To the best of our knowledge while RSMs have been used for non-relational properties, we are the first
 to apply for relational properties.

There is also a great body of literature on relational analysis of probabilistic programs, such as,
relational program logics
~\cite{DBLP:conf/popl/BartheGB09}
and differential privacy of algorithms~\cite{DBLP:conf/popl/BartheKOB12}.
However, this line of works does not consider relational expectation properties.
There have also been several works on relational expectation properties in several specific area, e.g.,
in the area of masking implementations in cryptography, quantitative masking~\cite{DBLP:journals/tcad/EldibWTS15}
and bounded moment model~\cite{DBLP:journals/iacr/BartheDFGSS16}.

The general framework to consider probabilistic program sensitivity was first considered in~\cite{DBLP:conf/popl/BartheGHS17}, and later improved in~\cite{DBLP:journals/pacmpl/BartheEGHS18}.
Several classical examples such as stochastic gradient descent, population dynamics or Glauber dynamics
can be analyzed in the framework of~\cite{DBLP:journals/pacmpl/BartheEGHS18}.
Another method for the sensitivity analysis of probabilistic programs has been proposed in~\cite{DBLP:conf/atva/HuangWM18} and they analysed a linear-regression example derived from the SGD algorithm in~\cite{DBLP:journals/pacmpl/BartheEGHS18}.
For details of literature on relational analysis of probabilistic programs leading to the work of \cite{DBLP:journals/pacmpl/BartheEGHS18} see \cite[Section 9]{DBLP:journals/pacmpl/BartheEGHS18}.



Below we compare our result in detail with the most related results, i.e., \cite{DBLP:journals/pacmpl/BartheEGHS18}, \cite{DBLP:conf/atva/HuangWM18} and also a recent arXiv submission~\cite{DBLP:journals/corr/abs-1901-06540}.
Recall that we have discussed the issue of conditional branches at the end of Section~\ref{sec:introduction}. Here we compare other technical aspects.

\emph{Comparison with \cite{DBLP:journals/pacmpl/BartheEGHS18}.}
The result of \cite{DBLP:journals/pacmpl/BartheEGHS18} is based on the classical notion of couplings.
Coupling is a powerful probabilistic proof technique to compare two distributions $X$ and $Y$ by creating a random distribution $W$ who marginal distributions correspond to $X$ and $Y$. Given a program with two different inputs $x$ and $y$, let $X_i$ and $Y_i$ denote the respective probability distribution after the $i$-th iteration. If the number of iterations is fixed, then coupling can be constructed for each $i$-th step. However, if the number of iterations is randomized and variable-dependent, then in one case termination could be achieved while the other case still continues with the loop. In such situation, a suitable coupling is cumbersome to obtain. Thus, while coupling present an elegant technique for fixed number of iterations, our approach applies directly to the situation where the number of iterations is randomized as well as variable dependent. The advantage of coupling-based proof rules is that such approach can handle variable-dependent sampling and complex data structures through manual proofs. This leads to the fact that their approach can prove rapid mixing of population dynamics and glauber dynamics, while our approach in its current form cannot handle such examples. A strengthening of our approach to handle these type of examples (through e.g. an integration with coupling) is an interesting future direction.

\emph{Comparison with  \cite{DBLP:conf/atva/HuangWM18}.}
The result of \cite{DBLP:conf/atva/HuangWM18} develops an automated tool based on computer algebra that calculates tight sensitivity bounds for probabilistic programs. As computer algebra requires to unroll
every loop into its sequential counterpart without looping, the approach is suitable only for programs with a bounded number of loop iterations and
cannot handle variable-dependent randomized loop iterations that typically lead to unbounded loop iterations. In contrast, our approach can handle unbounded variable-dependent randomized loop iterations.

\emph{Comparison with a recent arXiv submission.} Recently, there is an arXiv submission~\cite{DBLP:journals/corr/abs-1901-06540} that also involves sensitivity analysis of probabilistic programs. 
Although their approach can handle randomized loop iterations to some extent, the approach requires that the executions from two close-by inputs should \emph{synchronize} strictly, and simply assigns $\infty$ to \emph{non-synchronous} situations such as one program valuation enters the loop while the other does not (see the definition for if-branch and while loop in ~\cite[Figure 1]{DBLP:journals/corr/abs-1901-06540}). Moreover, all the examples for illustrating their approach have fixed and bounded number of loop iterations, while all our examples have variable-dependent randomized loop iterations.

\section{Conclusion}\label{sect:conclu}

In this work we studied expected sensitivity analysis of probabilistic programs,
and presented sound approaches for the analysis of (sequential composition of) probabilistic while loops whose termination time is randomized and depends on the input values,
rather than being fixed and bounded.
Our approach can be automated and can handle a variety of programs from the literature.
An interesting future direction is to extend our approach to a wider class of programs (e.g., programs with expansive loops, conditional branches and variable-dependent sampling, synthesis of complex RSM-maps, etc.).
Another important direction is to consider methods for generating tight bounds for sensitivity. 
Besides, integration with coupling-based approaches and practical issues arising from e.g. floating-point arithmetics would also be worthwhile to address.


\begin{acks}                            
We thank anonymous reviewers for helpful comments, especially for pointing to us a scenario of piecewise-linear approximation (Remark~\ref{rmk:condbranch}).
We are grateful to Prof. Yuxi Fu, director of the BASICS Lab at Shanghai Jiao Tong University, for his support.
\end{acks}

\bibliography{sensitivity}


\clearpage

\appendix
\section{The Detailed Syntax}\label{app:syntax}

The detailed syntax is in Figure~\ref{fig:syntax}.

\begin{figure}[h]
\begin{align*}
\langle \mathit{stmt}\rangle &::= \langle\mathit{pvar}\rangle \,\mbox{`$:=$'}\, \langle\mathit{expr} \rangle \mid \mbox{`\textbf{skip}'}
\mid \mbox{`\textbf{if}'} \, \langle\mathit{bexpr}\rangle\,\mbox{`\textbf{then}'} \, \langle \mathit{stmt}\rangle \, \mbox{`\textbf{else}'} \, \langle \mathit{stmt}\rangle \,\mbox{`\textbf{fi}'}
\\
& \mid \langle\mathit{stmt}\rangle \, \text{`;'} \, \langle \mathit{stmt}\rangle \mid \mbox{`\textbf{while}'}\, \langle\mathit{bexpr}\rangle \, \text{`\textbf{do}'} \, \langle \mathit{stmt}\rangle \, \text{`\textbf{od}'}
\vspace{\baselineskip}
\\
& \mid \mbox{`\textbf{if}'} \, \mbox{`\textbf{prob}' `('$p$`)'}\,\mbox{`\textbf{then}'} \, \langle \mathit{stmt}\rangle \, \mbox{`\textbf{else}'} \, \langle \mathit{stmt}\rangle \,\mbox{`\textbf{fi}'}\\
\langle \mathit{bexpr}\rangle &::=  \langle\mathit{pexpr} \rangle\, \mbox{`$\leq$'} \,\langle\mathit{pexpr} \rangle ~\mid~ \langle\mathit{pexpr} \rangle\, \mbox{`$\geq$'} \,\langle\mathit{pexpr} \rangle ~\mid~ \text{`}\neg\text{'} \langle\mathit{bexpr}\rangle
\\
& \mid\langle \mathit{bexpr} \rangle \, \mbox{`\textbf{or}'} \, \langle\mathit{bexpr}\rangle~\mid~ \langle \mathit{bexpr} \rangle \, \mbox{`\textbf{and}'} \, \langle\mathit{bexpr}\rangle
\end{align*}
\caption{The Syntax of Probabilistic Programs}
\label{fig:syntax}
\end{figure}

\section{Proof for the Integral Expansion}\label{pf:integral}

\begin{theorem}[Integral Expansion]\label{thm:intexp}
Let $Q$ be a simple probabilistic while loop in the form~(\ref{eq:swl}) and $z$ be a program variable.
For any input program valuation $\pv$ such that $\pv\models\Phi$, we have
\[
\expv_\pv(Z_{\pv})
= \int \sum_{\loc\in\mathbf{L}} p_\loc \cdot \expv_{\updf(\loc,\pv,\mathbf{r})}(Z_{\updf(\loc,\pv,\mathbf{r})}) \,\mathrm{d}\mathbf{r}
\]
where $Z_{\pv'}$ is the random variable for the value of $z$ after the execution of $Q$ from
the input program valuation $\pv'$, and
$p_\loc$ is the probability that the probabilistic branches follows the choices in $\loc$.
\end{theorem}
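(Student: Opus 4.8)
## Proof Proposal for the Integral Expansion Theorem

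The plan is to prove the identity by a single unfolding of the loop, using the law of total expectation together with the time-homogeneous Markov property of the general state-space Markov chain (GSSMC) that defines the semantics of $Q$. Since $\pv\models\Phi$, every run from the initial state $(\mathsf{in},\pv)$ executes the loop body $P$ exactly once before re-testing the guard. One iteration of $P$ resolves two independent sources of randomness: the probabilistic branches, which---because $P$ contains only finitely many such branches---are captured by an element $\loc\in\mathbf{L}$ occurring with probability $p_\loc$ (the product of the individual branch probabilities; we adopt the harmless convention that the choices at all branches in $\mathcal{L}$, even unreached ones, are resolved in advance, so that $\sum_{\loc\in\mathbf{L}}p_\loc=1$ and $\updf(\loc,\pv,\mathbf{r})$ depends only on the reached branches), and the sampling variables, whose values form a sampling valuation $\mathbf{r}$ drawn from the fixed joint distribution of $\rvars$. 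Conditioned on $(\loc,\mathbf{r})$, the program valuation right after this iteration is the deterministic value $\updf(\loc,\pv,\mathbf{r})$.

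First I would make the ``fresh restart'' step precise: the sub-run obtained by deleting the first step of a run of $Q$ started from $\pv$ has, conditioned on $(\loc,\mathbf{r})$, exactly the law $\probm_{\updf(\loc,\pv,\mathbf{r})}$ of a run of $Q$ started from $\updf(\loc,\pv,\mathbf{r})$. This is the Markov property of the GSSMC and follows from the construction of $\probm_\pv$ on cylinder sets from the transition kernel (using the detailed semantics of~\cite{DBLP:journals/toplas/ChatterjeeFNH18,SriramCAV,DBLP:conf/vmcai/FuC19}); concretely, one introduces the shift map on runs, checks its measurability, and identifies the push-forward of the conditional measure with $\probm_{\updf(\loc,\pv,\mathbf{r})}$. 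Since $Z_\pv$ composed with this shift is exactly $Z_{\updf(\loc,\pv,\mathbf{r})}$, it follows that
\[
\expv_\pv\!\left(Z_\pv \mid \loc,\mathbf{r}\right) \;=\; \expv_{\updf(\loc,\pv,\mathbf{r})}\!\left(Z_{\updf(\loc,\pv,\mathbf{r})}\right).
\]

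Next I would apply the tower property $\expv_\pv(Z_\pv)=\expv\!\left[\expv_\pv(Z_\pv\mid\loc,\mathbf{r})\right]$ and expand the outer expectation along the joint law of $(\loc,\mathbf{r})$. As $\loc$ and $\mathbf{r}$ are independent, $\loc$ ranges over the finite set $\mathbf{L}$ with weights $p_\loc$, and $\mathbf{r}$ is distributed over $\Rset^{|\rvars|}$, the outer expectation factors into the finite sum $\sum_{\loc\in\mathbf{L}}p_\loc$ and the integral $\int(\cdot)\,\mathrm{d}\mathbf{r}$, which (the sum being finite) may be freely interchanged, giving the claimed formula. The integrability required to justify this interchange---and to ensure each $\expv_{\updf(\loc,\pv,\mathbf{r})}(Z_{\updf(\loc,\pv,\mathbf{r})})$ is finite---holds in all our applications by the bounded-update condition (Definition~\ref{def:bu1}) together with the finite expected termination time provided by an RSM-map (Theorem~\ref{thm:rsmmaps1}); alternatively one first proves the identity for the truncated variables $Z_{T\wedge n}$, where everything is a finite expression, and passes to the limit $n\to\infty$ via the Dominated Convergence Theorem.

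I expect the main obstacle to be the rigorous statement and use of the Markov ``fresh restart'' property in the general state-space setting---setting up the shift on runs, checking measurability, and matching the push-forward conditional measure with $\probm_{\updf(\loc,\pv,\mathbf{r})}$; once this disintegration is in place, the remaining steps are routine.
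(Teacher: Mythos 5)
Your proposal is correct and follows essentially the same route as the paper's proof in Appendix~B: the paper's ``unrolling the run $\omega$ into $(\loc,\mathbf{r})\circ\omega'$'' step is exactly your Markov fresh-restart property, and its two applications of Fubini's Theorem correspond to your tower-property disintegration and the factoring of the outer expectation over the independent $\loc$ and $\mathbf{r}$. You are somewhat more explicit than the paper about the measurability of the shift map, the identification of the conditional push-forward measure with $\probm_{\updf(\loc,\pv,\mathbf{r})}$, and the integrability needed to justify the interchange, but these are refinements of the same argument rather than a different one.
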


\begin{proof}
The result follows from the following derivations:
\[
\begin{aligned}
\expv_{\pv}(Z_{\pv})
= &\int_{\omega\in\Omega} Z_\pv(\omega) \,\probm_\pv(\mathrm{d}\omega)\\
&(\mbox{by the definition of expectation}) \\
= &\int_{(\loc,\sv)\circ\omega'\in\Omega}
Z_\pv((\loc,\sv)\circ\omega') \,\probm_\pv(\mathrm{d}(\loc,\sv)\circ\omega')\\
&(\mbox{by unrolling the run }\omega\mbox{ into }(\loc,\mathbf{r})\circ \omega') \\
= &\int_{(\loc,\sv)} \left[\int_{\omega'\in\Omega}
Z_{\updf(\loc,\pv,\sv)}(\omega')\, \probm_{\updf(\loc,\pv,\sv)}(\mathrm{d}\omega') \right]\,\probm(\mathrm{d}(\loc,\sv))\\
&(\mbox{by Fubini's Theorem}) \\
= &\int_{(\loc,\sv)} \mathbb{E}_{\updf(\loc,\pv,\sv)}(Z_{\updf(\loc,\pv,\sv)})\,\probm(\mathrm{d}(\loc,\sv))\\
& (\mbox{by the definition of expectation})\\
= & \int_{\sv} \left[\mathbb{E}_{\updf(\loc,\pv,\sv)}(Z_{\updf(\loc,\pv,\sv)})\,\probm(\mathrm{d}\loc)\right]\,\probm(\mathrm{d}\sv)\\
& (\mbox{by Fubini's Theorem})\\
= & \int_{\sv} \left[\sum_{\loc\in \mathbf{L}}p_\loc\cdot\mathbb{E}_{\updf(\loc,\pv,\sv)}(Z_{\updf(\loc,\pv,\sv)})\right]\,\probm(\mathrm{d}\sv)\\
& (\mbox{as $\mathbf{L}$ is finite and discrete})\enskip.\\
\end{aligned}
\]
\end{proof}


\section{Proofs for Section~\ref{sect:simplecase}}\label{app:simplecase}

To prove Theorem~\ref{thm:affine} we need the following known result.

\begin{theorem}\cite{DBLP:journals/toplas/ChatterjeeFNH18,DBLP:conf/vmcai/FuC19}\label{thm:rsmmaps1}
If there exists an RSM-map $\eta$ with $\epsilon,K$ given as in Definition~\ref{def:RSMmaps1} for a simple probabilistic while loop $Q$ in the form~(\ref{eq:swl}), then for any input program valuation $\pv$ we have $\expv_{\pv}(T)\leq \frac{\eta(\pv)-K}{\epsilon}$, where $T$ is the random variable for the number of loop iterations.
\end{theorem}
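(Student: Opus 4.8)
Fix an input program valuation $\pv$; if $\pv\not\models\Phi$ then $T=0$ and the bound holds trivially, so assume $\pv\models\Phi$. Work inside the probability space $(\Omega,\mathcal{F},\probm_\pv)$ induced by $Q$ from $\pv$, let $\{\pv_n\}_{n\ge 0}$ be the canonical run (so $\pv_0=\pv$, and $\pv_{n+1}=\pv_n$ once $\pv_n\not\models\Phi$), let $\mathcal{F}_n:=\sigma(\pv_0,\dots,\pv_n)$, and set $T:=\inf\{n\mid \pv_n\not\models\Phi\}$. Then $T$ is a stopping time, $\{T>n\}\in\mathcal{F}_n$, and on $\{T>n\}$ we have $\pv_n\models\Phi$. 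The plan is to exhibit a nonnegative supermartingale built from $\eta$ together with the elapsed number of loop iterations, bound its expectation by its initial value, and take a limit.

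Concretely, define $Y_n:=\eta(\pv_n)-K+\epsilon\cdot(n\wedge T)$, where $n\wedge T:=\min\{n,T\}$ is the number of completed loop iterations up to step $n$. First I would check $Y_n\ge 0$ pointwise: on $\{T>n\}$, condition (A1) gives $\eta(\pv_n)\ge 0\ge K$ (as $K\le 0$), while on $\{T\le n\}$ the terminating step together with (A2) gives $\eta(\pv_n)=\eta(\pv_T)\ge K$; in both cases $\eta(\pv_n)-K\ge 0$, and $\epsilon\cdot(n\wedge T)\ge 0$. Next I would verify that $\{Y_n\}_{n\ge 0}$ is a supermartingale w.r.t. $\{\mathcal{F}_n\}$: on $\{T\le n\}$ the run and the counter are frozen, so $Y_{n+1}=Y_n$; on $\{T>n\}$ we have $(n+1)\wedge T=n+1$ and, since $\pv_n\models\Phi$ there, condition (A3) yields $\expv(\eta(\pv_{n+1})\mid\mathcal{F}_n)=\expv_{\sv,\loc}(\eta(\updf(\loc,\pv_n,\sv)))\le \eta(\pv_n)-\epsilon$; adding the $\mathcal{F}_n$-measurable quantity $-K+\epsilon(n+1)$ gives $\expv(Y_{n+1}\mid\mathcal{F}_n)\le Y_n$ on $\{T>n\}$. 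Combining the two events, $\expv(Y_{n+1}\mid\mathcal{F}_n)\le Y_n$ a.s. The identification of the one-step conditional expectation with $\expv_{\sv,\loc}(\eta(\updf(\loc,\pv_n,\sv)))$ is exactly the one-step case of the integral expansion (Theorem~\ref{thm:intexp}) applied to $\eta$ in place of $Z$.

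Since $Y_n\ge 0$ and $\expv(Y_0)=\eta(\pv)-K<\infty$, integrability of each $Y_n$ propagates by induction through the supermartingale inequality, via $\expv(Y_{n+1})=\expv(\expv(Y_{n+1}\mid\mathcal{F}_n))\le\expv(Y_n)<\infty$; this also side-steps any a priori integrability worry for $\eta(\pv_n)$, which is the one delicate point — subtracting the constant $K$ so that $Y_n\ge 0$ is what makes the argument clean. It then follows that $\expv(Y_n)\le\expv(Y_0)=\eta(\pv)-K$ for every $n$, and dropping the nonnegative term $\eta(\pv_n)-K$ gives $\epsilon\cdot\expv(n\wedge T)\le \eta(\pv)-K$, i.e. $\expv(n\wedge T)\le\frac{\eta(\pv)-K}{\epsilon}$ for all $n$. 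Finally, $n\wedge T\uparrow T$ pointwise as $n\to\infty$, so the Monotone Convergence Theorem yields $\expv_\pv(T)=\lim_{n\to\infty}\expv(n\wedge T)\le\frac{\eta(\pv)-K}{\epsilon}$, as claimed. The main obstacle is the measurability/integrability bookkeeping in the second and third steps — making the conditional-expectation computation rigorous against the underlying GSSMC kernel and ensuring the supermartingale is well-defined — rather than any genuinely hard estimate.
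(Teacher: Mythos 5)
Your proposal is correct in substance, but note that the paper does not prove this statement at all: Theorem~\ref{thm:rsmmaps1} is imported verbatim from \cite{DBLP:journals/toplas/ChatterjeeFNH18,DBLP:conf/vmcai/FuC19} as a known result, so there is no in-paper proof to compare against. What you have reconstructed is essentially the standard argument from those references: form the nonnegative process $Y_n=\eta(\pv_n)-K+\epsilon\cdot(n\wedge T)$, use (A1)/(A2) for nonnegativity, (A3) plus the one-step Markov/integral-expansion identity for the supermartingale inequality on $\{T>n\}$, propagate integrability through the tower property for nonnegative variables, and finish with monotone convergence on $n\wedge T\uparrow T$. All of these steps are sound, and your observation that subtracting $K$ is what makes the integrability bookkeeping painless is exactly the right point to emphasize.

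One small caveat: your opening claim that the bound ``holds trivially'' when $\pv\not\models\Phi$ is not quite right as stated. For such a valuation $T=0$, but Definition~\ref{def:RSMmaps1} constrains $\eta$ only on $\Sat{\Phi}$ and on post-states reachable in one step from $\Sat{\Phi}$ (via (A2)); for an arbitrary $\pv\not\models\Phi$ nothing prevents $\eta(\pv)<K$, in which case the right-hand side $\frac{\eta(\pv)-K}{\epsilon}$ is negative and the inequality fails. This is an imprecision inherited from the theorem statement itself (the paper only ever invokes it at valuations satisfying the guard, or at valuations produced by a terminating step, where (A2) gives $\eta\ge K$), so it does not damage your argument for the substantive case $\pv\models\Phi$; you should simply restrict the claim to those valuations rather than asserting triviality for all of them.
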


\noindent{\bf Theorem~\ref{thm:affine}.}
A simple non-expansive while loop $Q$ in the form~(\ref{eq:swl}) is expected affine-sensitive
over its loop guard $\Sat{\Phi}$ if we have that
\begin{compactitem}
\item
$Q$ has bounded update,
and
\item
there exists an RSM-map for $Q$ that has RSM-continuity.
\end{compactitem}
In particular, we can choose $\theta=\infty$ and $A=2\cdot\frac{d\cdot M+\epsilon}{  \epsilon\cdot D_1}, B=-2\cdot\frac{d\cdot K}{\epsilon\cdot D_1}$
in (\ref{eq:expaffsen}),
where the parameters
$d,M,\epsilon,K,D_1$ are from Definition~\ref{def:RSMmaps1}, Definition~\ref{def:bu1}, Definition~\ref{def:LipconRSM1} and (\ref{metric}).
\begin{proof}
Consider any program variable $z$.
Let $d$ be a bound from Definition~\ref{def:bu1}, and $\eta$ be an RSM-continuous RSM-maps with $\epsilon, K$ from Definition~\ref{def:RSMmaps1} and the constant $M$ from Definition~\ref{def:LipconRSM1}.
Consider any input program valuations $\pv,\pv'$ such that $\pv,\pv'\models\Phi$.
Denote $\delta :=\dist(\pv,\pv')$.
We use $T_{\pv''}$ to denote the random variable for the number of loop iterations of the executions starting from an input program valuation $\pv''$.
We also use $Z_{\pv''}$ to denote the random variable for the value of $z$ after the execution of $Q$ from $\pv''$.
We illustrate the main idea through clarifying the relationships between program valuations $\pv_n,\pv'_n$ in any runs $\omega=\{\pv_n\}_{n\ge 0}$, $\omega'=\{\pv'_n\}_{n\ge 0}$ that start from respectively $\pv,\pv'$ and use the same sampled values in each loop iteration.
Consider that the event $\min\{T_{\pv}, T_{\pv'}\}\ge n$ holds (i.e., both the executions do not terminate before the $n$th step). We have the following cases:
\begin{compactitem}
\item[{\em Case 1.}]
	Both $\pv_n$ and $\pv'_n$ violate the loop guard $\Phi$, i.e., $\pv_n,\pv'_n\models \neg\Phi$.
	This case describes that the loop $Q$ terminates exactly after the $n$-th iteration of the loop for both the initial valuations. From the condition (B1), we obtain directly that
	$\dist(\pv_n,\pv'_n)\leq \delta$. Hence $|\pv_n[z]-\pv'_n[z]|\le \frac{\dist(\pv_n,\pv'_n)}{D_1}\le \frac{\delta}{D_1}$.
\item[{\em Case 2.}]  Exactly one of $\pv_n, \pv'_n$ violates the loop guard $\Phi$. This is the non-synchronous situation that needs to be addressed through martingales.
W.l.o.g., we can assume that $\pv_n\models\Phi$ and $\pv'_n\models \neg\Phi$. From the upper-bound property of RSM-maps (see Theorem~\ref{thm:rsmmaps1} in Appendix~\ref{app:simplecase}), we derive that $\expv_{\pv_n}(T_{\pv_n})\leq\dfrac{\eta(\pv_n) -K}{\epsilon}$.
From the bounded-update condition (B2) and the triangle inequality of metrics, we have that $|\pv_n[z]-Z_{\pv_n}|\le \frac{1}{D_1}\cdot\dist(\pv_n,\mathbf{W}_{\pv_n})\le \frac{d}{D_1}\cdot T_{\pv_n}$.
Thus, we obtain that
\[
|\expv_{\pv_n}(Z_{\pv_n})-\pv_n[z]| \leq \expv_{\pv_n}(|\pv_n[z]-Z_{\pv_n}|) \leq  \expv_{\pv_n}\left(\frac{d}{D_1}\cdot T_{\pv_n}\right)   \leq \frac{d}{D_1}\cdot \dfrac{\eta(\pv_n) -K}{\epsilon}\enskip.
\]
From the non-expansiveness we have $\dist(\pv_n,\pv'_n)\leq \delta$. Then by the RSM-continuity (B3), we have $|\eta(\pv_n)-\eta(\pv'_n)|\leq M\cdot \delta$.
    Furthermore, from  (A2) we have $\eta(\pv'_n)\le 0$. So we obtain that $\eta(\pv_n)\leq M\cdot \delta$.
It follows that
\begin{eqnarray*}
|\expv_{\pv_n}(Z_{\pv_n})-\expv_{\pv'_n}(Z_{\pv'_n})| &=& |\expv_{\pv_n}(Z_{\pv_n})-\pv'_n[z]|  \\
&\leq &  |\expv_{\pv_n}(Z_{\pv_n})-\pv_n[z]|+ |\pv_n[z]-\pv'_n[z] |  \\
&\leq & \frac{d}{D_1}\cdot\dfrac{M\cdot  \delta-K}{\epsilon}+ \frac{\delta}{D_1} = \frac{d\cdot M+\epsilon}{\epsilon\cdot D_1}\cdot   \delta- \frac{d\cdot K}{\epsilon\cdot D_1}
\end{eqnarray*}
\item[{\em Case 3.}] Neither $\pv_n$ nor $\pv'_n$ violates the loop guard $\Phi$. In this case, the loop $Q$ will continue from both $\pv_n$ and $\pv'_n$.
Then in the next iteration, the same analysis can be carried out for the next program valuations $\pv_{n+1},\pv'_{n+1}$.
\end{compactitem}
From the termination property ensured by RSM-maps (Theorem~\ref{thm:rsmmaps1}),
the probability that the third case happens infinitely often equals zero.
Thus, the sensitivity analysis eventually reduces to the first two cases, and the first two cases derives the expected affine-sensitivity.

In the following, we demonstrate the detailed proof.
By Theorem~\ref{thm:intexp} (in Appendix~\ref{pf:integral}), for a program valuation $\pv''$ satisfying the loop guard $\Phi$, we can derive that
\[
\expv_{\pv''}(Z_{\pv''})
= \int \left[\sum_{\loc\in\mathbf{L}} p_\loc\cdot \expv_{\updf(\loc,\pv'',\sv)}( Z_{\updf(\loc,\pv'',\sv)}) \right]\,\mathrm{d}\sv
\]
On the other hand, if $\pv''\not\models \Phi$, then we obtain straightforwardly that $\expv_{\pv''}(Z_{\pv''})=\pv''[z]$.
Note that if for a particular sampled valuation $\mathbf{r}$ we have $\updf(\loc,\pv'',\sv)\models\Phi$, then we can use Theorem~\ref{thm:intexp} to expand the integrand further into an integral, i.e.,
\begin{equation}\label{eq:intexp1}
\expv_{\updf(\loc,\pv'',\sv)}(Z_{\updf(\loc,\pv'',\sv)})
= \int \left[ \sum_{\loc\in\mathbf{L}}p_\loc\cdot \expv_{\updf(\updf(\loc,\pv'',\sv),\sv')}(Z_{{\updf(\updf(\loc,\pv'',\sv),\sv')}}) \right]\,\mathrm{d}\sv'\enskip.
\end{equation}
Note that once we have $\pv,\pv'\models\Phi$, we can derive from the linearity of integral that
\begin{equation}\label{eq:intexp22}
\expv_{\pv}(Z_{\pv})-\expv_{\pv'}(Z_{\pv'})
= \int  \left[\sum_{\loc\in\mathbf{L}} p_{\loc}\cdot(\expv_{{\updf}(\loc,\pv,\sv)}(Z_{{\updf}(\loc,\pv,\sv)})- \expv_{\updf(\loc,\pv',\sv)}(Z_{{\updf}(\loc,\pv',\sv)}))\right]\,\mathrm{d}\,\sv\enskip.
\end{equation}
Then using (\ref{eq:intexp1}), we can expand the integral above into an arbitrary depth until we reach an end situation $\expv_{\pv_1}(Z_{\pv_1})- \expv_{\pv_2}(Z_{\pv_{2}})$
such that either $\pv_1\not\models \Phi$ or $\pv_2\not\models \Phi$. (This situation corresponds to the first two cases demonstrated previously and will eventually happen since we have $\probm(T_\pv<\infty)=\probm(T_{\pv'}<\infty)=1$.)

Below given any program valuation $\pv''$ and any infinite sequence $\rho=\{\sv_n\}$ where each $\sv_n$ represents the sampled valuation in the $(n+1)$-th loop iteration, we define the infinite sequence $\omega_{\pv'',\rho}$ as the unique execution that starts from $\pv''$ and follows the samplings in $\rho$.

Consider any program valuations $\pv,\pv'\models \Phi$ and any infinite sequence $\rho$ of sampled valuations such that either $T_\pv(\omega_{\pv,\rho})=m$ or $T_{\pv'}(\omega_{\pv',\rho})=m$, for a natural number $m$.
Denote $\omega_{\pv,\rho}=\{\pv_n\}_{n\ge 0}$ and $\omega_{\pv',\rho}=\{\pv'_n\}_{n\ge 0}$.
Then this sequence corresponds to an integral expansion path from $\expv_{\pv}(Z_{\pv})-\expv_{\pv'}(Z_{\pv'})$
to an end situation $\expv_{\pv_m}(Z_{\pv_m})-\expv_{\pv'_m}(Z_{\pv'_m})$ such that either
$\pv_m\not\models \Phi$ or $\pv'_m\not\models\Phi$. Since this situation falls in the first two cases discussed previously in the proof, we obtain that
\[
|\expv_{\pv_m}(Z_{\pv_m})-\expv_{\pv'_m}(Z_{\pv'_m})| \le \frac{d\cdot M+\epsilon}{\epsilon\cdot D_1}\cdot   \delta- \frac{d\cdot K}{\epsilon\cdot D_1}\enskip.
\]
Denote $A':=\frac{d\cdot M+\epsilon}{\epsilon\cdot D_1}$ and $B':=\frac{d\cdot K}{\epsilon\cdot D_1}$. Since the choice of $\rho$ is arbitrary,
we have that the total amount contributed to the value of the integral in (\ref{eq:intexp22})
under the situation ``either $T_{\pv}=m$ or $T_{\pv'}=m$'' is no more than
\[
\probm(T_{\pv}=m\vee T_{\pv'}=m) \cdot  (A'\cdot   \delta+B'),
\]
which is no greater than
\[
(\probm(T_{\pv}=m)+\probm(T_{\pv'}=m))\cdot  (A'\cdot   \delta+B').
\]
By summing up all steps $m$'s, we obtain that
\begin{eqnarray*}
|\expv_{\pv}(Z_{\pv})-\expv_{\pv'}(Z_{\pv'})| &\le & \sum_{m=1}^{\infty} (\probm(T_{\pv}=m)+\probm(T_{\pv'}=m))\cdot (A'\cdot   \delta+B')\\
& \le & 2\cdot  (A'\cdot \delta+B')\\
&= & A\cdot\delta+B
\end{eqnarray*}
where $A:= 2\cdot A'$ and $B:=2\cdot B'$.
\end{proof}

\section{Proofs for Section~\ref{sect:avsen}}\label{app:avsen}

\noindent{\bf Lemma~\ref{le:LipconT1}.}
Consider a simple while loop $Q$ in the form (\ref{eq:swl}) that satisfies the following conditions: 
\begin{compactenum}
\item both $\updf$ and $\Phi$ are affine and $\Phi$ is equivalent to some DNF $\bigvee_{i\in\mathcal{I}} (\mathbf{A}_i\cdot \pv\le \mathbf{d}_i)$;
\item all sampling variables are continuously-distributed whose probability density functions have bounded values;
\item for all $i\in\mathcal{I}$, $\loc\in\mathbf{L}$ and program valuations $\pv\models\Phi$, the coefficients for the sampling variables $\sv$ in $\mathbf{A}_i\cdot \updf(\loc, \pv, \sv)$  are not all zero at each row, \ie, the truth value of each disjunctive clause in $\Phi$ for $\updf(\loc,\pv,\sv)$ depends on $\sv$ at every row.
\end{compactenum}
Then the loop $Q$ has the 
Lipschitz continuity in next-step termination
w.r.t any metric $\dist$.
\begin{proof}
Denote the update function $\updf$ by
$\updf(\loc, \pv,\sv)=\mathbf{B}\cdot \pv+ \mathbf{C}\cdot \sv +\mathbf{c}$\enskip.
Consider any $\loc\in\mathbf{L}$ and any program valuations $\pv,\pv'$ that satisfy $\Phi$.
Then the probability $p$ that $\updf(\loc, \pv, (r_1,r_2))\models \Phi$ and $\updf(\loc, \pv', (r_1,r_2))\not\models \Phi$ is smaller than the probability that for some $i$, $\mathbf{A}_i\cdot \updf(\loc, \pv, (r_1,r_2))\le \mathbf{d}_i$ and $\mathbf{A}_i\cdot \updf(\loc, \pv', (r_1,r_2))\,{\not\le}\,\mathbf{d}_i$.
Let $p_i$ be the probability that $\mathbf{A}_i\cdot \updf(\loc, \pv, (r_1,r_2))\le \mathbf{d}_i$ and $\mathbf{A}_i\cdot \updf(\loc, \pv', (r_1,r_2))\,{\not\le}\,\mathbf{d}_i$. Then $p_i$ equals the probability of the event that
\begin{equation}\label{eq:continuityevent1}
\begin{cases}
\mathbf{A}_i\cdot\mathbf{C}\cdot \sv \le \mathbf{d}_i-\mathbf{A}_i\cdot \mathbf{B}\cdot \pv-\mathbf{A}_i\cdot \mathbf{c} & \\
\mathbf{A}_i\cdot\mathbf{C}\cdot \sv \,{\not\le}\, \mathbf{d}_i-\mathbf{A}_i\cdot \mathbf{B}\cdot \pv'-\mathbf{A}_i\cdot \mathbf{c} &
\end{cases}\enskip.
\end{equation}
Furthermore, the event (\ref{eq:continuityevent1}) implies that for some row $j$, the event
\begin{equation}\label{eq:continuityevent2}
\begin{cases}
(\mathbf{A}_i\cdot\mathbf{C})_j\cdot \sv \le (\mathbf{d}_i-\mathbf{A}_i\cdot \mathbf{c})_j-(\mathbf{A}_i\cdot \mathbf{B})_j\cdot \pv\\
(\mathbf{A}_i\cdot\mathbf{C})_j\cdot \sv > (\mathbf{d}_i-\mathbf{A}_i\cdot \mathbf{c})_j-(\mathbf{A}_i\cdot \mathbf{B})_j\cdot \pv'
\end{cases}
\end{equation}
holds. Denote the probability of the event (\ref{eq:continuityevent2}) by $p_{ij}$.
By the third condition in the statement of the lemma, we have that $(\mathbf{A}_i\cdot\mathbf{C})_j$ is not the zero vector.
Then following from the fact that all the sampling variables are continuously-distributed and have bounded probability density functions, the probability $p_{ij}$ is no greater than $L'_{ij}\cdot {\parallel}{\pv-\pv'}{\parallel}_\infty$ where $L'_{ij}$ is a constant.
To clarify this point, we can assume that there are only two sampling variables $r_1,r_2$.
The situation for more variables is similar.
Under this assumption, we have $(\mathbf{A}_i\cdot\mathbf{C})_j\cdot \sv=a_1\cdot r_1+a_2\cdot r_2$ (note that $a_1,a_2$ are not all zero) and the event (\ref{eq:continuityevent2}) becomes
\[
(\mathbf{d}_i-\mathbf{A}_i\cdot \mathbf{c})_j-(\mathbf{A}_i\cdot \mathbf{B})_j\cdot \pv'<a_1\cdot r_1+a_2\cdot r_2\le (\mathbf{d}_i-\mathbf{A}_i\cdot \mathbf{c})_j-(\mathbf{A}_i\cdot \mathbf{B})_j\cdot \pv.
\]
If both $a_1,a_2$ are non-zero and $a_2>0$, then we derive directly that
\[
p_{ij}\le \int_{-\infty}^\infty f_1(r_1)\cdot \int_{a}^b  f_2(r_2)\,\mathrm{d}r_2\mathrm{d}r_1
\]
where $f_1, f_2$ are probability density functions for respectively $r_1,r_2$ and
\[
a:=\frac{(\mathbf{d}_i-\mathbf{A}_i\cdot \mathbf{c})_j-(\mathbf{A}_i\cdot \mathbf{B})_j\cdot \pv'-a_1\cdot r_1}{a_2}
\]
\[
b:=\frac{(\mathbf{d}_i-\mathbf{A}_i\cdot \mathbf{c})_j-(\mathbf{A}_i\cdot \mathbf{B})_j\cdot \pv-a_1\cdot r_1}{a_2}\enskip.
\]
Thus we have $p_{ij}\le L'_{ij}\cdot {\parallel}{\pv-\pv'}{\parallel}_\infty$ where the constant $L'_{ij}$ is determined by $a_2, (\mathbf{A}_i\cdot \mathbf{B})_j$ and the bound for $f_2$.
The situation for other cases is similar.
So we have that $p\le \sum_{i,j} p_{ij}\le (\sum_{i,j} L'_{i,j})\cdot {\parallel}{\pv-\pv'}{\parallel}_\infty\le \frac{\sum_{i,j} L'_{i,j}}{D_1}\cdot \dist(\pv,\pv')$.
Finally, we sum up all the probabilistic choices $\loc\in\mathbf{L}$ and obtain the desired result.
\end{proof}

{\textbf{Proposition}~\ref{prop:linear}.}
A non-expansive simple while loop $Q$ in the form~(\ref{eq:swl}) has expected linear-sensitivity
over any neighbourhood $U_{\Phi,\dist}(\pv^*, \rho)$ of any given $\pv^*\in\Sat{\Phi}$ if $Q$ has (i) bounded update,
(ii) an RSM-map with RSM-continuity and (iii) the 
Lipschitz continuity in next-step termination.
\begin{proof}
The proof resembles the one for expected affine-sensitivity (Theorem~\ref{thm:affine}).
Consider runs $\omega=\{\pv_n\}_{n\ge 0}$, $\omega'=\{\pv'_n\}_{n\ge 0}$ starting from respectively program valuations $\pv,\pv'\in U_{\Phi,\dist}(\pv^*, \rho)$ and use the same sampled values for each loop iteration. We first analyze the case that $\pv=\pv^*$.
Consider that the event $\min\{T_{\pv}, T_{\pv'}\}\ge n$ holds.
We have exactly the three cases demonstrated in the expected affine-sensitivity analysis, and again the sensitivity analysis eventually reduces to the first two cases (see the proof for Theorem~\ref{thm:affine}).
As we enhance the conditions in Theorem~\ref{thm:affine} with the 
Lipschitz continuity in next-step termination,
we have a strengthened analysis for the second case (i.e., exactly one of $\pv_n,\pv'_n$ violates the loop guard) as follows.
W.l.o.g, we assume that $\pv_n\models\Phi$ and $\pv'_n\not\models\Phi$ in the second case.
As in the proof of Theorem~\ref{thm:affine}, we define $\delta := \dist(\pv,\pv')\le \rho$ and obtain that
$|\expv_{\pv_n}(Z_{\pv_n})-\expv_{\pv'_n}(Z_{\pv'_n})| \leq  A\cdot \delta+ B\le  A\cdot\rho+ B =:C$
where $A:=\frac{d\cdot M+\epsilon}{\epsilon\cdot D_1}$ and $B:=\frac{d\cdot K}{\epsilon\cdot D_1}$.
From the 
Lipschitz continuity in next-step termination,
we have that the second case happens with probability at most
$L'\cdot \dist(\pv_{n-1}, \pv'_{n-1})$, where $L'>0$ is from Definition~\ref{def:LipconT1}.
Thus, the difference contributed to the total sensitivity $|\expv_{\pv}(Z_{\pv_n})-\expv_{\pv'}(Z_{\pv'_n})|$ in the first two cases
is at most
\[
\probm(T_\pv=n\wedge T_{\pv'}=n)\cdot \dist(\pv_{n},\pv'_{n})+C\cdot L'\cdot\probm(T_\pv\ge n\wedge T_{\pv'}\ge n)\cdot \dist(\pv_{n-1},\pv'_{n-1})
\]
where the first summand is from the first case and the second is from the second case.
From the non-expansiveness, we have that $\dist(\pv_{n-1},\pv'_{n-1}),\dist(\pv_{n},\pv'_{n})\le\delta$.
By summing up all $n$'s together, using the fact that $\expv_{\pv}(T_{\pv})=\sum_{n=0}^\infty \probm(T_{\pv}>n)$, we obtain that
\[
|\expv_{\pv}(Z_{\pv})-\expv_{\pv'}(Z_{\pv'})|\le (C\cdot L'\cdot  \expv_{\pv}(T_\pv) +1) \cdot \delta\le (C\cdot L'\cdot \frac{\eta(\pv^*)-K}{\epsilon}+1) \cdot \delta\enskip.
\]
For the case $\pv\ne\pv^*$, we simply have
\begin{eqnarray*}
|\expv_{\pv}(Z_{\pv})-\expv_{\pv'}(Z_{\pv'})| &\le& |\expv_{\pv}(Z_{\pv})-\expv_{\pv^*}(Z_{\pv'})|+|\expv_{\pv}(Z_{\pv^*})-\expv_{\pv'}(Z_{\pv'})|\\
&\le& 2\cdot (C\cdot L'\cdot \frac{\eta(\pv^*)-K}{\epsilon}+1) \cdot \delta
\end{eqnarray*}
that implies the desired local linear sensitivity.
\end{proof}

{\textbf Lemma~\ref{lemm:pd}.}
If $\eta$ is a difference-bounded RSM-map with the parameters $\epsilon,c$ specified in Definition~\ref{def:RSMmaps1} and Definition~\ref{def:diffbounded}, then there exists a constant $p\in (0,1]$ such that
\begin{compactitem}
\item[\emph{(\dag)}] $\forall\pv: \left(\pv\models\Phi\Rightarrow \probm_{\sv,\loc}(\eta(\updf(\loc,\pv,\sv))-\eta(\pv)\leq -\frac{1}{2}\cdot\epsilon)\ge p\right)$
\end{compactitem}
where the probability $\probm_{\sv,\loc}(-)$ is taken w.r.t the sampled valuation $\sv$ and the resolution $\loc$ for probabilistic branches, and treats the program valuation $\pv$ as constant.
In particular, we can take $p:=\frac{\epsilon}{2\cdot c-\epsilon}$.
\begin{proof}
Consider any program valuation $\pv$ such that $\pv\models\Phi$. Denote $u:=\expv_\sv(\eta(\updf(\loc,\pv,\sv)))-\eta(\pv)$ and $q:=\probm_{\sv,\loc}(\eta(\updf(\loc,\pv,\sv))-\eta(\pv,\sv)\leq -\frac{1}{2}\cdot\epsilon)$.
By (A3) and (A4) we have that $u\le -\epsilon<0$ and $0<\epsilon\le c$. Then from (A4) and the Markov's inequality, we have that
\begin{eqnarray*}
q &= &\probm_{\sv,\loc}(\eta(\updf(\loc,\pv,\sv))-\eta(\pv)\leq -\frac{1}{2}\cdot\epsilon) \\
&= &\probm_{\sv,\loc}(\eta(\updf(\loc,\pv,\sv))-\eta(\pv)+c\leq -\frac{1}{2}\cdot\epsilon+c) \\
&\ge & 1-\frac{c+u}{c-\frac{1}{2}\cdot\epsilon} \qquad \mbox{(by Markov's inequality)}\\
&\ge & 1-\frac{c-\epsilon}{c-\frac{1}{2}\cdot\epsilon}\enskip.
\end{eqnarray*}
By taking $p:=1-\frac{c-\epsilon}{c-\frac{1}{2}\cdot\epsilon}=\frac{\epsilon}{2\cdot c-\epsilon}$, we obtain the desired result.
\end{proof}

\noindent{\bf Proposition~\ref{prop:Ak}.}
For any natural number $n\ge 1$, real numbers $C,D\ge 0$ and probability value $p\in (0,1]$, the following system of linear inequalities (with real variables $A_k$'s ($1\le k\le n$) and $A_\infty$)
\begin{eqnarray*}
(1-p)\cdot A_\infty+ C+ p\cdot A_0  \le  A_1 \\
(1-p)\cdot A_\infty+ C+ p\cdot A_1  \le  A_2 \\
\qquad\qquad\qquad \vdots \qquad\qquad\qquad\\
(1-p)\cdot A_\infty+ C+ p\cdot A_{n-1}  \le  A_n \\
D=A_0\le A_1\le  \dots \le A_n \le A_\infty \\
\end{eqnarray*}
has a solution.
\begin{proof}
We find a solution to the system of linear inequalities by equating each $(1-p)\cdot A_\infty+ C+ p\cdot A_k$ with $A_{k+1}$. After the equating, we have from induction on $k$ that
\begin{eqnarray*}
(1-p)\cdot A_\infty+ C+ p\cdot D  &=&  A_1 \\
(1-p^2)\cdot A_\infty+ (1+p)\cdot C+ p^2\cdot D  &=&  A_2 \\
\qquad\qquad\qquad \vdots \qquad\qquad\qquad\\
\textstyle (1-p^n)\cdot A_\infty+ (\sum_{m=1}^n p^{m-1})\cdot C+ p^n\cdot D  &=&  A_n \\
\textstyle (1-p^{n+1})\cdot A_\infty+ (\sum_{m=1}^{n+1} p^{m-1})\cdot C+ p^{n+1}\cdot D  &=&  A_\infty \enskip.\\
\end{eqnarray*}
By solving the last equation, we obtain that
\begin{compactitem}
\item $A_\infty= \frac{1}{p^{n+1}}\cdot (\sum_{m=1}^{n+1} p^{m-1})\cdot C+D= \frac{1}{p^{n+1}}\cdot \frac{1-p^{n+1}}{1-p}\cdot C+D$,
\item $A_k= (\frac{1-p^{k}}{p^{n+1}}\cdot (\sum_{m=1}^{n+1} p^{m-1})+ (\sum_{m=1}^k p^{m-1}))\cdot C + D=  (\frac{1-p^{k}}{p^{n+1}}\cdot \frac{1-p^{n+1}}{1-p} +  \frac{1-p^{k}}{1-p})\cdot C + D$ for $1\le k\le n$.
\end{compactitem}
Below we show that this solution (together with $A_0=D$) satisfies that $A_0\le A_1\le  \dots \le A_n \le A_\infty$. First we show that $A_0\le A_1$ and $A_n\le A_\infty$.
This follows directly from the fact that
\[
A_1=\frac{1}{p^{n+1}}\cdot C+D \mbox{ and } A_\infty-A_n= (\frac{1}{p}\cdot  \frac{1-p^{n+1}}{1-p}- \frac{1-p^{n}}{1-p})\cdot C\enskip.
\]
Then as $(1-p)\cdot A_\infty+ C+ p\cdot A_k=A_{k+1}$ for $1\le k< n$, we easily prove by induction on $k$ that $A_k\le A_{k+1}$ for all $1\le k<n$.
\end{proof}

\noindent{\bf Theorem~\ref{thm:linear}.}
A non-expansive simple while loop $Q$ in the form~(\ref{eq:swl}) has expected linear-sensitivity
over its loop guard $\Sat{\Phi}$ if $Q$ has (i) bounded update,
(ii) a difference-bounded RSM-map with RSM-continuity and (iii) the 
Lipschitz continuity in next-step termination. 
In particular, we can choose $\theta=\frac{1}{M}$ in (\ref{eq:expaffsen}) where the parameter $M$ is from the RSM-continuity (Definition~\ref{def:LipconRSM1}).
\begin{proof}
Choose any program variable $z$. Denote by $T,T'$ (resp. $Z_n,Z'_n$) the random variables for the number of loop iterations (resp. the value of $z$  at the $n$-th step),
from the input program valuations $\pv,\pv'$,
respectively.
For each natural number $n\ge 0$, define $\delta_n(\pv,\pv'):=\expv_{\pv}(Z_{T\wedge n})-\expv_{\pv'}(Z'_{T'\wedge n})$ for program valuations $\pv,\pv'$, where the random variable $T\wedge n$ is defined as $\min\{T,n\}$ and
$T'\wedge n$ likewise. We also define $\delta(\pv,\pv'):=\expv_{\pv}(Z_{T})-\expv_{\pv'}(Z'_{T'})$.
By Theorem~\ref{thm:rsmmaps1}, we have $\expv_\pv(T),\expv_{\pv'}(T')<\infty$.
Then from the bounded-update condition and Dominated Convergence Theorem, we have that $\lim\limits_{n\rightarrow \infty} \expv(Z_{T\wedge n})=\expv(Z_T)$ and the same holds for $Z_{T'\wedge n}$.
Thus, we have that $\lim\limits_{n\rightarrow \infty}\delta_n(\pv,\pv')=\delta(\pv,\pv')$.

Let $\eta$ be a difference-bounded RSM-continuous RSM-map with the parameters $c,\epsilon,M$ as specified in (A3), (A4), (B3).
By Lemma~\ref{lemm:pd}, we can obtain a probability value $p=\frac{\epsilon}{2\cdot c-\epsilon}$ such that the condition $(\dag)$ holds.
We also construct the regions $R_k$'s ($1\le k\le n^*$) and $R_\infty$ as in the paragraph below Lemma~\ref{lemm:pd}, and have the solution $A_k$'s and $A_\infty$ from Proposition~\ref{prop:Ak}, for which we choose $C=\max\{L'\cdot C',\frac{1}{D_1}\}$ and $D=\frac{1}{D_1}=A_0$, where the definition of $C'$ will be given below.
For the sake of convenience, we also define that $R_0:=\{\pv\mid \pv\not\models \Phi\}$.
Below we prove by induction on $n\ge 0$ that
\begin{compactitem}
\item[(*)] for all $k\in\{1,\dots,n^*,\infty\}$ and all program valuations $\pv,\pv'\models \Phi$, if $\dist(\pv,\pv')\le \frac{1}{M}=\theta$, then we have $|\delta_n(\pv,\pv')|\le A_k\cdot \dist(\pv,\pv')$ when $\pv\in R_k$ for $1\le k\le n^*$, and $|\delta_n(\pv,\pv')|\le A_\infty \cdot \dist(\pv,\pv')$ when $\pv\in R_{n^*}$.
\end{compactitem}

\noindent{\bf Base Step} $n=0$. By definition, we have that for all program valuations $\pv,\pv'\in R_k$, $\delta_0(\pv,\pv')=\pv[z]-\pv'[z]$ and $|\delta_0(\pv,\pv')|=|\pv[z]-\pv'[z]|\le \frac{1}{D_1}\cdot\dist(\pv,\pv')= A_0\cdot\dist(\pv,\pv') \le A_k\cdot \dist(\pv,\pv')$.

\noindent{\bf Inductive Step.} Suppose that the induction hypothesis (*) holds for $n$. We prove the case for $n+1$.
We first consider $1\le k\le n^*$ and program valuations $\pv,\pv'$ such that
$\pv,\pv'\models\Phi$,  $\dist(\pv,\pv')\le\theta$ and $\pv\in R_k$.
From the integral expansion (Theorem~\ref{thm:intexp}), we have that
\[
\expv_\pv(Z_{T\wedge (n+1)})=\int \left[\mathbf{1}_{\updf(\loc,\pv,\sv)\not\models \Phi}\cdot \updf(\loc,\pv,\sv)[z] + \mathbf{1}_{\updf(\loc,\pv,\sv)\models \Phi}\cdot \expv_{\updf(\loc,\pv,\sv)}(Z_{T\wedge n})\right]\,\mathrm{d}\loc \,\mathrm{d}\sv
\]
and similarly,
\[
\expv_{\pv'}(Z'_{T'\wedge (n+1)})=\int \left[\mathbf{1}_{\updf(\loc,\pv,\sv)\not\models \Phi}\cdot \updf(\loc,\pv,\sv)[z] + \mathbf{1}_{\updf(\loc,\pv,\sv)\models \Phi}\cdot \expv_{\updf(\loc,\pv,\sv)}(Z'_{T'\wedge n})\right]\,\mathrm{d}\loc\,\mathrm{d}\sv\enskip.
\]
From (\dag), we have that with probability at least $p=\frac{\epsilon}{2\cdot c-\epsilon}$, it happens that $\eta(\updf(\loc,\pv,\sv))-\eta(\pv)\le -\frac{1}{2}\cdot \epsilon$.
It follows that with probability at least $p$, $\updf(\loc,\pv,\sv)\in \bigcup_{m=0}^{k-1} R_{m}$.
Note that
\begin{eqnarray*}
& & \expv_\pv(Z_{T\wedge (n+1)})- \expv_{\pv'}(Z'_{T'\wedge (n+1)}) \\
&=& \int  \left[\mathbf{1}_{\updf(\loc,\pv,\sv)\not\models\Phi\wedge \updf(\loc,\pv',\sv)\not\models\Phi}\cdot (\updf(\loc,\pv,\sv)[z]-\updf(\loc,\pv',\sv))[z]\right]\,\mathrm{d}\loc\,\mathrm{d}\sv  \\
& &~~{}+\int \left[\mathbf{1}_{\updf(\loc,\pv,\sv)\not\models\Phi\wedge \updf(\loc,\pv',\sv)\models\Phi}\cdot (\updf(\loc,\pv,\sv)[z]-\expv_{\updf(\loc,\pv',\sv)}(Z'_{T'\wedge n}))\right]\,\mathrm{d}\loc\,\mathrm{d}\sv\\
& &~~{}+\int \left[\mathbf{1}_{\updf(\loc,\pv,\sv)\models\Phi\wedge \updf(\loc,\pv',\sv)\not\models\Phi}\cdot (\expv_{\updf(\loc,\pv,\sv)}(Z_{T\wedge n})-\updf(\loc,\pv',\sv)[z])\right]\,\mathrm{d}\loc\,\mathrm{d}\sv\\
& &~~{}+\int \left[\mathbf{1}_{\updf(\loc,\pv,\sv)\models\Phi\wedge \updf(\loc,\pv',\sv)\models\Phi}\cdot (\expv_{\updf(\loc,\pv,\sv)}(Z_{T\wedge n})-\expv_{\updf(\loc,\pv',\sv)}(Z'_{T'\wedge n}))\right]\,\mathrm{d}\loc\,\mathrm{d}\sv\\
\end{eqnarray*}
where the first integral corresponds to the case that the executions from $\pv,\pv'$ both terminate after one loop iteration,
the second and third integrals correspond to the case that one execution terminates but the other does not,
and the last integral correspond to the case that both the executions do not terminate.
In the first integral, we have from the non-expansiveness that
\[
|\updf(\loc,\pv,\sv)[z]-\updf(\loc,\pv',\sv)[z]|\le \frac{1}{D_1}\cdot \dist(\updf(\loc,\pv,\sv),\updf(\loc,\pv',\sv))\le \frac{1}{D_1}\cdot\dist(\pv,\pv')\enskip.
\]
In the second and third integral, we have from the second case in the proof of Theorem~\ref{thm:affine} and the fact $T\wedge n\le T$ that
\[
|\updf(\loc,\pv,\sv)[z]-\expv_{\updf(\loc,\pv',\sv)}(Z'_{T'\wedge n})|\le \left(\frac{d\cdot M+\epsilon}{D_1\cdot\epsilon}\right)\cdot \dist(\pv,\pv')- \frac{d\cdot K}{D_1\cdot \epsilon}\le \left(\frac{d\cdot M+\epsilon}{D_1\cdot\epsilon}\right)\cdot \frac{1}{M}- \frac{d\cdot K}{D_1\cdot \epsilon} =: C'
\]
\[
|\updf(\loc,\pv',\sv)[z]-\expv_{\updf(\loc,\pv,\sv)}(Z_{T\wedge n})|\le \left(\frac{d\cdot M+\epsilon}{D_1\cdot\epsilon}\right)\cdot \dist(\pv,\pv')- \frac{d\cdot K}{D_1\cdot \epsilon}\le  C'\enskip.
\]
Furthermore, we have
\begin{eqnarray*}
& &\int \mathbf{1}_{\updf(\loc,\pv,\sv)\models\Phi\wedge \updf(\loc,\pv',\sv)\models\Phi}\cdot (\expv_{\updf(\loc,\pv,\sv)}(Z_{T\wedge n})-\expv_{\updf(\loc,\pv',\sv)}(Z'_{T'\wedge n}))\,\mathrm{d}\loc \,\mathrm{d}\sv\\
&=&\int \mathbf{1}_{\updf(\loc,\pv,\sv)\models\Phi\wedge \updf(\loc,\pv',\sv)\models\Phi\wedge \eta(\updf(\loc,\pv,\sv))-\eta(\pv)\le-\frac{1}{2}\cdot\epsilon}\cdot (\expv_{\updf(\loc,\pv,\sv)}(Z_{T\wedge n})-\expv_{\updf(\loc,\pv',\sv)}(Z'_{T'\wedge n}))\,\mathrm{d}\loc\,\mathrm{d}\sv\\
& &~~{}+\int \mathbf{1}_{\updf(\loc,\pv,\sv)\models\Phi\wedge \updf(\loc,\pv',\sv)\models\Phi\wedge \eta(\updf(\loc,\pv,\sv))-\eta(\pv)>-\frac{1}{2}\cdot \epsilon}\cdot (\expv_{\updf(\loc,\pv,\sv)}(Z_{T\wedge n})-\expv_{\updf(\loc,\pv',\sv)}(Z'_{T'\wedge n}))\,\mathrm{d}\loc\,\mathrm{d}\sv\enskip.\\
\end{eqnarray*}
Denote
\begin{compactitem}
\item
$q_1:=\probm_{\loc,\sv}(\updf(\loc,\pv,\sv)\not\models\Phi\wedge \updf(\loc,\pv',\sv)\not\models\Phi)$;
\item
$q_2:=\probm_{\loc,\sv}((\updf(\loc,\pv,\sv)\not\models\Phi\wedge \updf(\loc,\pv',\sv)\models\Phi)\vee (\updf(\loc,\pv,\sv)\models\Phi\wedge \updf(\loc,\pv',\sv)\not\models\Phi))$;
\item
$p':=\probm_{\loc,\sv}(\updf(\loc,\pv,\sv)\models\Phi\wedge \updf(\loc,\pv',\sv)\models\Phi\wedge\eta(\updf(\loc,\pv,\sv))\le \eta(\pv)-\frac{\epsilon}{2})$;
\item
$\overline{p}:=\probm_{\loc,\sv}(\eta(\updf(\loc,\pv,\sv))\le \eta(\pv)-\frac{\epsilon}{2})$.
\end{compactitem}
Then $q_1+q_2+p'\ge \overline{p}\ge p$.
From 
the Lipschitz continuity in next-step termination,
we have that $q_2\le L'\cdot \dist(\pv,\pv')$.
Then from $A_0\le A_1\le\dots\le A_k\le A_\infty$ and the induction hypothesis,
we have
\begin{eqnarray*}
|\expv_\pv(Z_{T\wedge (n+1)})- \expv_{\pv'}(Z'_{T'\wedge (n+1)})| &\le& (p'\cdot A_{k-1}+ q_1\cdot \frac{1}{D_1}  + (1-(q_1+q_2+p'))\cdot A_\infty)\cdot \dist(\pv,\pv') + q_2\cdot C'\\
&\le& (p'\cdot A_{k-1}+ q_1\cdot \frac{1}{D_1} + L'\cdot C' + (1-(q_1+q_2+p'))\cdot A_\infty)\cdot \dist(\pv,\pv') \\
&\le & ((q_1+q_2+p')\cdot A_{k-1}+ C + (1-(q_1+q_2+p'))\cdot A_\infty)\cdot \dist(\pv,\pv')\\
&\le & (p\cdot A_{k-1}+ C + (1-p)\cdot A_\infty)\cdot \dist(\pv,\pv')\\
&=& A_k\cdot \dist(\pv,\pv')\enskip.
\end{eqnarray*}
Then we consider the case $\pv\in R_{\infty}$.
In this case, since $\dist(\pv,\pv')\le \frac{1}{M}$ and $\eta(\pv)>c+1$, we have from the RSM-continuity that $\eta(\updf(\loc,\pv,\sv)),\eta(\updf(\loc,\pv',\sv))>0$).
It follows that both $\updf(\loc,\pv,\sv),\updf(\loc,\pv',\sv)$ satisfy the loop guard .
Hence, we have from the induction hypothesis that $|\expv_\pv(Z_{T\wedge (n+1)})- \expv_{\pv'}(Z'_{T'\wedge (n+1)})| \le  A_\infty\cdot \dist(\pv,\pv')$.
Thus, the induction step is proved. By taking the limit $n\rightarrow\infty$, we obtain that the whole loop $Q$ is expected linear-sensitive in each $R_k$.
By taking the maximum constant $A_\infty$ for the global expected linear-sensitivity, we obtain the desired result.
\end{proof}

\section{Proofs for Section~\ref{sect:generalsensi}}\label{app:sect6}

\begin{theorem}~\cite{DBLP:journals/toplas/ChatterjeeFNH18}\label{thm:rsmmaps}
If there exists a difference-bounded RSM-map $\eta$ with $\epsilon, K,K'$
from Definition~\ref{def:RSMmaps1} and $c$ from Definition~\ref{def:diffbounded}, then for any initial program valuation $\pv$ we have $\probm(\tertime>n)\le  \mathrm{exp}({-\frac{(\epsilon\cdot n-\eta(\pv))^2}{2\cdot n\cdot c^2}})$ for all $n>\frac{\eta(\pv)}{\epsilon}$.
\end{theorem}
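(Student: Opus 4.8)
The plan is to run the standard exponential-supermartingale (Chernoff) argument for concentration of hitting times, using the difference-bounded RSM-map to manufacture a supermartingale with increments confined to an interval of fixed width. I may assume $\pv\models\Phi$, since otherwise $\tertime=0$ and the bound is trivial. Write the run as $\{\pv_n\}_{n\ge 0}$ with $\pv_0=\pv$, let $\{\mathcal{F}_n\}$ be its natural filtration, and note that $\tertime$ is a stopping time. The central object is the stopped process $Y_n:=\eta(\pv_{n\wedge\tertime})+\epsilon\cdot(n\wedge\tertime)$, so that $Y_0=\eta(\pv)$ and $Y_{n+1}=Y_n$ on $\{\tertime\le n\}$.

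First I would check that $\{Y_n\}$ is a supermartingale. On $\{\tertime>n\}$ we have $\pv_n\models\Phi$, so by (A3) $\expv[\eta(\pv_{n+1})\mid\mathcal{F}_n]\le\eta(\pv_n)-\epsilon$, hence $\expv[Y_{n+1}\mid\mathcal{F}_n]\le\eta(\pv_n)+\epsilon n=Y_n$; on $\{\tertime\le n\}$ the process is constant. Then I would record two auxiliary facts. First, $\epsilon\le c$ (combine (A3) with the lower bound $\eta(\updf(\loc,\pv,\sv))-\eta(\pv)\ge-c$ from (A4)), and consequently the increment $Y_{n+1}-Y_n$ always lies in the fixed interval $[\epsilon-c,\,\epsilon+c]$ of width $2c$ (on $\{\tertime>n\}$ this is $(\eta(\pv_{n+1})-\eta(\pv_n))+\epsilon$ with $|\eta(\pv_{n+1})-\eta(\pv_n)|\le c$ by (A4), and on $\{\tertime\le n\}$ it is $0$, which also lies in that interval). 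Second, on $\{\tertime>n\}$, (A1) gives $\eta(\pv_n)\ge 0$, hence $Y_n-Y_0\ge\epsilon n-\eta(\pv)$.

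Next I would run the Chernoff estimate. For $\lambda>0$, since $\expv[Y_{n+1}-Y_n\mid\mathcal{F}_n]\le 0$ and (conditionally) $Y_{n+1}-Y_n$ is supported in an interval of length $2c$, Hoeffding's lemma gives $\expv[\exp(\lambda(Y_{n+1}-Y_n))\mid\mathcal{F}_n]\le\exp(\lambda^2c^2/2)$; iterating with the tower property yields $\expv[\exp(\lambda(Y_n-Y_0))]\le\exp(n\lambda^2c^2/2)$ (integrability is immediate from $|Y_n-Y_0|\le n(c+\epsilon)$). For $n>\eta(\pv)/\epsilon$ put $t:=\epsilon n-\eta(\pv)>0$; combining the second auxiliary fact with Markov's inequality,
\[
\probm(\tertime>n)\le\probm\bigl(Y_n-Y_0\ge t\bigr)\le\exp(-\lambda t)\cdot\expv[\exp(\lambda(Y_n-Y_0))]\le\exp\bigl(-\lambda t+\tfrac{n\lambda^2c^2}{2}\bigr),
\]
and optimising over $\lambda$ (i.e.\ $\lambda=t/(nc^2)$) gives $\probm(\tertime>n)\le\exp\bigl(-\tfrac{t^2}{2nc^2}\bigr)=\exp\bigl(-\tfrac{(\epsilon n-\eta(\pv))^2}{2nc^2}\bigr)$, as required.

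The step I expect to be delicate is the bookkeeping around the \emph{stopped} process: one must make sure the supermartingale property \emph{and} the increment bound both survive past $\tertime$ (they do, because $Y$ freezes on $\{\tertime\le n\}\in\mathcal{F}_n$), that (A1) and (A3) are invoked only on the $\mathcal{F}_n$-measurable event $\{\tertime>n\}$, and that the conditional version of Hoeffding's lemma is applied correctly to the (event-dependent, width-$2c$) conditional distribution of $Y_{n+1}-Y_n$. The remainder — Hoeffding's lemma, the tower-property iteration, and the one-variable optimisation of $\lambda$ — is routine, and the observation that the increment interval has width $2c$ rather than $2(c+\epsilon)$ (which is exactly where $\epsilon\le c$ is used) is what pins down the precise constant $2nc^2$ in the exponent.
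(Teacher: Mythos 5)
The paper does not prove this statement itself --- it is quoted as a known result from \cite{DBLP:journals/toplas/ChatterjeeFNH18} --- so there is no in-paper proof to compare against. Your argument is correct and is essentially the standard proof from that reference: the stopped process $Y_n=\eta(\pv_{n\wedge\tertime})+\epsilon\cdot(n\wedge\tertime)$ is a supermartingale by (A3), the event $\{\tertime>n\}$ forces $Y_n-Y_0\ge\epsilon n-\eta(\pv)$ by (A1), and the Azuma--Hoeffding/Chernoff estimate with increments confined to the width-$2c$ interval $[\epsilon-c,\epsilon+c]$ (rather than the symmetric bound $|Y_{n+1}-Y_n|\le c+\epsilon$, which would only give $(c+\epsilon)^2$ in the denominator) yields exactly the stated constant $2nc^2$. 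The bookkeeping you flag --- freezing the process on $\{\tertime\le n\}$, invoking (A1)/(A3)/(A4) only on the $\mathcal{F}_n$-measurable event $\{\tertime>n\}$, and using $\epsilon\le c$ to place the zero increment inside the same interval --- is handled correctly.
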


\noindent{\bf Theorem~\ref{thm:exp}.} Consider a simple while loop $Q$ in the form (\ref{eq:swl})
that satisfies the following conditions:
\begin{compactitem}
\item the loop body $P$ is Lipschitz continuous with a constant $L$ specified in Definition~\ref{def:LipconLB1}, and has bounded update;
\item there exists a difference-bounded RSM-map $\eta$ for $Q$ with RSM-continuity and parameters $\epsilon, K, c$ from Definition~\ref{def:RSMmaps1} and Definition~\ref{def:diffbounded} such that $L<\mathrm{exp}({\frac{3\cdot\epsilon^2}{8\cdot c^2}})$.
\end{compactitem}
Then for any program valuation $\pv^*$ such that $\pv^*\models \Phi$ and $\eta(\pv^*)>0$, there exists a radius $\rho>0$ such that the loop $Q$ is expected affine-sensitive
over $U_{\Phi,\dist}(\pv^*,\rho)$.
In particular, we can choose in Definition~\ref{def:EAS} that
\begin{eqnarray*}
& & A:= 2\cdot  A'\cdot L^N+2\cdot A'\cdot L^N\cdot \exp\left(-\frac{\epsilon\cdot \eta(\pv^*)}{8\cdot c^2}\right)\cdot\sum_{n=1}^\infty \left(L\cdot \exp\left(-\frac{3\cdot \epsilon^2}{8\cdot c^2}\right)\right)^{n}\\
& & B:= 2\cdot B'+ 2\cdot B'\cdot \exp\left(-\frac{\epsilon\cdot \eta(\pv^*)}{8\cdot c^2}\right)\cdot \sum_{n=1}^\infty \exp\left(-\frac{3\cdot \epsilon^2}{8\cdot c^2}\cdot n\right)
\end{eqnarray*}
where $A'=\frac{d\cdot M+\epsilon}{D_1\cdot\epsilon}$, $B'=-\frac{d\cdot K}{D_1\cdot\epsilon}$ and $N=\lfloor 4\cdot\frac{\eta(\pv^*)}{\epsilon}\rfloor+1$, for which the parameters
$d,M,\epsilon,K,D_1$ are from Definition~\ref{def:RSMmaps1}, Definition~\ref{def:bu1}, Definition~\ref{def:LipconRSM1} and (\ref{metric}).
\begin{proof}
The proof follows similar arguments as in the proof for Theorem~\ref{thm:affine}.
Choose an arbitrary program variable $z$.
Let $d$ be a bound for $z$ from Definition~\ref{def:bu1} and  $M$ be a constant from  Definition~\ref{def:LipconRSM1}.
Consider any $\pv^*\models\Phi$ such that $\eta(\pv^*)>0$. Choose $\rho=\frac{\eta(\pv^*)}{2\cdot M}$.
Then for all $\pv\in U_{\Phi,\dist}(\pv^*,\rho)$ we have $\eta(\pv)\ge \frac{\eta(\pv^*)}{2}>0$,
hence $\pv\models\Phi$.
Let $\pv,\pv'\in U_{\Phi,\dist}(\pv^*,\rho)$ be any input program valuations and define
$\delta := \dist(\pv,\pv')\le 2\cdot\rho$.
Consider any runs $\omega=\{\pv_n\}_{n\ge 0}$, $\omega'=\{\pv'_n\}_{n\ge 0}$ starting from respectively $\pv,\pv'$ such that $\omega,\omega'$ follow the same sequence of sampled valuations and $\min\{T_\pv(\omega),T_{\pv'}(\omega')\}\ge n$.
Similar to the proof of  Theorem~\ref{thm:affine}, we have three cases below.

\begin{compactitem}
\item[{\em Case 1.}] Both $\pv_n$ and $\pv'_n$ violate the loop guard $\Phi$, i.e., $\pv_n,\pv'_n\models \neg\Phi$.
From the condition (B1), we obtain directly that
	$\dist(\pv_n,\pv'_n)\leq L^n \cdot \delta$.
\item[{\em Case 2.}] One of $\pv_n, \pv'_n$ violates the loop guard $\Phi$ and the other does not. W.l.o.g., we assume that $\pv_n\models\Phi$ and $\pv'_n\models \neg\Phi$. From the analysis in the previous case we have
    $\dist(\pv_n,\pv'_n)\leq L^n\cdot \delta$.
    Furthermore, by the RSM continuity (B3), we have $|\eta(\pv_n)-\eta(\pv'_n)|\leq M\cdot L^{n}\cdot\delta$.
    Moreover, from the condition (A3) we have
    $\eta(\pv'_n)\le 0$, so we obtain that $\eta(\pv_n)\leq M\cdot L^{n}\cdot\delta$.
    By Theorem~\ref{thm:rsmmaps1}, we have that $\expv_{\pv_n}(T_{\pv_n})\leq\dfrac{\eta(\pv_n) -K}{\epsilon}$, which implies that
	\[
	|\expv_{\pv_n}(Z_{\pv_n})-\pv_n[z]| \leq \expv_{\pv_n}(|Z_{\pv_n}-\pv_n[z]|)
\leq   \frac{1}{D_1}\cdot\expv_{\pv_n}(\dist(Z_{\pv_n},\pv_n[z]))    \leq d\cdot \dfrac{\eta(\pv_n) -K}{\epsilon\cdot D_1}.
	\]
Hence we have
\begin{eqnarray*}
|\expv_{\pv_n}(Z_{\pv_n})-\expv_{\pv'_n}(Z_{\pv'_n})| &=& |\expv_{\pv_n}(Z_{\pv_n})-\pv'_n[z] |\\
&\leq &  |\expv_{\pv_n}(Z_{\pv_n})-\pv_n[z] |+ |\pv_n[z]-\pv'_n[z] |\\                                                                   \\
&\leq &d\cdot\dfrac{M\cdot L^{n}\cdot\delta-K}{D_1\cdot\epsilon}+L^n\cdot\frac{\delta}{D_1}\\
&=& \left(\frac{d\cdot M+\epsilon}{\epsilon\cdot D_1}\right)\cdot L^n\cdot \delta- \frac{d\cdot K}{D_1\cdot\epsilon}.
\end{eqnarray*}	
\item[{\em Case 3.}] Neither $\pv_n$ nor $\pv'_n$ violates the loop guard $\Phi$. In this case, the loop $Q$ will continue with valuations $\pv_n$ and $\pv'_n$.
Then in the next iteration, the same analysis can be carried out for the next program valuations $\pv_{n+1},\pv'_{n+1}$.
\end{compactitem}
Again, the situation that the third case happens infinitely often has probability zero, since our program is almost-surely terminating from the existence of an RSM-map.
Thus, the sensitivity analysis eventually reduces to the first two cases.
By taking into account the exponentially-decreasing property for Theorem~\ref{thm:rsmmaps} and a detailed calculation, we can obtain that if
the constant $L$ (i.e., the speed that the difference between program valuations grows larger) is less than
the exponential decreasing factor of program termination, then the loop $Q$ is expected affine-sensitive.

The detailed calculation is as follows.
Denote $A':=\frac{d\cdot M+\epsilon}{D_1\cdot\epsilon}$ and $B':=-\frac{d\cdot K}{D_1\cdot\epsilon}$.
Then the total amount contributed to the total sensitivity at the $n$th step 
is no more than
$\probm(T_\pv=n\vee T_{\pv'}=n) \cdot (A'\cdot L^n\cdot \delta+B')$,
which is no greater than
$(\probm(T_\pv=n)+\probm(T_{\pv'}=n))\cdot (A'\cdot L^n\cdot \delta+B')$.
Note that if $n\ge \frac{\eta(\pv)}{\epsilon}$, then by Theorem~\ref{thm:rsmmaps} we have
\[
\probm(T_\pv=n+1)\le \mathbb{P}(T_\pv> n)\le \mathrm{exp}\left({-\frac{(\epsilon\cdot n-\eta(\pv))^2}{2\cdot n\cdot c^2}}\right)=:p_n.
\]
Furthermore, if $n\ge 2\cdot \frac{\eta(\pv)}{\epsilon}$, then we can derive that $p_n\le \exp(-\frac{\epsilon^2\cdot n}{8\cdot c^2})\le  \exp(-\frac{\epsilon\cdot \eta(\pv)}{4\cdot c^2})$ and
\begin{eqnarray*}
\frac{p_{n+1}}{p_n} &=& \exp\left(-\frac{(\epsilon\cdot (n+1)-\eta(\pv))^2}{2\cdot (n+1)\cdot c^2} + \frac{(\epsilon\cdot n-\eta(\pv))^2}{2\cdot n\cdot c^2}\right)\\
&=& \exp\left(\frac{ -\epsilon^2\cdot n\cdot (n+1)+\eta^2(\pv)}{2\cdot n\cdot (n+1)\cdot c^2} \right)\\
&\le& \exp\left(\frac{ -\frac{3}{4}\cdot \epsilon^2\cdot n\cdot (n+1)}{2\cdot n\cdot (n+1)\cdot c^2} \right)\\
&\le & \exp\left(-\frac{3\cdot \epsilon^2}{8\cdot c^2} \right)\enskip.
\end{eqnarray*}
Similarly, for $n\ge \frac{\eta(\pv')}{\epsilon}$ we have
\[
\probm(T_{\pv'}=n+1)\le \mathbb{P}(T_{\pv'}> n)\le \mathrm{exp}\left({-\frac{(\epsilon\cdot n-\eta(\pv'))^2}{2\cdot n\cdot c^2}}\right)=:p'_n,
\]
and for $n \ge 2\cdot \frac{\eta(\pv')}{\epsilon}$, we have that $\frac{p'_{n+1}}{p'_n}\le\exp\left(-\frac{3\cdot \epsilon^2}{8\cdot c^2} \right)$ and $p'_n\le \exp(-\frac{\epsilon\cdot \eta(\pv')}{4\cdot c^2})$.
Since $|\eta(\pv')-\eta(\pv)|\le M\cdot \delta\le 2\cdot M\cdot \rho$,
we obtain that for $n\ge 2\cdot\frac{\eta(\pv)}{\epsilon}+4\cdot \frac{M\cdot\rho}{\epsilon}$,
the values of $p_n$'s and $p'_n$'s decrease exponentially with the factor $\exp\left(-\frac{3\cdot \epsilon^2}{8\cdot c^2} \right)$ and are no greater than $\exp(-\frac{\epsilon\cdot \eta(\pv^*)}{8\cdot c^2})$ (as $\eta(\pv),\eta(\pv')\ge\frac{\eta(\pv^*)}{2}$).
By summing up all $n$'s greater than $N:=\lfloor 2\cdot\frac{\eta(\pv^*)}{\epsilon}+4\cdot \frac{M\cdot\rho}{\epsilon}\rfloor+1=\lfloor 4\cdot\frac{\eta(\pv^*)}{\epsilon}\rfloor+1$, we obtain that
\begin{eqnarray*}
|\expv_{\pv}(Z_{\pv})-\expv_{\pv'}(Z_{\pv'})| &\le & \sum_{n=1}^{\infty} (\probm(T_{\pv}=n)+\probm(T_{\pv'}=n))\cdot (A'\cdot L^n\cdot \delta+B')\\
&=& \sum_{n=1}^{N} (\probm(T_\pv=n)+\probm(T_{\pv'}=n))\cdot (A'\cdot L^n\cdot \delta+B')\\
& & {}+ \sum_{n=N+1}^{\infty} (\probm(T_\pv=n)+\probm(T_{\pv'}=n))\cdot (A'\cdot L^n\cdot \delta+B')\\
& \le & 2\cdot (A'\cdot L^N\cdot \delta+B')\\
& & \quad{}+\left[\sum_{n=N+1}^{\infty} (\probm(T_\pv=n)+\probm(T_{\pv'}=n))\cdot(A'\cdot L^n\cdot \delta+B')\right]\\
& \le & 2\cdot (A'\cdot L^N\cdot \delta+B')\\
& & \quad{}+\left[\sum_{n=N+1}^{\infty} (p_{n-1}+p'_{n-1})\cdot(A'\cdot L^n\cdot \delta+B')\right]\\
& \le & 2\cdot (A'\cdot L^N\cdot \delta+B')\\
& & \quad{}+\left[\sum_{n=1}^{\infty} (p_{N}+p'_{N})\cdot \exp\left(-\frac{3\cdot \epsilon^2}{8\cdot c^2}\cdot n \right)   \cdot (A'\cdot L^{n+N}\cdot \delta+B')\right]\\
&\le & A\cdot\delta+B
\end{eqnarray*}
where we have
\[
A:= 2\cdot  A'\cdot L^N+2\cdot A'\cdot L^N\cdot \exp\left(-\frac{\epsilon\cdot \eta(\pv^*)}{8\cdot c^2}\right)\cdot\sum_{m=1}^\infty \left(L\cdot \exp\left(-\frac{3\cdot \epsilon^2}{8\cdot c^2}\right)\right)^{m}
\]
\[
B:= 2\cdot B'+ 2\cdot B'\cdot \exp\left(-\frac{\epsilon\cdot \eta(\pv^*)}{8\cdot c^2}\right)\cdot \sum_{m=1}^\infty \exp\left(-\frac{3\cdot \epsilon^2}{8\cdot c^2}\cdot m\right)\enskip.
\]
It follows that the loop $Q$ is expected affine-sensitive over the neighbourhood $U_{\Phi,\dist}(\pv^*,\rho)$.
\end{proof}

\section{Proofs for Section~\ref{sect:seqcom}}\label{app:seqcom}

\noindent{\bf Theorem~\ref{thm:seqcom}.}
Consider a non-expansive simple while loop $Q$ with  bounded-update
and an RSM-map with RSM-continuity, and a general program $Q'$ that has expected  affine-sensitivity
over a subset $U$ of input program valuations with threshold $\theta$ in (\ref{eq:expaffsen}).
If $\outp(Q)\subseteq U$ and assuming integrability in (\ref{eq:expaffsen}), then the sequential composition $Q;Q'$ is expected affine-sensitive
over the satisfaction set of the loop guard of $Q$ with threshold $\theta$. 
\begin{proof}
The proof is basically an extension to the one for Theorem~\ref{thm:affine}.
Suppose that $Q$ is in the form ~(\ref{eq:swl}) with the bound $d$
from Definition~\ref{def:bu1} and an RSM-map $\eta$ with the parameters $\epsilon, K$ from Definition~\ref{def:RSMmaps1} that has RSM-continuity with a constant $M$ from Definition~\ref{def:LipconRSM1}.
Suppose a general probabilistic program $Q'$ to be expected affine-sensitive over $U$ with the coefficients $A_{Q'},B_{Q'}$ in Definition~\ref{def:RSMmaps1}.
Consider two input program valuations $\pv,\pv'\in\Sat{\Phi}$ such that $\delta := \dist(\pv,\pv')\le \theta$. Let $\omega=\{\pv_n\}_{n\in\mathbb{N}}$, $\omega'=\{\pv'_n\}_{n\in \mathbb{N}}$ be any two runs under $Q$ that start from respectively $\pv,\pv'$ and follow the \emph{same} sampled values for each loop iteration of the loop $P$.
Consider for a step $n$ the event $\min\{T_{\pv}, T_{\pv'}\}\ge n$ holds (i.e., both the executions do not terminate before the $n$th step), where $T_{\pv}, T_{\pv'}$ are the random variables for the number of loop iterations of $Q$ when starting from $\pv,\pv'$, respectively. We have the following cases as in the proof for Theorem~\ref{thm:affine}:
\begin{compactitem}
\item[{\em Case 1.}]
	Both $\pv_n$ and $\pv'_n$ violate the loop guard $\Phi$, i.e., $\pv_n,\pv'_n\models \neg\Phi$.
	This case describes that the loop $Q$ terminates exactly after the $n$th iteration of the loop for both the executions. From the non-expansiveness, we obtain directly that
	$\dist(\pv_n,\pv'_n)\leq \delta\le \theta$. Then from $\pv_n,\pv'_n\in U$ and by the expected affine-sensitivity from $Q'$, we obtain that $|\expv_{\pv_n}(Z^{Q'}_{\pv_n})-\expv_{\pv'_n}(Z^{Q'}_{\pv'_n})|\le A_{Q'}\cdot \delta+B_{Q'}$, where $Z^{Q'}_{\pv_n},Z^{Q'}_{\pv'_n}$ are the random variables representing the value of $z$ after the execution of $Q'$ starting from the input program valuations $\pv_n,\pv'_n$.

\item[{\em Case 2.}] Exactly one of $\pv_n, \pv'_n$ violates the loop guard $\Phi$. W.l.o.g., we can assume that $\pv_n\not\models\Phi$ and $\pv'_n\models \Phi$. From the expected affine-sensitivity of $Q$, we obtain that
    \[
    |\expv_{\pv_n}(Z^{Q'}_{\pv_n})-\expv_{\pv''}(Z^{Q'}_{\pv''})|\le A_{Q'}\cdot \dist(\pv_n, \pv'')+B_{Q'}
    \]
where $\pv''$ is the random program valuation after the execution of $Q$ from $\pv'_n$.
    By the triangle inequality and the bounded update condition, we have that
\begin{eqnarray*}
\dist(\pv_n, \pv'')  \le  \dist(\pv_n, \pv'_n)+\dist(\pv'_n, \pv'') \le  \delta + T_{\pv'_n}\cdot d\enskip.\\
\end{eqnarray*}
Thus, we have that
\begin{eqnarray*}
|\expv_{\pv_n}(Z^{Q;Q'}_{\pv_n})-\expv_{\pv'_n}(Z^{Q;Q'}_{\pv'_n})| 
&=& |\expv_{\pv_n}(Z^{Q'}_{\pv_n})-\expv_{\pv'_n}(\expv_{\pv''}(Z^{Q'}_{\pv''}))| \\
&=& |\expv_{\pv'_n}(\expv_{\pv_n}(Z^{Q'}_{\pv_n})-\expv_{\pv''}(Z^{Q'}_{\pv''}))| \\
&\le& \expv_{\pv'_n}(|\expv_{\pv_n}(Z^{Q'}_{\pv_n})-\expv_{\pv''}(Z^{Q'}_{\pv''})|) \\
&\le& \expv_{\pv'_n}(A_{Q'}\cdot \dist(\pv_n, \pv'')+B_{Q'})\\
&\le& \expv_{\pv'_n}(A_{Q'}\cdot (\delta + T_{\pv'_n}\cdot d)+B_{Q'})\\
&=& A_{Q'}\cdot \delta + A_{Q'}\cdot \expv_{\pv'_n} (T_{\pv'_n})\cdot d+B_{Q'}\enskip.\\
\end{eqnarray*}
As shown in the second case of the proof of Theorem~\ref{thm:affine}, we have that
\[
\expv_{\pv'_n} (T_{\pv'_n})\le \frac{\eta(\pv'_n)-K}{\epsilon}\le \dfrac{M\cdot  \delta-K}{\epsilon}\enskip.
\]
Hence we have that
\begin{eqnarray*}
|\expv_{\pv_n}(Z^{Q;Q'}_{\pv_n})-\expv_{\pv'_n}(Z^{Q;Q'}_{\pv'_n})| &\le&  A_{Q'}\cdot \left(1+\frac{d\cdot M}{\epsilon}\right)\cdot \delta - A_{Q'}\cdot \frac{d\cdot K}{\epsilon}+B_{Q'} \\
&=:&A\cdot \delta + B
\end{eqnarray*}
where $A:=A_{Q'}\cdot \left(1+\frac{d\cdot M}{\epsilon}\right)$ and $B:=-A_{Q'}\cdot \frac{d\cdot K}{\epsilon}+B_{Q'}$.
\item[{\em Case 3.}] Neither $\pv_n$ nor $\pv'_n$ violates the loop guard $\Phi$. In this case, the loop $Q$ will continue from both $\pv_n$ and $\pv'_n$.
Then in the next iteration, the same analysis can be carried out for the next program valuations $\pv_{n+1},\pv'_{n+1}$, and so forth.
\end{compactitem}
From the termination property ensured by RSM-maps (Theorem~\ref{thm:rsmmaps1}),
the probability that the third case happens infinitely often equals zero.
Thus, the sensitivity analysis eventually reduces to the first two cases.
From the first two cases, the difference contributed to the total expected sensitivity $|\expv_{\pv}(Z_{\pv})-\expv_{\pv'}(Z_{\pv'})|$ when one of the runs terminates at a step $n$ is at most
\[
\probm(T_\pv=n\vee T_{\pv'}=n)\cdot (A\cdot   \delta+ B)\enskip.
\]
Then by a summation for all $n$, we derive the desired result that
$|\expv_{\pv}(Z_{\pv})-\expv_{\pv'}(Z_{\pv'})| \le  2\cdot A\cdot\delta+ 2\cdot B$
where $A,B$ are given as above.
\end{proof}

To prove Theorem~\ref{thm:seqcomlinear}, we need the following lemma.

\begin{lemma}\label{lemm:auxseqcomlinear}
Consider a sequential composition $Q=Q_1;\dots ;Q_n$ of non-expansive simple while loops that satisfies the condition (\ddag).
Let $\pv$ be a random program valuation that satisfies the loop guard of $Q_1$ a.s, and $Y$ be a non-negative random variable such that for all program variables $z$, $|\pv[z]|\le Y$ a.s. Then there exists a linear function $f$ determined by $\eta_1,\dots,\eta_n$ such that $\expv_\pv(|Z_Q|)\le f(Y)$ for all program variables $z$.
\end{lemma}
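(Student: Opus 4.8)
The plan is a downward induction over the loops $Q_n, Q_{n-1}, \dots, Q_1$, using at each stage the single-loop bound on the expected number of iterations that an RSM-map provides (Theorem~\ref{thm:rsmmaps1}), together with the Markov property of sequential composition to thread the estimate through the chain. Throughout I write $\|\mathbf{u}\|_\infty := \max_{z\in\pvars}|\mathbf{u}[z]|$, and I use that the metric $\dist$ is comparable to the max norm via (\ref{metric}).

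First I would establish a per-loop estimate. Fix $1\le i\le n$ and write the linear RSM-map of $Q_i$ as $\eta_i(\mathbf{u})=\mathbf{a}_i^{\mathrm T}\mathbf{u}+b_i$ (this is exactly where the linearity in condition (\ddag) is used), let $d_i$ be its bounded-update constant and $\epsilon_i,K_i$ its RSM-parameters. For a deterministic input $\mathbf{u}$, let $T_i$ be the number of iterations of $Q_i$ and $\mathbf{W}$ its output valuation. The bounded-update condition (B2), the triangle inequality for $\|\cdot\|_\infty$, and (\ref{metric}) give (coordinate-wise, exactly as in the proof of Theorem~\ref{thm:affine}) $\|\mathbf{W}-\mathbf{u}\|_\infty\le \tfrac{d_i}{D_1}\,T_i$, hence $\|\mathbf{W}\|_\infty\le \|\mathbf{u}\|_\infty+\tfrac{d_i}{D_1}\,T_i$. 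Taking $\expv_{\mathbf{u}}$, using $\expv_{\mathbf{u}}(T_i)\le \tfrac{\eta_i(\mathbf{u})-K_i}{\epsilon_i}$ (Theorem~\ref{thm:rsmmaps1}; trivially true with $T_i=0$ if $\mathbf{u}\not\models\Phi_i$) and $\eta_i(\mathbf{u})\le \|\mathbf{a}_i\|_1\|\mathbf{u}\|_\infty+|b_i|$, I would obtain constants $\gamma_i,\delta_i\ge 0$, determined by $\eta_i$ and $d_i,\epsilon_i,K_i,D_1$, with
\[
\expv_{\mathbf{u}}\!\bigl(\|\mathbf{W}\|_\infty\bigr)\ \le\ (1+\gamma_i)\,\|\mathbf{u}\|_\infty+\delta_i .
\]
As a by-product this shows $\|\mathbf{W}\|_\infty$ is integrable under $\probm_{\mathbf{u}}$, so all the expectations below are finite.

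Next, for $\mathbf{u}$ and $i$ I would set $h_i(\mathbf{u}):=\expv_{\mathbf{u}}\!\bigl(\|\mathbf{W}_{Q_i;\dots;Q_n}\|_\infty\bigr)$, where $\mathbf{W}_{Q_i;\dots;Q_n}$ is the output valuation of the suffix program $Q_i;\dots;Q_n$, and prove by downward induction on $i$ that there are constants $a_i,b_i\ge 0$, determined by $\eta_i,\dots,\eta_n$ and the associated bounded-update/RSM-parameters, such that $h_i(\mathbf{u})\le a_i\|\mathbf{u}\|_\infty+b_i$ whenever $\mathbf{u}\models\Phi_i$. The base case $i=n$ is the displayed estimate with $a_n=1+\gamma_n$, $b_n=\delta_n$. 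For the step, run $Q_i$ from $\mathbf{u}\models\Phi_i$; by the side condition $\outp(Q_i)\subseteq\Sat{\Phi_{i+1}}$ in (\ddag), the random output $\mathbf{W}$ of $Q_i$ satisfies $\Phi_{i+1}$ a.s., so the Markov property of the operational semantics gives $h_i(\mathbf{u})=\expv_{\mathbf{u}}\!\bigl(h_{i+1}(\mathbf{W})\bigr)$; applying the induction hypothesis and then the per-loop estimate for $Q_i$ yields
\[
h_i(\mathbf{u})\ \le\ a_{i+1}\,\expv_{\mathbf{u}}\!\bigl(\|\mathbf{W}\|_\infty\bigr)+b_{i+1}\ \le\ a_{i+1}(1+\gamma_i)\,\|\mathbf{u}\|_\infty+\bigl(a_{i+1}\delta_i+b_{i+1}\bigr),
\]
so I set $a_i:=a_{i+1}(1+\gamma_i)$ and $b_i:=a_{i+1}\delta_i+b_{i+1}$; in particular $a_1=\prod_{j=1}^n(1+\gamma_j)$ and $b_1$ is a constant.

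Finally I would pass to the random input: since $\pv\models\Phi_1$ a.s., the case $i=1$ applied pointwise in $\pv$ gives $\expv_{\pv}(\|\mathbf{W}_Q\|_\infty)=h_1(\pv)\le a_1\|\pv\|_\infty+b_1$ a.s.; and since $|\pv[z]|\le Y$ a.s.\ for every $z$ we have $\|\pv\|_\infty\le Y$ a.s., so $\expv_{\pv}(|Z_Q|)\le \expv_{\pv}(\|\mathbf{W}_Q\|_\infty)\le a_1 Y+b_1=:f(Y)$, a linear function as claimed. The main obstacle to watch is precisely the interaction with the \emph{random} input: an intermediate valuation after $Q_1,\dots,Q_i$ is only a.s.\ finite, not a.s.\ bounded by any fixed function of $Y$, so one cannot carry an almost-sure bound loop by loop; the fix is to propagate the \emph{conditional-expectation} bounds $h_i$ via the tower/Markov property, which is what the downward induction does, and to invoke linearity of the RSM-maps so that each $\eta_i(\mathbf{u})$ is controlled by $\|\mathbf{u}\|_\infty$ and the composition of the resulting affine maps stays affine.
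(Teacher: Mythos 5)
Your proposal is correct and follows essentially the same route as the paper's proof: peel off one loop at a time, bound its output norm by an affine function of its input norm using the bounded-update condition together with the expected-iteration bound $\expv_{\mathbf{u}}(T_i)\le\frac{\eta_i(\mathbf{u})-K_i}{\epsilon_i}$ and the linearity of $\eta_i$, and compose these affine bounds through the tower/Markov property. Your downward induction on the suffix $Q_i;\dots;Q_n$ is the same recursion as the paper's induction on $n$ with the decomposition $Q=Q_1;Q'$, just indexed in the opposite direction, and you make the constants explicit where the paper leaves them abstract.
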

\begin{proof}
We prove by induction on $n\ge 1$. We denote by $d$ the bound for bounded update for all loops in $Q$, and by $K_i,\epsilon_i$ the parameters for $\eta_i$.
We also denote by $T_1$ the random variale for the number of loop iterations of $Q_1$.
Moreover, we denote by $|\pv|$ the vector obtained by taking the absolute value of every component in $\pv$.

\noindent{\bf Base Step} $n=1$, i.e., $Q=Q_1$. Then by Theorem~\ref{thm:rsmmaps1},
we have that $\expv_\pv(|Z_Q|)\le |\pv[z]|+\expv_\pv(\frac{d}{D_1}\cdot T_1)\le |\pv[z]|+\frac{d}{D_1}\cdot\frac{\eta_1(\pv)-K_1}{\epsilon_1}$. In this case, we can choose $f$ from $\eta_1$ and obtain the desired result.

\noindent{\bf Inductive Step} $Q=Q_1;Q'$ where $Q'$ is a sequential composition of simple while loops that satisfy ($\ddag$).
By Theorem~\ref{thm:rsmmaps1}, we have that $\expv_\pv(T_1)\le \frac{\eta_1(\pv)-K_1}{\epsilon_1}$.
Denote by $\pv'$ the random program valuation after the execution of $Q_1$ from the input random program valuation $\pv$.
Then from the triangle inequality and the bounded-update condition, we have that $\dist(\pv',\pv)\le d\cdot T_1$.
By induction hypothesis, we have that $\expv_{\pv'}(|Z_Q|)\le f'(|\pv|+\frac{d}{D_1}\cdot T_1)$ where $f'$ is a linear function determined by linear RSM-maps from $Q'$.
Hence, we have that $\expv_\pv(|Z_Q|)=\expv_\pv(\expv_{\pv'}(|Z_{Q'}|))\le \expv_\pv(f'(|\pv|+\frac{d}{D_1}\cdot T_1))\le f(|\pv|)$, where the last inequality is obtained by finding a linear function $f$
resulting from the expansion of the linear terms in $\expv_\pv(f'(|\pv|+\frac{d}{D_1}\cdot T_1))$ and the fact that $\expv_\pv(T_1)\le \frac{\eta_1(\pv)-K_1}{\epsilon_1}$. As $|\pv|\le Y$, we obtain the desired result.
\end{proof}

\noindent{\bf Theorem~\ref{thm:seqcomlinear}.}
Consider a non-expansive simple while loop $Q$ with loop guard $\Phi$ that has (i) bounded-update,
(ii) a difference-bounded linear RSM-map with RSM-continuity, and (iii) the 
Lipschitz continuity in next-step termination.
Then for any sequential composition $Q'$ of simple while loops that (a) satisfies the condition (\ddag) (defined right before the theorem) and (b) has expected linear-sensitivity
over a subset $U$ of input program valuations,
if $\Sat{\Phi}\cup\outp(Q)\subseteq U$, then the sequential composition $Q;Q'$ is expected linear-sensitive
over the satisfaction set of the loop guard of $Q$. 
\begin{proof}
The proof is an extension of that for Theorem~\ref{thm:linear}.
Let the sensitivity coefficient of $Q'$ be $A_{Q'}$ and $\theta'$ and the bound for bounded-update of $Q$ be $d$. 
Choose any program variable $z$. Denote by $T,T'$ (resp. $\pv_n,\pv'_n$) the random variables for the number of loop iterations (resp. the valuation at the $n$-th step),
from the input program valuations $\pv,\pv'$ and in the execution of the loop $Q$,
respectively.
Also denote by $Z_{Q''}$ the random variable that represents the value of $z$ after the execution of a proabilistic program $Q''$.
For each natural number $n\ge 0$, define
\[
\delta_n(\pv,\pv'):=\expv_{\pv}( \expv_{\pv_{T\wedge n}}(Z_{Q'}))-\expv_{\pv'}(\expv_{\pv'_{T'\wedge n}}(Z'_{Q'}))\enskip.
\]
for program valuations $\pv,\pv'$, where the random variable $T\wedge n$ is defined as $\min\{T,n\}$ and
$T'\wedge n$ likewise. We also define $\delta(\pv,\pv'):=\expv_{\pv}( \expv_{\pv_{T}}(Z_{Q'}))-\expv_{\pv'}(\expv_{\pv'_{T'}}(Z'_{Q'}))$.
By Lemma~\ref{lemm:auxseqcomlinear} and the fact that $|\pv_{T\wedge n}[z']|\le \pv[z']+\frac{d}{D_1}\cdot (T\wedge n)\le  \pv[z']+\frac{d}{D_1}\cdot T=:Y$,
we have that  $\expv_{\pv_{T\wedge n}}(|Z_{Q'}|)\le f(Y)$ and the same holds for $\pv'_{T\wedge n}$ with another linear function $f'$.
By Theorem~\ref{thm:rsmmaps1}, we have $\expv_\pv(f(Y))<\infty$ and the same holds for $\expv_{\pv'}(f'(Y))<\infty$.
Then
by the Dominated Convergence Theorem, we have that $\lim\limits_{n\rightarrow \infty}\delta_n(\pv,\pv')=\delta(\pv,\pv')$.

Let $\eta$ be a difference-bounded RSM-continuous RSM-map for $Q$ with the parameters $c,\epsilon,M,K$ as specified in (A3), (A4), (B3).
By Lemma~\ref{lemm:pd}, we can obtain a probability value $p=\frac{\epsilon}{2\cdot c-\epsilon}$ such that the condition $(\dag)$ holds.
We also construct the regions $R_k$'s ($1\le k\le n^*$) and $R_\infty$ as in the paragraph below Lemma~\ref{lemm:pd}, and have the solution $A_k$'s and $A_\infty$ from Proposition~\ref{prop:Ak}, for which we choose $C:=\max\{L'\cdot C',D\}$ and $D:=A_{Q'}=:A_0$, where the definition of $C'$ will be given below.
For the sake of convenience, we also define that $R_0:=\{\pv\mid \pv\not\models \Phi\}$.
Below we prove by induction on $n\ge 0$ that
\begin{compactitem}
\item[(*)] for all $k\in\{1,\dots,n^*,\infty\}$ and all program valuations $\pv,\pv'\models \Phi$ (where $\Phi$ is the loop guard of $Q$), if $\dist(\pv,\pv')\le \min\{\frac{1}{M},\theta'\}=:\theta$, then we have $|\delta_n(\pv,\pv')|\le A_k\cdot \dist(\pv,\pv')$ when $\pv\in R_k$ for $1\le k\le n^*$, and $|\delta_n(\pv,\pv')|\le A_\infty \cdot \dist(\pv,\pv')$ when $\pv\in R_{n^*}$.
\end{compactitem}

\noindent{\bf Base Step} $n=0$. By the expected linear sensitivity of $Q'$, we have that for all program valuations $\pv\in R_k$, $|\delta_0(\pv,\pv')|=|\expv_{\pv}(Z_{Q'})-\expv_{\pv'}(Z'_{Q'})|\le A_{Q'}\cdot\dist(\pv,\pv')= A_0\cdot\dist(\pv,\pv') \le A_k\cdot \dist(\pv,\pv')$.

\noindent{\bf Inductive Step.} Suppose that the induction hypothesis (*) holds for $n$. We prove the case for $n+1$.
We first consider $1\le k\le n^*$ and program valuations $\pv,\pv'$ such that
$\pv,\pv'\models\Phi$, $\dist(\pv,\pv')\le\theta$ and $\pv\in R_k$.
From the integral expansion (Theorem~\ref{thm:intexp}), we have that
\[
\expv_\pv( \expv_{\pv_{T\wedge (n+1)}}(Z_{Q'}))=\int \left[\mathbf{1}_{\updf(\loc,\pv,\sv)\not\models \Phi}\cdot \expv_{\updf(\loc,\pv,\sv)}(Z_{Q'}) + \mathbf{1}_{\updf(\loc,\pv,\sv)\models \Phi}\cdot \expv_{\updf(\loc,\pv,\sv)}( \expv_{\updf(\loc,\pv,\sv)_{T\wedge n}}(Z_{Q'}))\right]\,\mathrm{d}\loc \,\mathrm{d}\sv
\]
and similarly,
\[
\expv_{\pv'}( \expv_{\pv'_{T'\wedge (n+1)}}(Z'_{Q'}))=\int \left[\mathbf{1}_{\updf(\loc,\pv',\sv)\not\models \Phi}\cdot \expv_{\updf(\loc,\pv',\sv)}(Z'_{Q'}) + \mathbf{1}_{\updf(\loc,\pv',\sv)\models \Phi}\cdot \expv_{\updf(\loc,\pv',\sv)}( \expv_{\updf(\loc,\pv',\sv)_{T'\wedge n}}(Z'_{Q'}))\right]\,\mathrm{d}\loc \,\mathrm{d}\sv
\]
From (\dag), we have that with probability at least $p=\frac{\epsilon}{2\cdot c-\epsilon}$, it happens that $\eta(\updf(\loc,\pv,\sv))-\eta(\pv)\le -\frac{1}{2}\cdot \epsilon$.
It follows that with probability at least $p$, $\updf(\loc,\pv,\sv)\in \bigcup_{m=0}^{k-1} R_{m}$.
Note that
\begin{eqnarray*}
& & \expv_{\pv}( \expv_{\pv_{T\wedge (n+1)}}(Z_{Q'}))-\expv_{\pv'}(\expv_{\pv'_{T'\wedge (n+1)}}(Z'_{Q'})) \\
&=& \int  \left[\mathbf{1}_{\updf(\loc,\pv,\sv)\not\models\Phi\wedge \updf(\loc,\pv',\sv)\not\models\Phi}\cdot (\expv_{\updf(\loc,\pv,\sv)}(Z_{Q'})-\expv_{\updf(\loc,\pv',\sv)}(Z'_{Q'}))\right]\,\mathrm{d}\loc\,\mathrm{d}\sv  \\
& &~~{}+\int \left[\mathbf{1}_{\updf(\loc,\pv,\sv)\not\models\Phi\wedge \updf(\loc,\pv',\sv)\models\Phi}\cdot (\expv_{\updf(\loc,\pv,\sv)}(Z_{Q'})-\expv_{\updf(\loc,\pv',\sv)}( \expv_{\updf(\loc,\pv',\sv)_{T'\wedge n}}(Z'_{Q'})))\right]\,\mathrm{d}\loc\,\mathrm{d}\sv\\
& &~~{}+\int \left[\mathbf{1}_{\updf(\loc,\pv,\sv)\models\Phi\wedge \updf(\loc,\pv',\sv)\not\models\Phi}\cdot (\expv_{\updf(\loc,\pv,\sv)}(\expv_{\updf(\loc,\pv,\sv)_{T\wedge n}}(Z_{Q'}))-\expv_{\updf(\loc,\pv',\sv)}(Z'_{Q'}))\right]\,\mathrm{d}\loc\,\mathrm{d}\sv\\
& &~~{}+\int \left[\mathbf{1}_{\updf(\loc,\pv,\sv)\models\Phi\wedge \updf(\loc,\pv',\sv)\models\Phi}\cdot (\expv_{\updf(\loc,\pv,\sv)}(\expv_{\updf(\loc,\pv,\sv)_{T\wedge n}}(Z_{Q'}))-\expv_{\updf(\loc,\pv',\sv)}( \expv_{\updf(\loc,\pv',\sv)_{T'\wedge n}}(Z'_{Q'})))\right]\,\mathrm{d}\loc\,\mathrm{d}\sv\\
\end{eqnarray*}
where the first integral corresponds to the case that the executions of $Q$ from $\pv,\pv'$ both terminate after one loop iteration,
the second and third integrals correspond to the case that one execution terminates but the other does not,
and the last integral correspond to the case that both the executions do not terminate.
In the first integral, we have from the non-expansiveness that
\[
|\expv_{\updf(\loc,\pv,\sv)}(Z_{Q'})-\expv_{\updf(\loc,\pv',\sv)}(Z'_{Q'})|\le A_{Q'}\cdot \dist(\updf(\loc,\pv,\sv),\updf(\loc,\pv',\sv))\le A_{Q'}\cdot\dist(\pv,\pv')\enskip.
\]
In the second case, w.l.o.g. we can assume that $\updf(\loc,\pv,\sv)\not\models\Phi$ and $\updf(\loc,\pv',\sv)\models\Phi$.
From the expected linear-sensitivity of $Q'$, we obtain that
    \[
    |\expv_{\updf(\loc,\pv,\sv)}(Z_{Q'})-\expv_{\updf(\loc,\pv',\sv)_{T'\wedge n}}(Z'_{Q'})|\le A_{Q'}\cdot \dist(\updf(\loc,\pv,\sv), \updf(\loc,\pv',\sv)_{T'\wedge n})\enskip.
    \]
    By the triangle inequality and the bounded update condition, we have that
\begin{eqnarray*}
\dist(\updf(\loc,\pv,\sv), \updf(\loc,\pv',\sv)_{T'\wedge n}) & \le & \dist(\updf(\loc,\pv,\sv), \updf(\loc,\pv',\sv))+\dist(\updf(\loc,\pv',\sv),\updf(\loc,\pv',\sv)_{T'\wedge n}) \\
&\le & \dist(\pv,\pv') + (T_{\updf(\loc,\pv',\sv)}\wedge n)\cdot d \\
&\le & \dist(\pv,\pv') + T_{\updf(\loc,\pv',\sv)}\cdot d \enskip.\\
\end{eqnarray*}
Thus, we have that
\begin{eqnarray*}
|\expv_{\updf(\loc,\pv,\sv)}(Z_{Q'})-\expv_{\updf(\loc,\pv',\sv)}( \expv_{\updf(\loc,\pv',\sv)_{T'\wedge n}}(Z'_{Q'}))| 
&=& |\expv_{\updf(\loc,\pv',\sv)}(\expv_{\updf(\loc,\pv,\sv)}(Z_{Q'})- \expv_{\updf(\loc,\pv',\sv)_{T'\wedge n}}(Z'_{Q'}))| \\
&\le & \expv_{\updf(\loc,\pv',\sv)}(|\expv_{\updf(\loc,\pv,\sv)}(Z_{Q'})- \expv_{\updf(\loc,\pv',\sv)_{T'\wedge n}}(Z'_{Q'})|)  \\
&\le& \expv_{\updf(\loc,\pv',\sv)}(A_{Q'}\cdot (\dist(\pv,\pv') + T_{\updf(\loc,\pv',\sv)}\cdot d)) \\
&=& A_{Q'}\cdot \dist(\pv,\pv') + A_{Q'}\cdot \expv_{\updf(\loc,\pv',\sv)} (T_{\updf(\loc,\pv',\sv)})\cdot d\\
&\le& A_{Q'}\cdot \dist(\pv,\pv') + A_{Q'}\cdot d\cdot \frac{\eta(\updf(\loc,\pv',\sv))-K}{\epsilon}\\
&\le& A_{Q'}\cdot \dist(\pv,\pv') + A_{Q'}\cdot d\cdot \frac{M\cdot\dist(\pv,\pv')-K}{\epsilon}\\
&\le& A_{Q'}\cdot \theta + A_{Q'}\cdot d\cdot \frac{M\cdot\theta-K}{\epsilon}=: C'\\
\end{eqnarray*}
Furthermore, we have
\begin{eqnarray*}
& &\int \mathbf{1}_{\updf(\loc,\pv,\sv)\models\Phi\wedge \updf(\loc,\pv',\sv)\models\Phi}\cdot \alpha(\loc,\sv)\,\mathrm{d}\loc \,\mathrm{d}\sv\\
&=&\int \mathbf{1}_{\updf(\loc,\pv,\sv)\models\Phi\wedge \updf(\loc,\pv',\sv)\models\Phi\wedge \eta(\updf(\loc,\pv,\sv))-\eta(\pv)\le-\frac{1}{2}\cdot\epsilon}\cdot  \alpha(\loc,\sv)\,\mathrm{d}\loc\,\mathrm{d}\sv\\
& &~~{}+\int \mathbf{1}_{\updf(\loc,\pv,\sv)\models\Phi\wedge \updf(\loc,\pv',\sv)\models\Phi\wedge \eta(\updf(\loc,\pv,\sv))-\eta(\pv)>-\frac{1}{2}\cdot \epsilon}\cdot \alpha(\loc,\sv)\,\mathrm{d}\loc\,\mathrm{d}\sv\enskip.\\
\end{eqnarray*}
where $\alpha=\expv_{\updf(\loc,\pv,\sv)}(\expv_{\updf(\loc,\pv,\sv)_{T\wedge n}}(Z_{Q'}))-\expv_{\updf(\loc,\pv',\sv)}( \expv_{\updf(\loc,\pv',\sv)_{T'\wedge n}}(Z'_{Q'}))$. Denote
\begin{compactitem}
\item
$q_1:=\probm_{\loc,\sv}(\updf(\loc,\pv,\sv)\not\models\Phi\wedge \updf(\loc,\pv',\sv)\not\models\Phi)$;
\item
$q_2:=\probm_{\loc,\sv}((\updf(\loc,\pv,\sv)\not\models\Phi\wedge \updf(\loc,\pv',\sv)\models\Phi)\vee (\updf(\loc,\pv,\sv)\models\Phi\wedge \updf(\loc,\pv',\sv)\not\models\Phi))$;
\item
$p':=\probm_{\loc,\sv}(\updf(\loc,\pv,\sv)\models\Phi\wedge \updf(\loc,\pv',\sv)\models\Phi\wedge\eta(\updf(\loc,\pv,\sv))\le \eta(\pv)-\frac{\epsilon}{2})$;
\item
$\overline{p}:=\probm_{\loc,\sv}(\eta(\updf(\loc,\pv,\sv))\le \eta(\pv)-\frac{\epsilon}{2})$.
\end{compactitem}
Then $q_1+q_2+p'\ge\overline{p}\ge p$.
From the 
Lipschitz continuity in next-step termination, 
we have that $q_2\le L'\cdot \dist(\pv,\pv')$.
Then from $A_0\le A_1\le\dots\le A_k\le A_\infty$ and the induction hypothesis,
we have
\begin{eqnarray*}
|\delta_{n+1}(\pv,\pv')| &\le& (p'\cdot A_{k-1}+ q_1\cdot A_{Q'}  + (1-(q_1+q_2+p'))\cdot A_\infty)\cdot \dist(\pv,\pv') + q_2\cdot C'\\
&\le& (p'\cdot A_{k-1}+ q_1\cdot A_{Q'} + L'\cdot C' + (1-(q_1+q_2+p'))\cdot A_\infty)\cdot \dist(\pv,\pv') \\
&\le & ((q_1+q_2+p')\cdot A_{k-1}+ C + (1-(q_1+q_2+p'))\cdot A_\infty)\cdot \dist(\pv,\pv')\\
&\le & (p\cdot A_{k-1}+ C + (1-p)\cdot A_\infty)\cdot \dist(\pv,\pv')\\
&=& A_k\cdot \dist(\pv,\pv')\enskip.
\end{eqnarray*}
Then we consider the case $\pv\in R_{\infty}$.
In this case, since $\dist(\pv,\pv')\le \theta\le \frac{1}{M}$ and $\eta(\pv)>c+1$, we have from the RSM-continuity that $\eta(\updf(\loc,\pv',\sv)),\eta(\updf(\loc,\pv',\sv))>0$).
It follows that both $\updf(\loc,\pv,\sv),\updf(\loc,\pv',\sv)$ satisfy the loop guard.
Hence, we have from the induction hypothesis that $|\delta_n(\pv,\pv')| \le  A_\infty\cdot \dist(\pv,\pv')$.
Thus, the induction step is proved. By taking the limit $n\rightarrow\infty$, we obtain that the whole loop $Q$ is expected linear-sensitive in each $R_k$.
By taking the maximum constant $A_\infty$ for the global expected linear-sensitivity, we obtain the desired result.
\end{proof}

\section{Details for Experimental results}\label{app:running}

We consider examples and their variants from the literature~\cite{DBLP:conf/cav/ChatterjeeFG16,DBLP:conf/ijcai/ChatterjeeFGO18,DBLP:journals/toplas/ChatterjeeFNH18,DBLP:conf/pldi/NgoC018}.
Below we show all the experimental examples. 

\lstset{language=prog}
\lstset{tabsize=3}
\newsavebox{\onerobot}
\begin{lrbox}{\onerobot}
	\begin{lstlisting}[mathescape]
	 while $x\le 1000$ do
	    if prob(0.6) then
           $x:=x+1$
        else
           $x:=x-1$
        fi
	 od
	\end{lstlisting}
\end{lrbox}


\lstset{language=prog}
\lstset{tabsize=3}
\newsavebox{\oneDrobot}
\begin{lrbox}{\oneDrobot}
	\begin{lstlisting}[mathescape]
     $r\sim unif(1,3)$;
	 while $x\le 1000$ do
	    if prob(0.6) then
           $x:=x+r$
        else
           $x:=x-r$
        fi
	 od
	\end{lstlisting}
\end{lrbox}

\begin{figure}
\begin{minipage}{0.4\textwidth}
\usebox{\onerobot}
\caption{rdwalk}
\label{fig:running1}
\end{minipage}
\quad\quad\quad\quad\begin{minipage}{0.4\textwidth}
\usebox{\oneDrobot}
\caption{A variant of rdwalk}
\label{fig:running2}
\end{minipage}
\end{figure}

\lstset{language=prog}
\lstset{tabsize=3}
\newsavebox{\prdwalk}
\begin{lrbox}{\prdwalk}
	\begin{lstlisting}[mathescape]
     $r_1\sim unif(0,2),r_2\sim unif(0,5)$;
	 while $x\le 1000$ do
	    if prob(0.5) then
           $x:=x+r_1$
        else
           $x:=x+r_2$
        fi
	 od
	\end{lstlisting}
\end{lrbox}


\lstset{language=prog}
\lstset{tabsize=3}
\newsavebox{\prdwalkk}
\begin{lrbox}{\prdwalkk}
	\begin{lstlisting}[mathescape]
	 while $x\le 1000$ do
	    if prob(0.5) then
           $x:=x+2$
        else
           $x:=x+5$
        fi
	 od
	\end{lstlisting}
\end{lrbox}

\begin{figure}
\begin{minipage}{0.4\textwidth}
\usebox{\prdwalk}
	\caption{prdwalk}
	\label{fig:running4}
\end{minipage}
\quad\quad\quad\quad\quad\begin{minipage}{0.4\textwidth}
	\usebox{\prdwalkk}
	\caption{A Variant of prdwalk}
	\label{fig:running18}
\end{minipage}
\end{figure}

\lstset{language=prog}
\lstset{tabsize=3}
\newsavebox{\prspeed}
\begin{lrbox}{\prspeed}
	\begin{lstlisting}[mathescape]
	 while $x\le 1000$ do
           if prob(0.75) then
              $x:=x+0$
           else
              if prob(2/3) then
                 $x:=x+2$
              else
                 $x:=x+3$
              fi
           fi
	 od
	\end{lstlisting}
\end{lrbox}


\lstset{language=prog}
\lstset{tabsize=3}
\newsavebox{\prspeedd}
\begin{lrbox}{\prspeedd}
	\begin{lstlisting}[mathescape]
     $r_1\sim unif(1,2),r_2\sim unif(2,3),$
     $r_3\sim unif(3,5)$
	 while $x\le 1000$ do
           if prob(0.75) then
              $x:=x+r_1$
           else
              if prob(2/3) then
                 $x:=x+r_2$
              else
                 $x:=x+r_3$
              fi
           fi
	 od
	\end{lstlisting}
\end{lrbox}

\begin{figure}
\begin{minipage}{0.4\textwidth}
\usebox{\prspeed}
	\caption{prspeed}
	\label{fig:running5}
\end{minipage}
\quad\quad\quad\quad\quad\quad\begin{minipage}{0.4\textwidth}
	\usebox{\prspeedd}
	\caption{A Variant of prspeed}
	\label{fig:running6}
\end{minipage}
\end{figure}

\lstset{language=prog}
\lstset{tabsize=3}
\newsavebox{\race}
\begin{lrbox}{\race}
	\begin{lstlisting}[mathescape]
     $r\sim unif(2,4)$;
	 while $h\le t$ do
        $t:=t+1$;
	    if prob(0.5) then
           $h:=h+r$
        else
           skip
        fi
	 od
	\end{lstlisting}
\end{lrbox}


\lstset{language=prog}
\lstset{tabsize=3}
\newsavebox{\racee}
\begin{lrbox}{\racee}
	\begin{lstlisting}[mathescape]
	 while $h\le t$ do
        $t:=t+1$;
	    if prob(0.5) then
           $h:=h+1$
        else
           if prob(0.6)
              $h:=h+4$
           else
              skip
           fi
        fi
	 od
	\end{lstlisting}
\end{lrbox}

\begin{figure}
\begin{minipage}{0.4\textwidth}
\usebox{\race}
	\caption{race}
	\label{fig:running7}
\end{minipage}
\quad\quad\quad\begin{minipage}{0.4\textwidth}
	\usebox{\racee}
	\caption{A Variant of race}
	\label{fig:running8}
\end{minipage}
\end{figure}

\lstset{language=prog}
\lstset{tabsize=3}
\newsavebox{\progbb}
\begin{lrbox}{\progbb}
	\begin{lstlisting}[mathescape]
     $r\sim unif(0,1)$
	 while $x\le 1000$ do
	   $x:=x+r$
	 od
	\end{lstlisting}
\end{lrbox}


\lstset{language=prog}
\lstset{tabsize=3}
\newsavebox{\pollutant}
\begin{lrbox}{\pollutant}
	\begin{lstlisting}[mathescape]
     $r_1\sim unif(2,4),r'_1\sim unif(1,3),$
     $r_2\sim unif(1,3),r'_2\sim unif(0,2)$;
	 while $n\ge 5$ do
	    if prob(0.6) then
           $n:=n-r_1$;
           $n:=n+r'_1$
        else
           $n:=n-r_2$;
           $n:=n+r'_2$
        fi
	 od
	\end{lstlisting}
\end{lrbox}

\begin{figure}
\begin{minipage}{0.4\textwidth}
\usebox{\progbb}
	\caption{A Simple Probabilistic While Loop}
	\label{fig:running9}
\end{minipage}
\quad\quad\quad\quad\quad\begin{minipage}{0.4\textwidth}
	\usebox{\pollutant}
	\caption{Pollutant Disposal}
	\label{fig:running3}
\end{minipage}
\end{figure}

\lstset{language=prog}
\lstset{tabsize=3}
\newsavebox{\adversarial}
\begin{lrbox}{\adversarial}
	\begin{lstlisting}[mathescape]
	 while $x\le y$ do
       if prob(0.5) then
          if prob(0.7) then
             $x:=x+3$
          else
             $y:=y+2$
          fi
       else
          if prob(0.7) then
             $x:=x+2$
          else
             $y:=y+1$
          fi
	   fi
	 od
	\end{lstlisting}
\end{lrbox}


\lstset{language=prog}
\lstset{tabsize=3}
\newsavebox{\adversariall}
\begin{lrbox}{\adversariall}
	\begin{lstlisting}[mathescape]
     $r_1,r_2\sim unif(2,4),r'_1,r'_2\sim unif(1,2)$
	 while $x\le y$ do
       if prob(0.5) then
          if prob(0.7) then
             $x:=x+r_1$
          else
             $y:=y+r'_1$
          fi
       else
          if prob(0.7) then
             $x:=x+r_2$
          else
             $y:=y+r'_2$
          fi
	   fi
	 od
	\end{lstlisting}
\end{lrbox}

\begin{figure}
\begin{minipage}{0.4\textwidth}
\usebox{\adversarial}
	\caption{Adversarial random walk in two dimensions}
	\label{fig:running10}
\end{minipage}
\quad\quad\quad\quad\quad\begin{minipage}{0.4\textwidth}
	\usebox{\adversariall}
	\caption{A Variant of adversarial random walk in two dimensions}
	\label{fig:running11}
\end{minipage}
\end{figure}

\lstset{language=prog}
\lstset{tabsize=3}
\newsavebox{\gambler}
\begin{lrbox}{\gambler}
	\begin{lstlisting}[mathescape]
     $r\sim unif(-1,1)$
	 while $1\le x \wedge x\le 10$ do
           $x:=x+r$
	 od
	\end{lstlisting}
\end{lrbox}


\lstset{language=prog}
\lstset{tabsize=3}
\newsavebox{\gamblerr}
\begin{lrbox}{\gamblerr}
	\begin{lstlisting}[mathescape]
	 while $1\le x \wedge x\le 10$ do
        if prob(0.5) then
           $x:=x+1$
        else
           $x:=x-1$
        fi
	 od
	\end{lstlisting}
\end{lrbox}

\begin{figure}
\begin{minipage}{0.4\textwidth}
\usebox{\gambler}
	\caption{Gambler's Ruin}
	\label{fig:running12}
\end{minipage}
\quad\quad\quad\begin{minipage}{0.4\textwidth}
	\usebox{\gamblerr}
	\caption{A Variant of Gambler's Ruin}
	\label{fig:running13}
\end{minipage}
\end{figure}

\lstset{language=prog}
\lstset{tabsize=3}
\newsavebox{\oneadversarial}
\begin{lrbox}{\oneadversarial}
	\begin{lstlisting}[mathescape]
     $r\sim unif(-1,1)$
	 while $x\ge 0$ do
           $x:=x+r$;
           if prob(0.5) then
              if prob(0.9) then
                 $x:=x-1$
              else
                 $x:=x+1$
              fi
           else
              $x:=x-1$
           fi
	 od
	\end{lstlisting}
\end{lrbox}


\lstset{language=prog}
\lstset{tabsize=3}
\newsavebox{\oneadversariall}
\begin{lrbox}{\oneadversariall}
	\begin{lstlisting}[mathescape]
	 while $x\ge 0$ do
           $x:=x+1$;
           if prob(0.5) then
              if prob(0.9) then
                 $x:=x-2$
              else
                 $x:=x+1$
              fi
           else
              $x:=x-1$
           fi
	 od
	\end{lstlisting}
\end{lrbox}

\begin{figure}
\begin{minipage}{0.4\textwidth}
\usebox{\oneadversarial}
	\caption{Adversarial random walk in one dimension}
	\label{fig:running14}
\end{minipage}
\quad\quad\quad\quad\quad\begin{minipage}{0.4\textwidth}
	\usebox{\oneadversariall}
	\caption{A Variant of Adversarial random walk in one dimension}
	\label{fig:running15}
\end{minipage}
\end{figure}

\lstset{language=prog}
\lstset{tabsize=3}
\newsavebox{\AmericanRoulette}
\begin{lrbox}{\AmericanRoulette}
\begin{lstlisting}[mathescape]
while $x\ge 1$ do
   if prob(1/304) then
      $x:=x+35$;$w:=w+35$
   else if prob(2/303) then
      $x:=x+17$;$w:=w+17$
   else if prob(3/301) then
      $x:=x+11$;$w:=w+11$
   else if prob(2/149) then
      $x:=x+8$;$w:=w+8$
   else if prob(5/294) then
      $x:=x+6$;$w:=w+6$
   else if prob(6/289) then
      $x:=x+5$;$w:=w+5$
   else if prob(12/283) then
      $x:=x+2$;$w:=w+2$
   else if prob(2/271) then
      $x:=x-0.5$
   else if prob(18/269) then
      $x:=x+1$;$w:=w+1$
   else if prob(2/251) then
      $x:=x-0.5$
   else
      $x:=x-1$
   fi fi fi fi fi fi fi fi fi fi
od
\end{lstlisting}
\end{lrbox}
%


\lstset{language=prog}
\lstset{tabsize=3}
\newsavebox{\AmericanRoulettee}
\begin{lrbox}{\AmericanRoulettee}
\begin{lstlisting}[mathescape]
$r_1\sim unif(30,35),r_2\sim unif(12,17),$
$r_3\sim unif(9,11),r_4\sim unif(7,8),$
$r_5\sim unif(5,6),r_6\sim unif(3,5)$
$r_7\sim unif(2,3),r'_7\sim unif(0.5,1),$
$r_8\sim unif(1,2),r'_8\sim unif(0.5,1),$
$r_9\sim unif(1,2)$
while $x\ge 1$ do
   if prob(1/304) then
      $x:=x+r_1$;$w:=w+35$
   else if prob(2/303) then
      $x:=x+r_2$;$w:=w+17$
   else if prob(3/301) then
      $x:=x+r_3$;$w:=w+11$
   else if prob(2/149) then
      $x:=x+r_4$;$w:=w+8$
   else if prob(5/294) then
      $x:=x+r_5$;$w:=w+6$
   else if prob(6/289) then
      $x:=x+r_6$;$w:=w+5$
   else if prob(12/283) then
      $x:=x+r_7$;$w:=w+2$
   else if prob(2/271) then
      $x:=x-r'_7$
   else if prob(18/269) then
      $x:=x+r_8$;$w:=w+1$
   else if prob(2/251) then
      $x:=x-r'_8$
   else
      $x:=x-r_9$
   fi fi fi fi fi fi fi fi fi fi
od
\end{lstlisting}
\end{lrbox}

\begin{figure}
\begin{minipage}{0.4\textwidth}
\usebox{\AmericanRoulette}
	\caption{American Roulette}
	\label{fig:running16}
\end{minipage}
\quad\quad\quad\quad\quad\quad\quad\begin{minipage}{0.4\textwidth}
	\usebox{\AmericanRoulettee}
	\caption{A Variant of American Roulette}
	\label{fig:running17}
\end{minipage}
\end{figure}

\end{document}